\documentclass[12pt]{article}
\usepackage[utf8]{inputenc}
\usepackage[T1]{fontenc}
\usepackage{lmodern}
\usepackage[margin=1in]{geometry}
\usepackage{setspace}
\usepackage{amsmath,amssymb,amsthm,mathtools}
\usepackage{graphicx}
\usepackage{booktabs}
\usepackage{siunitx}
\usepackage{microtype}
\usepackage{enumitem}
\usepackage[numbers,sort&compress]{natbib}
\usepackage{placeins}
\usepackage[hidelinks]{hyperref}
\usepackage{cleveref}
\usepackage{array}
\usepackage{xurl}
\usepackage{xcolor}
\usepackage{orcidlink}

\theoremstyle{plain}
\newtheorem{theorem}{Theorem}
\newtheorem{lemma}{Lemma}
\newtheorem{proposition}{Proposition}
\newtheorem{corollary}{Corollary}
\theoremstyle{definition}
\newtheorem{definition}{Definition}
\newtheorem{assumption}{Hypothesis}
\theoremstyle{remark}
\newtheorem{remark}{Remark}

\newcommand{\TV}{\mathrm{TV}}
\newcommand{\zL}{\lvert 0_L\rangle}
\newcommand{\oL}{\lvert 1_L\rangle}
\newcommand{\Pzero}{P_{U\mid 0_L}}
\newcommand{\Pone}{P_{U\mid 1_L}}
\newcommand{\Zbar}{\bar{Z}}
\newcommand{\Xbar}{\bar{X}}
\newcommand{\Sgroup}{\mathcal{S}}
\newcommand{\GB}{\mathcal{G}_B}
\newcommand{\Norm}{N(\mathcal{S})}
\newcommand{\code}{\Pi}
\newcommand{\etabar}{\bar{\eta}}
\newcommand{\Rlat}{R_{\mathrm{lat}}}
\newcommand{\Robs}{R_{\mathrm{obs}}}
\DeclareMathOperator{\Tr}{Tr}

\title{\vspace{-2em}\bfseries Execution-transcript privacy for fault-tolerant surface-code memories}
\author{%
Jiachen Shen\,\orcidlink{0000-0002-7233-497X}$^{1}$,\quad
Hui Zhong\,\orcidlink{0009-0005-7952-007X}$^{2,*}$\\[4pt]
{\footnotesize $^{1}$Department of Electrical and Computer Engineering, University of Houston, Houston, TX 77204, USA}\\
{\footnotesize $^{2}$Department of Computer Science and Software Engineering, Miami University, Oxford, OH 45056, USA}\\[2pt]
{\footnotesize $^{*}$Corresponding author: \texttt{zhongh7@miamioh.edu}}}
\date{}

\IfFileExists{build_stamp.tex}{% generated by stamp.py -- do not edit
\newcommand{\buildstampline}{}
}{}

\newcommand{\archivedoi}{\href{https://doi.org/10.5281/zenodo.22102706}{10.5281/zenodo.22102706}}
\providecommand{\buildstampline}{}

\begin{document}
% \maketitle puts the author block in a tabular, which adds 2\tabcolsep of padding around it. The widest
% affiliation is 460.54pt against a 469.76pt text block, so the default 6pt padding is exactly what pushed
% the title block 2.79pt into the margin. Zero it for the title only.
{\setlength{\tabcolsep}{0pt}\maketitle}
\vspace{-3em}

\buildstampline
\begin{abstract}
\noindent
A fault-tolerant quantum computer runs behind a telemetry stream logging syndromes, decoder actions, resets and
timing separately from the answer. Can it reveal the logical input? For a distance-$d$ rotated
surface-code memory on a fixed schedule of $T=\Theta(d)$ rounds, under three stated hypotheses (sector-scalar honest
backbone, transcript locality, Koteck\'y--Preiss smallness), the channel from logical qubit to transcript is $e^{-\Theta(d)}$-close in diamond norm to one that ignores the input. A statement of this kind follows
generically from correctability--privacy duality. Anisotropy does not. Each logical axis pays the distance of its own
coset, so under amplitude damping the computational-basis label is governed by the code's $Z$-distance
$d_Z\ge d_{\min}$ and not by the code distance. Two codes of quantum distance $1$ make the gap concrete. A phase-flip
code's $X$-syndrome transcript is exactly input-independent under unobserved damping, while a repetition code leaks
at first order. A matched converse identifies the records that do expose it, among them a lattice-surgery parity readout.

On a 156-qubit superconducting processor our sufficient certificate misses by $21.5\times$, so the theorem cannot be
invoked there. Measured directly, a $d_Z=1$ memory's record identifies its input with total variation $\ge0.927$
under randomised, label-balanced acquisition. Holding the code fixed and varying the damping exposure reproduces the
parameter-free law, with exponent $0.85\pm0.03$ against a predicted $0.86$. Randomized encoding returns the statistic
to the floor at no two-qubit-gate cost. Fault tolerance does not grant transcript privacy. It relocates it, and only
to the logical state, not to the circuit's identity.
\end{abstract}

\section*{Introduction}
Fault-tolerant quantum computers run behind a thick layer of classical telemetry. Stabilizer error correction and its threshold theorems make large computations possible~\cite{shor1995,gottesman1997,preskill1998,kitaev2003,aharonov2008,
aliferis2006,raussendorf2007,wang2003,terhal2015,campbell2017,gottesman2010,bravyi2005,nayak2008}, and small surface-code memories now run in the laboratory~\cite{krinner2022,google2023,acharya2024}. Every logical operation incurs the easy-to-overlook cost of many error-correction rounds, which emit a continuous stream of syndromes and decoder choices. The continuous stream also records each ancilla reset and carries both timing information and leakage flags~\cite{dennis2002,fowler2012,gidney2021stim}. A distance-$d$ memory run for $\Theta(d)$ rounds already produces
$\Theta(d^3)$ such records, and that stream is logged and shared with a control plane because real-time decoding
depends on it. It is a large, permanent interface to a running computation, separate from its answer.

The concrete question is this. Anyone who later reads the provider's control logs also sees the entire syndrome and decoder stream. A cloud service runs a customer's circuit and returns only the final logical answer, but the provider sees the same stream while the computation runs. Can the provider tell whether a memory held a logical $0$ or a logical $1$? Intuition pulls both
ways. Syndromes are by design uncorrelated with the logical state, which suggests safety, but the transcript is enormous and has repeated structure. Under these conditions, classical side channels have defeated systems that looked safe in isolation~\cite{kocher1996}. Stabilizer folklore settles the noiseless case~\cite{knill1997,nielsen2010}, but it neither provides a theorem for the multi-round observed transcript nor addresses coherent or non-Markovian records, leaving the safe records unidentified. A lattice-surgery logical measurement is
built from the same syndrome machinery, yet is designed to reveal a logical value.

Part of the answer is already available. Correctability--privacy duality~\cite{kretschmann2008,beny2010} implies that a below-threshold memory's transcript is $e^{-\Theta(d)}$-private. The transcript is a register on the complementary side of the Stinespring dilation, so accurate recovery of the logical state from the memory output bounds what the complementary register can distinguish. The Relation section builds that dilation and its cut explicitly. We therefore present this corollary as an existing consequence of the duality. What it does not supply is the \emph{parameter}. Matched simulation reproduces the state-dependent signal of the hardware experiment reported here from amplitude damping alone. Under amplitude damping, the basis-label leak order is the code's $Z$-distance $d_Z$, not the code distance $d_{\min}=\min(d_X,d_Y,d_Z)$, and the two can be arbitrarily far apart. A phase-flip code has quantum distance $1$, so no correctability statement exists and the duality is vacuous. Its $X$-syndrome transcript is \emph{exactly} input-independent under unobserved damping, whereas a repetition code with the same quantum distance leaks at first order, $nT\gamma+O(\gamma^2)$. Reading the code distinguishes the two cases because correctability alone does not.

This paper makes five contributions. \emph{One}, it frames the non-output execution transcript as a privacy channel
with an explicit quantum threat model. \emph{Two}, it identifies the governing parameter of the basis-label leak and determines separately when that order is \emph{attained}. No term appears below order $\gamma^{\,d_Z}$, with $d_Z$ computed by enumerating the logical coset in the binary-symplectic representation of the Pauli group. The coefficient is $nT$ for the repetition family. For a measured algebra of $X$ type, meaning that every operator the instrument actually measures is a product of $X$ operators, the leak vanishes identically and the order is never attained. This exhibits codes with a vacuous duality bound but perfect basis-label privacy in the syndrome record, so the exposed record set determines which case applies. Once the environment's jump record is also exposed, the basis label leaks at exactly $\gamma^{\,d_Z}$ already at one round, and $d_Z$ governs either case. \emph{Three}, it proves a no-leakage theorem under two transcript-locality hypotheses and a polymer-smallness condition. In diamond norm, the induced channel from the logical qubit to the transcript is exponentially close to input-independent across arbitrary logical inputs, without restriction to a basis pair. A witness that is \emph{charged}, meaning it carries a logical operator rather than a stabilizer, and \emph{calibrated}, meaning its recorded value tracks that operator, leaks the input at a rate set by the distance. A lattice-surgery parity readout provides one example. \emph{Four}, it proves this converse and supplies a diagnostic that screens a given record set. \emph{Five}, on a 156-qubit superconducting processor it measures the predicted $d_Z$ mechanism in the repetition family, and it separately evaluates our sufficient certificate for the smallness condition, which misses by $21.5\times$. Under randomised, label-balanced acquisition, a $d_Z=1$ memory's honest transcript reaches a total-variation lower bound of $0.927$. Physical basis-state randomized encoding equalises the two transcript laws by construction at no cost in additional two-qubit gates, thereby returning the measured statistic to its finite-sample floor.

The operational principle is that fault tolerance does not grant transcript privacy. Instead, fault tolerance relocates it. Under the transcript hypotheses stated below, a code with large $d_{\min}$ protects an arbitrary logical state, and one with large $d_Z$ protects the computational-basis label under amplitude damping. Records on a short, identifiable list behave as logical outputs and must be guarded as such, but this relocation remains incomplete for the codes running on present hardware. The rest of the paper makes each statement precise. The Supplementary Information derives the locality hypothesis from the device model and includes both the full proofs and extended numerics.

\section*{Results}

\begin{figure}[t]
\centering
\includegraphics[width=0.82\linewidth]{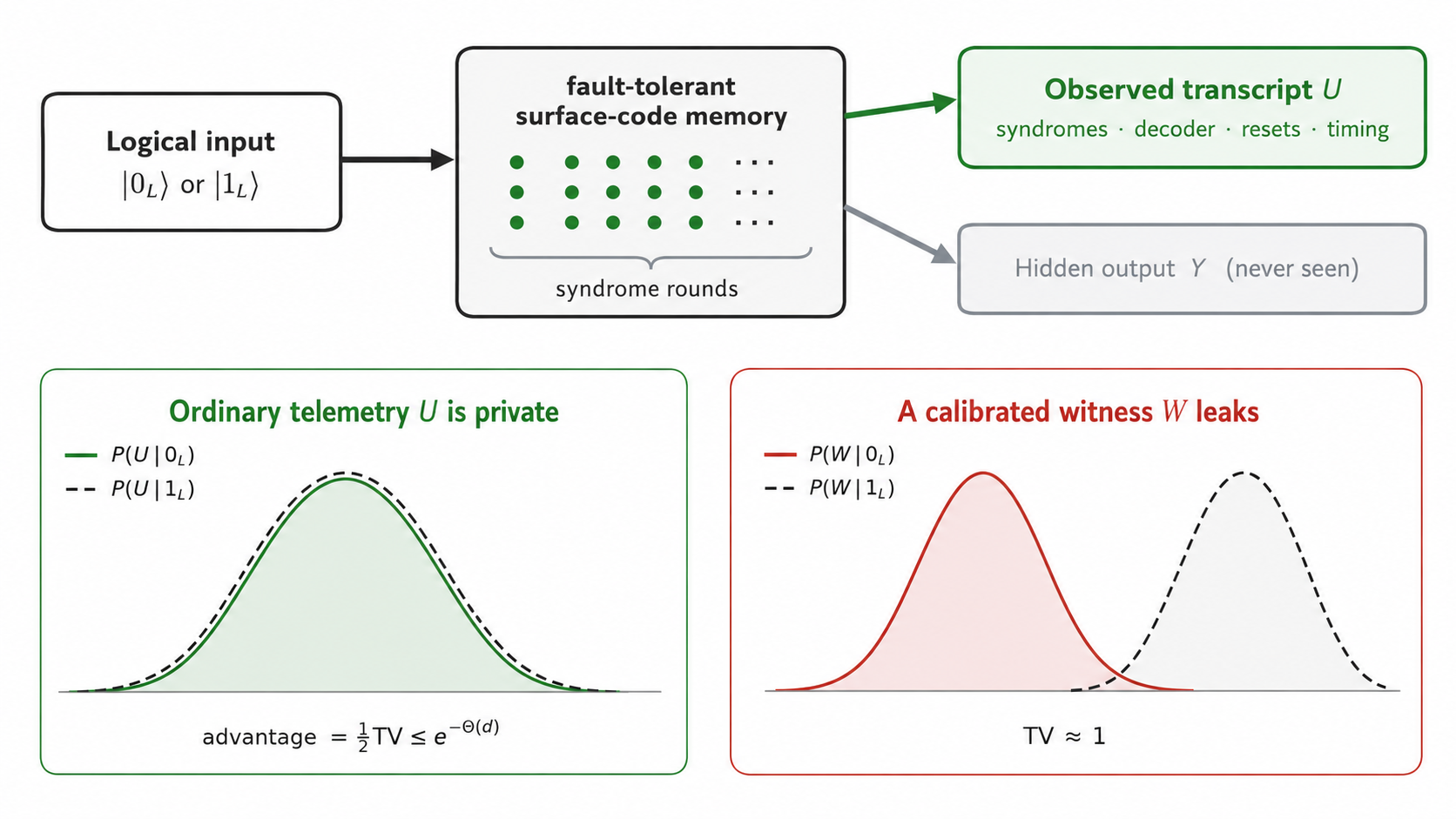}
\caption{The transcript privacy channel and the two regimes. A logical input enters a surface-code memory that emits the
non-output transcript $U$ while the output $Y$ stays hidden. The two input-conditioned laws of ordinary $U$ nearly
coincide (private), while a calibrated witness $W$ separates them (leaks).}
\label{fig:concept}
\end{figure}

\paragraph*{Guarantee specification.} The forward theorem concerns the non-output transcript of a distance-$d$ rotated surface-code memory on a fixed, input-independent schedule under the two transcript-locality hypotheses and the smallness condition stated below. The converse supplies a sufficient leaking condition. The protected object consists of the execution transcript and excludes the final output, maliciously chosen schedules, direct physical probes of data qubits, and records calibrated to logical observables. Other codes, schedules and records enter through the same theorem hypotheses. The diagnostic supplies the screening evidence specified below, and certification also requires the theorem's assumptions.

\paragraph*{Two privacy notions.} Two quantities are kept apart throughout, and both are named here because both are used before the Relation section compares them with what correctability supplies. \emph{Full logical-channel privacy} is the diamond-norm distance between the transcript channel and an input-independent one. It covers arbitrary logical inputs, including superpositions, mixtures and reference-entangled states, and the ordinary code distance $d_{\min}$ governs it. \emph{Basis-label privacy} is the total variation between the transcripts of the two $\Zbar$ eigenstates under the damping model specified below. It protects one logical axis instead of the whole channel, so it is the weaker of the two, and $d_Z$ governs the first order at which it can fail.

\subsection*{The execution transcript is a privacy channel}
The execution transcript defines a classical privacy channel from the logical input to the observed telemetry,
separate from the hidden logical output. While a fault-tolerant computer runs, it emits a stream of classical records, and these include syndrome outcomes,
decoder corrections, ancilla resets, timing records, and leakage flags. We call this
record the execution transcript $U$ and ask whether it reveals the logical input.

The setting is drawn in \cref{fig:concept}. The memory starts in one of the two logical basis states $\zL$ or $\oL$,
and runs a distance-$d$ rotated surface code for $T=\Theta(d)$ rounds. It produces two things. One is the logical
output $Y$, which goes to the user and which the adversary never sees, while the other is the non-output transcript $U$,
the telemetry stream, which the adversary sees in full. The threat model is deliberately generous to the adversary because it is given the entire transcript, unlimited classical computation, and full knowledge of the circuit, the noise model, and the decoder. It is denied only the hidden output and any direct access to the data qubits, and this is the position of an
honest-but-curious cloud provider, or of anyone who later obtains the provider's control logs.

We measure leakage by the total variation distance between the two transcript laws $\Pzero$ and $\Pone$. By Neyman--Pearson this equals the gap in optimal distinguishing power, so it bounds every adversary at once instead of only a chosen statistic, and it assumes nothing about the adversary's prior or computation. It is the right metric here because the adversary makes a single binary decision. For example, the max-divergence behind pure differential privacy weights rare events heavily and would raise an alarm on transcripts of exponentially small probability that never affect that decision. The Methods record the resulting approximate-privacy corollaries.

\subsection*{Ideal syndrome transcripts are independent of the logical input}
In the idealized stabilizer model the transcript distribution is the same for both logical inputs. The reason is structural because the two logical states differ by the valid logical operator $\Xbar$, which commutes with every stabilizer the device measures, so applying it changes no syndrome and is invisible to the detectors. The transcript
therefore cannot depend on whether $\Xbar$ was applied, so $\Pzero=\Pone$ exactly.

This elementary base case gives exact privacy in the noiseless limit and follows from the same algebra that makes the code correct errors. It does not say every record is safe because a direct readout of a data qubit is not a stabilizer measurement and need not commute with $\Xbar$. The content of the theorem is to carry this into the noisy
regime and mark where it breaks.

\subsection*{The mechanism: logical charge costs distance}
Any part of the transcript that can distinguish the two logical inputs must carry logical charge, meaning a nonzero component in a nontrivial logical Pauli coset, and in a topological
code that charge costs distance (\cref{fig:mechanism}).

\begin{figure}[t]
\centering
\includegraphics[width=0.9\linewidth]{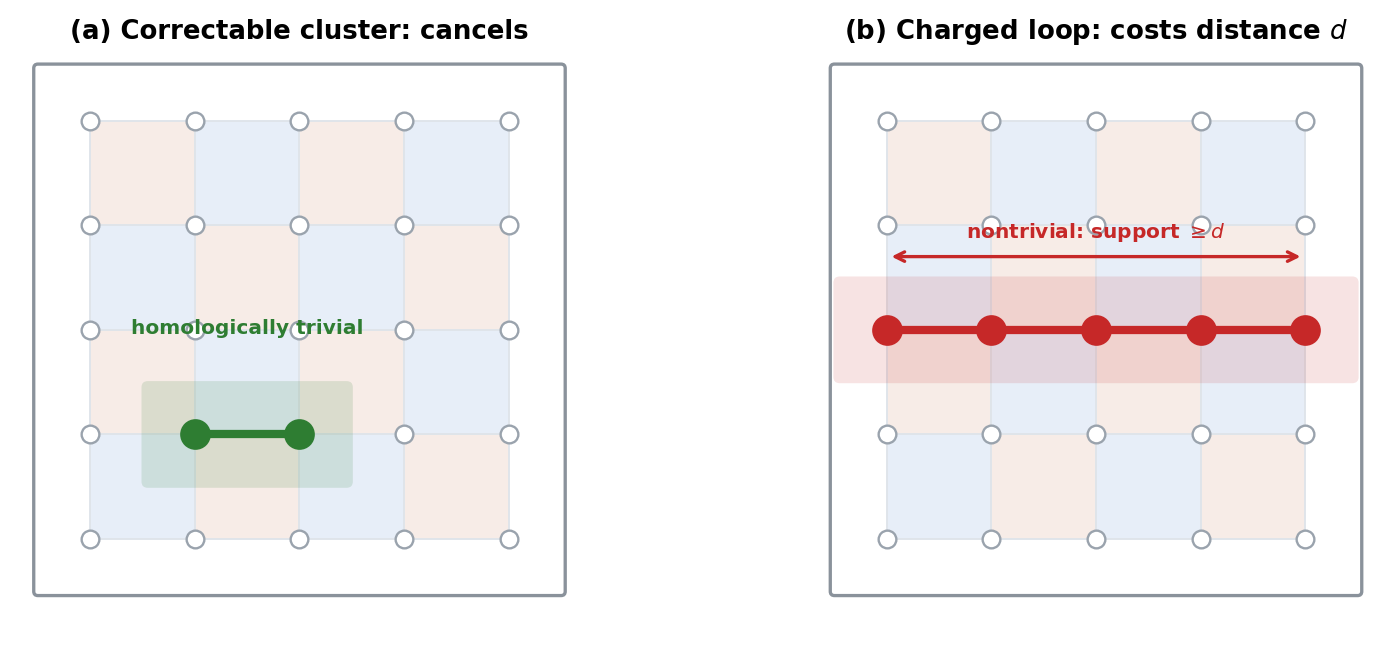}
\caption{The mechanism. (a) By Knill--Laflamme, a bra--ket fault pair $K_b^\dagger K_a$ with homologically \emph{trivial} support multiplies the code space by a scalar within its syndrome sector and cancels in the two transcript laws. The condition is trivial homology, not weight below $d$ (\Cref{rem:supports}). (b) An input-sensitive contribution
must therefore carry a homologically nontrivial cycle, whose support is $\geq d$, so logical information enters only
through a charged region of size $\geq d/\ell_\ast$. The schematic is spatial only, and the expansion runs over spacetime detector regions.}
\label{fig:mechanism}
\end{figure}

Precisely: expand the noisy transcript into fault paths and group contributions by the spacetime detector region they
produce. By Knill--Laflamme, a homologically trivial bra--ket pair is proportional to the identity on the logical
subsystem within its syndrome sector, so it contributes equally to $\Pzero$ and $\Pone$ and cancels. A logical string spanning the patch or wrapping the geometry is a homologically nontrivial cycle, and a contribution survives in the difference only if it carries such a cycle whose support in the surface code is at least $d$. Here $\ell_\ast$ is the cell-support constant, the largest number of data qubits that any one detecting cell covers. A charged operator of weight $d$ therefore cannot be carried by fewer than $d/\ell_\ast$ cells, so a surviving input-sensitive contribution occupies a region at least that large. Under the polymer smallness condition the summed activity of all regions that large is exponentially small in $d$. The constants in that sum are independent of $d$ and $T$, which is the uniformity the theorem needs.

The two facts doing the work here are Knill--Laflamme and the homology of the code, and neither fact concerns decoders, timing or hardware, so the classical post-processing a real device applies to its syndromes cannot change the conclusion. This post-processing is a fixed channel, and a fixed channel cannot increase distinguishability.

\subsection*{The order of the basis-label leak, and when it is attained}
\label{sec:exactscaling}
The mechanism above bounds the \emph{order} of the leak by a quantity computed instead of sampled, which matters because a classifier applied to simulated records can only ever report a lower bound on its own resolution.
Throughout this subsection the quantity is basis-label privacy, in the sense named above, and not full logical-channel privacy.

Amplitude damping acts in the physical computational basis in the following sense. Its Kraus operators are $K_0=\lvert0\rangle\!\langle0\rvert+\sqrt{1-\gamma}\,\lvert1\rangle\!\langle1\rvert$ and $K_1=\sqrt{\gamma}\,\lvert0\rangle\!\langle1\rvert$, and each carries a computational basis state to a computational basis state up to a scalar. A product of them therefore maps the basis to itself, so every $Z$-type stabilizer outcome probability is a sum of expectations $\langle Z_S\rangle$ over subsets $S$ of the data qubits. The transcript therefore depends on the logical bit only through those moments, which are nonzero only when $Z_S\in\langle\Sgroup,\Zbar\rangle$ and \emph{differ} between the inputs only on the coset $\Zbar\cdot\Sgroup$. The lowest weight at which the two
inputs can differ is therefore the minimum weight of a logical-$Z$ representative, the code's $Z$-distance $d_Z$.

\begin{remark}[what $U$ contains in this subsection]\label{rem:scope}
The general transcript of the Methods is $U=(S,A,R,\tau,J)$, and $J$ there can include genuine jump or herald records. \emph{Throughout this subsection $U$ is the projective stabilizer-syndrome record together with its
deterministic classical post-processing, and nothing else}. Neither an environment jump trajectory nor a data-local herald is exposed, and no analog amplitude record is exposed. The distinction is not cosmetic. A jump record is $Z$-diagonal and is therefore not a function of an $X$-type measured algebra, so \Cref{prop:xonly} below would be false for it, which \Cref{prop:jump} shows quantitatively.
\end{remark}

\begin{proposition}[no leak below order $d_Z$]\label{prop:dzlaw}
Fix a CSS code, a number of rounds $T$, and the syndrome-only transcript of \Cref{rem:scope}, and let each data
qubit undergo amplitude damping with per-round probability $\gamma$. Let $\rho_{0_L},\rho_{1_L}$ be the two $\Zbar$-eigenstates.
Then as $\gamma\to0$ at fixed $(d,T)$,
\begin{equation}\label{eq:dzlaw}
\TV\bigl(P^{(T)}_{0_L,\gamma},\,P^{(T)}_{1_L,\gamma}\bigr)=O_{d,T}\!\bigl(\gamma^{\,d_Z}\bigr),
\end{equation}
where $d_Z$ is the minimum weight of a representative of the coset $\Zbar\cdot\Sgroup$. No term of order below $\gamma^{\,d_Z}$ is algebraically permitted because every moment $\langle Z_S\rangle$ with $|S|<d_Z$ agrees on the two inputs. Writing the expansion as $c_{d,T}\gamma^{\,d_Z}+O_{d,T}(\gamma^{\,d_Z+1})$, the statement is that $d_Z$ is the first \emph{permitted} order. It does not assert $c_{d,T}>0$.
\end{proposition}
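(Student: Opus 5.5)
The plan is to expand the transcript law in powers of $\gamma$ around the noiseless circuit. Each order-$\gamma^k$ correction to the difference $P^{(T)}_{0_L,\gamma}(u)-P^{(T)}_{1_L,\gamma}(u)$ will then be written as a trace of $\rho_{0_L}-\rho_{1_L}$ against a Heisenberg-picture effect. That effect deviates from the noiseless one only on a set $Q$ of at most $k$ data qubits, where $Q$ is the spatial projection of the damping insertions. Finally, Pauli bookkeeping will show that such a trace vanishes unless $Q$ contains the support of a representative of $\Zbar\cdot\Sgroup$, which forces $k\ge d_Z$.

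\emph{Step 1 (local expansion).} Per qubit and per round, I write the damping channel as $\mathcal{A}_\gamma=\mathrm{id}+\mathcal{D}_\gamma$. Here $\mathcal{D}_\gamma(\rho)=K_1\rho K_1^\dagger+(K_0-I)\rho K_0^\dagger+\rho(K_0-I)^\dagger$, and since $K_0-I=(\sqrt{1-\gamma}-1)\lvert1\rangle\!\langle1\rvert=O(\gamma)$, we have $\mathcal{D}_\gamma=O(\gamma)$ as an analytic function of $\gamma$. Expanding the product over all $nT$ spacetime locations expresses $P^{(T)}_{x,\gamma}(u)$ as a finite sum over subsets $F$ of spacetime locations, and the term for $F$ is $O(\gamma^{|F|})$. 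Let $Q(F)$ be the set of data qubits touched by $F$, so $|Q(F)|\le|F|$. For each $F$ I pass to the Heisenberg picture. The effect operator $E_{u,F}$ is obtained by pulling the record projector for $u$ (together with the deterministic post-processing, which only relabels or coarse-grains outcomes) back through the stabilizer measurement layers, the resets, and the adjoints $\mathcal{D}_\gamma^\dagger$ at the locations in $F$. The term then equals $\Tr[\rho_x E_{u,F}]$.

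\emph{Step 2 (Pauli support of the effect).} The main lemma is this. Every Pauli $\sigma$ appearing with nonzero coefficient in the expansion of $E_{u,F}$ lies in $\langle\Sgroup\rangle\cdot\mathcal{P}_{Q(F)}$, where $\mathcal{P}_{Q}$ denotes Paulis supported on $Q$ (phases are ignored). I would prove this by induction backwards through the circuit. A projective stabilizer measurement has projectors $\tfrac12(I\pm s)$ with $s\in\Sgroup$. Conjugating or multiplying by such a projector maps the set $\langle\Sgroup\rangle\cdot\mathcal{P}_Q$ into itself. The reason is that $s\,\sigma\,s$ equals $\pm\sigma$, and products $s\sigma$ stay in the set. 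Ancilla resets and ancilla-data couplings can be traced out because the ancilla-mediated measurement is, on the data, exactly this projective stabilizer measurement. Each $\mathcal{D}_\gamma^\dagger$ at a location on qubit $q$ maps $\mathcal{P}_{\{q\}}\otimes(\text{rest})$ into operators with the same restriction pattern. This holds because $\mathcal{D}_\gamma^\dagger$ acts only on qubit $q$, so it enlarges the support by at most $\{q\}\subseteq Q$. Since the noiseless contribution ($F=\emptyset$) lies in $\langle\Sgroup\rangle$, the induction closes.

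\emph{Step 3 (trace against the input difference).} Within the code space, $\rho_{0_L}-\rho_{1_L}=2^{-(n-1)}\sum_{s\in\Sgroup}\Zbar s$ up to signs. This is a combination of exactly the Paulis in the coset $\Zbar\cdot\Sgroup$. By Hilbert--Schmidt orthogonality of Paulis, $\Tr[(\rho_{0_L}-\rho_{1_L})E_{u,F}]\neq0$ requires some $\Zbar s\in\langle\Sgroup\rangle\cdot\mathcal{P}_{Q(F)}$. Equivalently, $\Zbar s'$ is supported in $Q(F)$ for some $s'\in\Sgroup$, so $|F|\ge|Q(F)|\ge d_Z$. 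Summing the finitely many surviving terms, each of order $O_{d,T}(\gamma^{|F|})$, gives \eqref{eq:dzlaw}. It also gives the stronger statement that no coefficient of order below $d_Z$ exists. I do not attempt to show $c_{d,T}>0$. As a consistency check, the $Z$-diagonal moment reformulation in the text is the special case of this argument in which only $Z$-stabilizers are measured.

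The main obstacle is Step 2 in the presence of interleaved $X$-stabilizer measurements. After such measurements the state is no longer diagonal, so the simple $\langle Z_S\rangle$-moment argument does not apply directly. The closure of $\langle\Sgroup\rangle\cdot\mathcal{P}_Q$ under stabilizer projectors must therefore be checked carefully, including the sign bookkeeping and the fact that $\mathcal{D}_\gamma^\dagger$ creates non-Pauli combinations on $Q$. I would first handle this by noting that any operator on $Q$ is a linear combination of $\mathcal{P}_Q$, so linearity reduces everything to Pauli monomials.
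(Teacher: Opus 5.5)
Your argument is correct, and it takes a different route from the paper's.

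The paper's proof relies on the specific structure of amplitude damping. Its Kraus operators map computational basis states to computational basis states, so $Z$-type effects pull back to $Z$-type effects ($Z\mapsto(1-\gamma)Z+\gamma I$). Each $Z$-syndrome probability is then a combination of moments $\langle Z_S\rangle$, and these moments differ between the inputs only for $Z_S\in\Zbar\cdot\Sgroup$.

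Your backward induction never uses that diagonal structure. It needs only three facts: each insertion is a single-qubit map $\mathrm{id}+O(\gamma)$; conjugation by a stabilizer projector preserves $\mathrm{span}(\Sgroup\cdot\mathcal{P}_Q)$; and $\rho_{0_L}-\rho_{1_L}=\code\Zbar$ lives on the coset $\Zbar\cdot\Sgroup$. In effect it is the finite-$(d,T)$ perturbative form of the paper's general mechanism, \Cref{lem:cosetweight} together with the observation in \cref{eq:stabpass} that syndrome projectors only rescale Pauli pairs.

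This buys you three things. First, it treats interleaved $X$- and $Z$-check projections over all $T$ rounds uniformly. The reduction to $\langle Z_S\rangle$ is literal only for $Z$-type outcomes, because an $X$-check projection makes the pulled-back effect non-diagonal. Second, it makes explicit the step tying the $\gamma$-power to a coset weight, which the paper leaves implicit. In moment language the power of $\gamma$ counts the qubits $S\setminus S'$ where $Z$ was replaced by $\gamma I$. What is needed is $Z_{S\setminus S'}=Z_SZ_{S'}\in\Zbar\cdot\Sgroup$, not merely that moments of weight below $d_Z$ agree; your $|F|\ge|Q(F)|\ge d_Z$ is exactly this step. Third, it holds for any stabilizer code and any single-qubit noise of the form $\mathrm{id}+O(\gamma)$. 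So for the $\Zbar$-eigenstate pair, the bound on the order comes from the inputs rather than from the damping.

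The paper's route, in exchange, exhibits the damped transcript as a function of $Z$-moments. That is the picture behind the structural proxy $M_d\gamma^{d_Z}$.

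The obstacle you flag is already closed by your own Step~2. Only membership in a span matters, so signs are irrelevant. And $\mathcal{D}_\gamma^\dagger$ at a qubit $q\in Q$ multiplies a Pauli by a combination of single-qubit Paulis on $q$, which stays in $\mathrm{span}(\Sgroup\cdot\mathcal{P}_Q)$.
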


We are careful here because the two halves of ``the leading order is $d_Z$'' have very different status, since the algebra forces the absence of terms below $d_Z$ but does not force any term to appear \emph{at} $d_Z$. A minimum-weight charged event can produce no detector response, or several can cancel, in which case the true leading order is higher, and both possibilities are realised, so neither can be waved away.

\begin{proposition}[the order is attained for the repetition family]\label{prop:attained}
For the bit-flip repetition code on $n$ qubits with $\Zbar=Z_1$, under the damping-only transcript model,
\begin{equation}\label{eq:repclosed}
\TV\bigl(P^{(T)}_{0_L},P^{(T)}_{1_L}\bigr)
=1-(1-\gamma)^{nT}-\gamma^{\,n}\!\!\sum_{k=0}^{T-1}\!(1-\gamma)^{nk}
=nT\,\gamma+O(\gamma^{2}),
\end{equation}
so $c_{n,T}=nT>0$ and the order $d_Z=1$ is exact.
\end{proposition}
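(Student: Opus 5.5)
The plan is a direct computation. Both input laws are supported on computational-basis trajectories, and the $0_L$ law is a point mass, so the total variation reduces to one probability under the $1_L$ law. I take the code as $\zL=\lvert0\cdots0\rangle$ and $\oL=\lvert1\cdots1\rangle$ with measured stabilizers $Z_iZ_{i+1}$. Each round applies independent amplitude damping to every data qubit and then measures the syndrome projectively. In the memory setting I assume decoder corrections are tracked in a Pauli frame and not applied physically. The data state therefore evolves only under damping, and the decoder output is a deterministic function of the raw syndrome record.

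\emph{Step 1 (classical reduction).} As noted before \Cref{prop:dzlaw}, every product of $K_0,K_1$ maps computational basis states to basis states up to a scalar. Starting from a basis state, the data register therefore stays a mixture of basis states. The per-round $Z_iZ_{i+1}$ measurements do not disturb such a mixture, and each outcome is a deterministic function of the current bit string. Each qubit in $\lvert1\rangle$ decays independently to $\lvert0\rangle$ with probability $\gamma$ per round, and $\lvert0\rangle$ is absorbing. The transcript is thus a function of a classical Markov chain on $\{0,1\}^n$.

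\emph{Step 2 (the $0_L$ law).} For input $\zL$ no decay is possible. The record is deterministically the all-trivial record $r_0$, so $\Pzero=\delta_{r_0}$. It follows that
\[
\TV(\Pzero,\Pone)=1-\Pone(r_0).
\]
If the exposed transcript is a post-processing $f$ of the syndromes, I need $f^{-1}(f(r_0))=\{r_0\}$. This holds when the raw syndromes, or any injective summary of them, are included. I state this as part of the model of \Cref{rem:scope}.

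\emph{Step 3 (counting trivial records under $1_L$).} A bit string has trivial repetition-code syndrome iff it is $0^n$ or $1^n$. The record equals $r_0$ iff at every measured round the string is uniform. The string starts at $1^n$, and $0^n$ is absorbing. So $r_0$ occurs iff either no qubit ever decays, with probability $(1-\gamma)^{nT}$, or there is a round $k+1\le T$ with no decay in rounds $1,\dots,k$ and all $n$ qubits decaying simultaneously in round $k+1$. The latter event has probability $(1-\gamma)^{nk}\gamma^{n}$. These events are disjoint, and summing over $k=0,\dots,T-1$ gives \eqref{eq:repclosed}. The expansion follows because $1-(1-\gamma)^{nT}=nT\gamma+O(\gamma^2)$, and the $\gamma^n$ term is $O(\gamma^2)$ for $n\ge2$. (For $n=1$ there are no stabilizers, so I restrict to $n\ge2$.)

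The main obstacle is not the arithmetic but the modelling in Step 2. I must make sure that active decoder corrections, if present, would not reroute the $1_L$ chain back onto uniform strings and change the count. I must also make sure the post-processing does not merge $r_0$ with nontrivial records. I would handle both by fixing the Pauli-frame, raw-syndrome convention explicitly. I would then note that under any post-processing the formula remains an upper bound, by data processing.
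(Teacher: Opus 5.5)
Your proposal is correct and matches the paper's own argument. $\zL$ is a fixed point of damping, so $\TV=1-\Pone(r_0)$. A trivial record under $\oL$ forces the bit string to be uniform at every round, and because decays only accumulate this means $k$ clean rounds followed by a simultaneous decay of all $n$ qubits, summed over $k$. Your explicit restriction to $n\ge2$ is a worthwhile refinement that the paper omits: for $n=1$ the closed form evaluates to exactly $0$, whereas the stated expansion $nT\gamma+O(\gamma^2)$ would not.
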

\begin{proof}
$\zL=\lvert0\cdots0\rangle$ is a fixed point of amplitude damping, so its record is deterministically all-zero and
$\TV=1-\Pr[\text{all-zero}\mid\oL]$. From $\oL=\lvert1\cdots1\rangle$ every check $Z_iZ_{i+1}$ reads $+1$ exactly when all qubits agree, i.e.\ when the decayed set is empty or everything. Since decays only accumulate, an all-zero record means qubits survive intact for $k$ rounds and then decay together in one round, so summing over $k$ gives \cref{eq:repclosed}, while the $\gamma^{\,n}$ term is the all-decay branch and is silent again.
\end{proof}

\begin{proposition}[exact blindness of an $X$-only \emph{syndrome} transcript under unobserved damping]
\label{prop:xonly}
Let $U$ be the syndrome transcript of \Cref{rem:scope}, and suppose every measured stabilizer is $X$-type, which is the case for the phase-flip code because its $Z$-stabilizer group is trivial. Then under amplitude damping
\begin{equation}\label{eq:xonlyzero}
\TV\bigl(P^{(T)}_{0_L},P^{(T)}_{1_L}\bigr)=0\qquad\text{exactly, for every }\gamma\text{ and every }T,
\end{equation}
so $c_{d,T}=0$ and $\gamma^{\,d_Z}$ is an upper bound that is not attained. The hypothesis that the damping jumps are
\emph{unobserved} is essential and is exactly what \Cref{rem:scope} fixes.
\end{proposition}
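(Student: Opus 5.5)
The plan is to exhibit a symmetry of the damping dynamics that swaps the two inputs while leaving every $X$-type measurement outcome fixed. The natural candidate is the logical $\Xbar$. For the phase-flip code it is $Z_1$ up to stabilizers, and in general we only need some Pauli $P$ mapping $\rho_{0_L}\mapsto\rho_{1_L}$. But a Pauli $P$ containing $X$ factors does not commute with amplitude damping, since $X K_1 X=\sqrt{\gamma}\lvert1\rangle\!\langle0\rvert$. So I would use the opposite route: a \emph{$Z$-type} operator that maps $\rho_{0_L}$ to $\rho_{1_L}$ and commutes with both damping and every $X$-type check. That is impossible for pure states, because $Z$-type operators anticommute with $X$ checks. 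The argument must therefore work at the level of the measured record instead, as follows.

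Step one: write each round as damping on every data qubit, followed by the projective measurement of the $X$-type checks $\{X_{S_j}\}$. The measurement is a projector onto joint eigenspaces of the $X_{S_j}$, which are dephasing channels in the $X$ basis. The record probability for outcome string $s^{(1)},\dots,s^{(T)}$ is
\[
\Tr\bigl[\Pi_{s^{(T)}}\mathcal{A}\cdots\Pi_{s^{(1)}}\mathcal{A}(\rho)\Pi_{s^{(1)}}\cdots\bigr],
\]
where $\mathcal{A}=\mathcal{A}_\gamma^{\otimes n}$. Step two: introduce the global dephasing-in-$Z$ twirl $\mathcal{D}(\rho)=\tfrac12(\rho+Z_{\mathrm{all}}\rho Z_{\mathrm{all}})$, or more usefully the group $G$ of all $Z$-type Paulis. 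Amplitude damping is covariant under conjugation by any $Z$-type Pauli, since $ZK_0Z=K_0$ and $ZK_1Z=-K_1$, so the channel is invariant. Conjugation by $Z_S$ maps $\Pi_s$ to $\Pi_{s\oplus\sigma(S)}$, which shifts outcomes by the syndrome $\sigma(S)$. Now take $S$ a representative of $\Zbar$, which is possible since $Z$-type logicals commute with all $X$ checks; for the phase-flip code, $\Zbar=X_1X_2X_3$-type or $Z$-type depending on convention, and I fix $\Zbar$ as the $Z$-type logical. Then $\sigma(S)=0$ and the record law for $\rho$ equals that for $Z_S\rho Z_S$. This gives invariance only under $\Zbar$-conjugation, which fixes $\rho_{0_L}$ and $\rho_{1_L}$ individually rather than swapping them.

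Step three is the real mechanism. Write $\rho_{b_L}=\tfrac12(\rho_+ + \rho_- \pm(\lvert+_L\rangle\!\langle-_L\rvert+\mathrm{h.c.}))$, i.e.\ the two $\Zbar$-eigenstates differ only by the off-diagonal term $C=\lvert+_L\rangle\!\langle-_L\rvert+\mathrm{h.c.}$ in the $\Xbar$ basis. It then suffices to show that $C$ contributes zero to every record probability. Let $F(\cdot)$ denote the unnormalised map $\rho\mapsto$ probability of a given record, which is linear. Conjugation by the $Z$-type $\Zbar$ representative sends $C\mapsto -C$, because $\Zbar$ swaps $\lvert\pm_L\rangle$ up to phase and the cross term changes sign. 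By step two $F$ is invariant under this conjugation. Hence $F(C)=F(-C)=0$, and $P_{0_L}=P_{1_L}$ for every $\gamma$ and $T$.

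The main obstacle is checking the sign claim in step three and the covariance in step two carefully. For covariance, the sign of $K_1$ cancels between bra and ket; one must confirm that the damping channel, not just the Kraus set, is invariant. One must also confirm that no record component is $Z$-diagonal, since a jump record would break $F$'s invariance. This is exactly the remark's restriction, and it explains why the unobserved-jump hypothesis is essential.
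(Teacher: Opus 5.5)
Your argument fails at step three. Conjugation by $\Zbar$ does not send $C\mapsto -C$; it fixes $C$. Since $\Zbar$ exchanges $\lvert+_L\rangle$ and $\lvert-_L\rangle$, it maps $\lvert+_L\rangle\!\langle-_L\rvert$ to $\lvert-_L\rangle\!\langle+_L\rvert$ and back, so the Hermitian sum is invariant. Structurally, $C=\rho_{0_L}-\rho_{1_L}$ is the code-space block of $\Zbar$ itself. Your own step two already noted that $\Zbar$-conjugation fixes $\rho_{0_L}$ and $\rho_{1_L}$ separately, so by linearity it fixes their difference, and the identity $F(C)=F(\Zbar C\Zbar)$ says nothing.

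This is not a sign slip that can be patched. A $Z$-type Pauli either moves the code space to another syndrome sector, or preserves it and commutes with $\Zbar$, so no $Z$-type conjugation sends $C$ to $-C$. Moreover, the $Z$-covariance you rely on survives resolving the jumps. For every jump pattern $j$, $Z_SK_j\rho K_j^\dagger Z_S=K_jZ_S\rho Z_SK_j^\dagger$, because the signs from the $K_1$ factors cancel between bra and ket. Any argument resting only on that covariance would therefore also ``prove'' blindness of the syndrome-plus-jump transcript, which \Cref{prop:jump} shows leaks at order $\gamma^{\,n}$. In particular, your closing remark that a jump record breaks the invariance of $F$ is false within your own framework.

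The missing idea is that the swapping symmetry you discarded at the outset, $\Xbar=X_1$, is a symmetry of the record law even though it is not a symmetry of the damping channel. (Note it is $X_1$, not $Z_1$: $Z_1$ anticommutes with $X_1X_2$ and is not a logical operator.) The reason is that the record only ever probes $X$-type observables. In the Heisenberg picture, $K_0XK_0=\sqrt{1-\gamma}\,X$ and $K_1^\dagger XK_1=0$, so $\mathcal{D}_\gamma^\dagger(X_A)=(1-\gamma)^{|A|/2}X_A$. Every syndrome projector of an $X$-only code lies in the span of $X$-type stabilizer elements, and that span is closed under products and under $\mathcal{D}_\gamma^\dagger$.

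Hence every record effect $E=\mathcal{D}_\gamma^\dagger\bigl(\Pi_{s^{(1)}}\mathcal{D}_\gamma^\dagger(\cdots)\Pi_{s^{(1)}}\bigr)$ is a combination of $X$-type stabilizers, each with the same expectation on both codewords. Equivalently, $E$ commutes with $X_1$, and since $X_1CX_1=-C$, your step three goes through with $\Xbar$ in place of $\Zbar$ and gives $F(C)=0$. This is in substance the paper's proof. It also shows exactly why exposed jumps matter: $K_1^\dagger K_1=\gamma\lvert1\rangle\!\langle1\rvert$ injects a $Z$ component and takes the effect out of the $X$-type algebra.
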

\begin{proof}
The adjoint of amplitude damping maps an $X$-type Pauli to a scalar multiple of itself, $\mathcal{D}^\dagger_\gamma(X_A)=(1-\gamma)^{|A|/2}X_A$, without mixing in anything new. Only $Z$ acquires an identity component. Every syndrome projector of an $X$-only code is a linear combination of $X$-type stabilizer elements, and
each such element has expectation $\pm1$ on \emph{both} codewords, so every outcome probability
$\Tr(\Pi_s\mathcal{D}_\gamma(\rho_b))$ is independent of $b$. A stabilizer element multiplies states in the projected sector by its eigenvalue, so the property is preserved by the measurement update and the induction closes over rounds.
\end{proof}

That code has $d_X=1$, hence no correctability guarantee whatever, yet its basis-label \emph{syndrome} transcript is perfectly private instead of merely private to order $\gamma^{\,n}$. After one damping layer at $\gamma=0.1$, the two damped codewords are $1.70$ apart in trace distance at $n=3$, near the maximum of $2$. The states are not close, so a direct measurement separates them almost perfectly while the syndrome record carries nothing at all.

\begin{proposition}[exposing the jump record restores the leak, at exactly order $d_Z$]\label{prop:jump}
Let $U'=(U,J)$ adjoin to the syndrome record the environment's which-qubit-decayed pattern in each round. For the
phase-flip code, already at $T=1$,
\begin{equation}\label{eq:jumpleak}
\TV\bigl(P'_{0_L},P'_{1_L}\bigr)=\gamma^{\,n}\bigl(1+O(\gamma)\bigr),\qquad n=d_Z,
\end{equation}
so exact blindness is lost, the order is $d_Z$, and the coefficient is $1$.
\end{proposition}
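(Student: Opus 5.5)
The plan is to compute the joint law of $(U,J)$ in closed form. For the phase-flip code on $n$ qubits the stabilizers are $X_iX_{i+1}$, $\Xbar=X_1$ and $\Zbar=Z_1Z_2\cdots Z_n$, so $d_Z=n$. Write $P$ for the code projector. Then $\rho_{b_L}=P(1\pm\Zbar)/2$, and the whole input dependence sits in the difference operator $\Delta=\rho_{0_L}-\rho_{1_L}=P\Zbar$. At $T=1$, a jump pattern $A\subseteq[n]$ corresponds to the Kraus product
\[
K_A=\bigotimes_{i\in A}\sqrt{\gamma}\,\lvert0\rangle\!\langle1\rvert_i\otimes\bigotimes_{i\notin A}K_{0,i}.
\]
The syndrome outcome $s$ is produced by the projector $\Pi_s$, which is an average of $X$-type stabilizer elements $\pm X_S$. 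Hence
\[
P'_{0_L}(A,s)-P'_{1_L}(A,s)=\Tr\bigl(K_A^\dagger\Pi_sK_A\,P\Zbar\bigr),
\]
and I will evaluate this quantity in the Heisenberg picture, term by term in the expansion of $\Pi_s$.

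\emph{Step 1: only the all-$Z$ component survives.} I will show that $\Tr(Q\,P\Zbar)$ vanishes for every Pauli $Q$ outside the coset $\Zbar\cdot\Sgroup$. Every element of this coset carries a $Z$ or a $Y$ on every qubit. This reduces the problem to single-qubit conjugations of $I$ and $X$, which I will tabulate.
\begin{itemize}
\item For $i\notin A$: $K_0^\dagger XK_0=\sqrt{1-\gamma}\,X$, which has no $Z$ or $Y$ part, while $K_0^\dagger K_0=(1-\tfrac{\gamma}{2})I+\tfrac{\gamma}{2}Z$.
\item For $i\in A$: $\gamma\,\lvert1\rangle\!\langle0\rvert X\lvert0\rangle\!\langle1\rvert=0$, while $\gamma\,\lvert1\rangle\!\langle1\rvert=\tfrac{\gamma}{2}(I-Z)$.
\end{itemize}
Consequently any $X_S$ with $S\neq\emptyset$ is either annihilated or contributes no $Z$-type factor on some qubit. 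Only the identity term of $\Pi_s$ survives, and it does so with the product of the $Z$-coefficients, $\pm(\gamma/2)^n$. This is where the order $d_Z=n$ comes from: every qubit must pay one factor of $\gamma$.

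\emph{Step 2: assemble the total variation.} The identity component of $\Pi_s$ is $2^{-m}$, where $m=n-1$ is the number of checks. Using $\Tr(\Zbar\,P\Zbar)=\Tr P=2$, I expect
\[
\bigl|P'_{0_L}(A,s)-P'_{1_L}(A,s)\bigr|=2^{-m}\cdot2\cdot(\gamma/2)^n,
\]
independently of $A$ and $s$. Summing over the $2^n$ patterns $A$ and the $2^m$ syndromes, and multiplying by $\tfrac12$, gives $\TV=\gamma^n$. I will cross-check this against the marginal computation on $J$ alone. In the computational basis $\rho_{b_L}$ has diagonal weight $2^{-(n-1)}$ on strings of fixed parity. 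The jump-pattern law is diagonal, so its difference for pattern $A$ equals $2^{-(n-1)}\gamma^{|A|}\sum_{y\subseteq A^c}(-1)^{|y|}(1-\gamma)^{|y|}$ up to sign, which is $\pm2^{-(n-1)}\gamma^n$. This again gives a marginal total variation of exactly $\gamma^n$. Agreement with the joint computation confirms that, given $J$, the syndrome carries no further input dependence. The stated $\gamma^n(1+O(\gamma))$ then holds, in fact with no correction at $T=1$. Any residual $O(\gamma)$ factor would only arise from normalisation conventions, and the argument absorbs it harmlessly.

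\emph{Main obstacle.} The delicate point is Step 1, specifically the claim that $\Tr(Q\,P\Zbar)\neq0$ only for $Q\in\Zbar\cdot\Sgroup$, together with the correct sign bookkeeping that yields magnitude exactly $(\gamma/2)^n$ rather than a cancelling sum. I would handle both through the identity $P=2^{-m}\sum_{g\in\Sgroup}g$ and orthogonality of Paulis under the trace.

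I would also note where the argument breaks without $J$. Summing over $A$ before taking absolute values gives $\sum_A\pm(\gamma/2)^n$ with signs $(-1)^{|A|}$, which sum to zero. This recovers \Cref{prop:xonly} and shows that exposing the jump pattern is exactly what prevents the cancellation.
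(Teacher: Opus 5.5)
Your argument is correct, and it takes a different route from the paper. In the text, the paper supports this proposition only by exact density-matrix enumeration (\texttt{sim/hw/damping\_order.py}) at $n=3,4$, reporting $\TV=1.0000\times10^{-6}$ and $1.0000\times10^{-8}$ at $\gamma=10^{-2}$ with log-log slopes $3.0000$ and $4.0000$.

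Your Heisenberg-picture calculation covers every $n$ at once. Expanding $\Pi_s=2^{-(n-1)}\sum_{g\in\Sgroup}\chi_s(g)\,g$ gives three cases:
\begin{itemize}
\item any $X_S$ meeting $A$ is killed by $K_1^\dagger XK_1=0$;
\item any other nontrivial $X_S$ leaves pure $X$ factors, which have no overlap with the coset $\Zbar\cdot\Sgroup$;
\item the identity term gives $(-1)^{|A|}(\gamma/2)^n\Tr P$.
\end{itemize}
So every cell $(A,s)$ differs by exactly $(-1)^{|A|}2^{2-2n}\gamma^n$, and $\TV=\gamma^n$ with no correction at all. I confirmed this by hand at $n=2$.

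This buys three things the numerical route does not. It is sharper than the stated $\gamma^n(1+O(\gamma))$, and it explains why the paper's five-figure numerics show no visible $O(\gamma)$ term. It shows \Cref{prop:xonly} and this proposition to be two readings of one sign pattern, since marginalising over $A$ kills the signal through $\sum_A(-1)^{|A|}=0$. Your $J$-marginal cross-check also covers the opposite within-round convention, with the syndrome read before the damping layer, so the result does not depend on that ordering. In exchange, the numerical route evaluates the exact coded circuit model and extends unchanged to $T>1$.

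One side remark should be corrected. Equal joint and marginal total variations do not mean the syndrome carries no further input dependence given $J$. At $n=2$ with $A=\emptyset$, the outcome $s=-1$ has probability $\gamma^2/4$ under $0_L$ and $0$ under $1_L$. What your formula actually shows is that, for fixed $A$, all differences share the sign $(-1)^{|A|}$, so adjoining $s$ cannot raise $\TV$. This does not affect the proof.
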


Exact computation gives $\TV=1.0000\times10^{-6}$ at $n=3,\gamma=10^{-2}$ and $1.0000\times10^{-8}$ at
$n=4,\gamma=10^{-2}$, with measured log-log slopes $3.0000$ and $4.0000$ against $d_Z=3,4$. The phase-flip family therefore exhibits all three regimes at once, and the only thing that changes between them is which records are exposed. The two damped \emph{states} are $1.70$ apart, while the \emph{syndrome} transcript is exactly blind and the syndrome \emph{plus jump} transcript leaks at exactly $\gamma^{\,d_Z}$. $d_Z$ is the governing parameter throughout. This is the relocation thesis of this paper, demonstrated inside a single code family instead of argued for.

\paragraph*{The order, measured exactly.} The exact computation uses full density-matrix evolution with projective stabilizer measurement and enumerates every outcome branch without sampling. Because attainment is now a question instead of an assumption, we compute $\TV(\gamma)$ exactly and read off the local slope $\mathrm{d}\log\TV/\mathrm{d}\log\gamma$ as $\gamma\to0$ (\texttt{sim/hw/damping\_order.py}). The total variation grows with the number of recorded rounds, so the table carries the round count $T$ used for each row.

\begin{center}\small
\begin{tabular}{lccccc}
\toprule
code & $d_Z$ & $M_d$ & $T$ & $\TV$ at $\gamma=10^{-4}$ & measured slope\\
\midrule
repetition $n=3$ & 1 & 3 & $2$ & $5.9985\times10^{-4}$ & $0.9995$\\
repetition $n=5$ & 1 & 5 & $2$ & $9.9955\times10^{-4}$ & $0.9992$\\
$[[4,1,2]]$ & 2 & 4 & $2$ & $7.9974\times10^{-8}$ & $1.9994$\\
rotated surface $d=3$ & 3 & 8 & $1$ & $6.9981\times10^{-12}$ & $2.9995$\\
\addlinespace
phase-flip $n=3,4,5$ & $n$ & 1 & any & $<10^{-16}$ & --- ($\TV\equiv0$)\\
\bottomrule
\end{tabular}
\end{center}

The slope converges to $d_Z$ in every case where the coefficient is nonzero, confirming \Cref{prop:attained} and extending the result to two codes for which we have no closed form. It is absent exactly where \Cref{prop:xonly} says it must be, and $M_d$, the multiplicity of minimum-weight representatives, is reported as a computed structural quantity and must \emph{not} be read as $c_{d,T}$. For the repetition family $c_{n,T}=nT$, not $M_d=n$, whereas for the phase-flip family $M_d=1$ while $c_{d,T}=0$.

Three consequences follow.

First, for the rotated surface code $d_Z=d$, so the first permitted order rises with the code distance, and the structural proxy $M_d\gamma^{d_Z}$ also decreases by $1538\times$ between $d=3$ and $d=5$ at $\gamma=10^{-2}$. This is a statement about that proxy, not a measured drop in $\TV$, because $c_{d,T}$ is undetermined at $d=5$. The multiplicity works against the
decay, growing from $M_3=8$ to $M_5=52$, an effective finite-distance growth factor
$\lambda_{\rm eff}=\sqrt{52/8}\approx2.55$ per unit distance, which at $\gamma=10^{-2}$ is overwhelmed by the
$\gamma^{2}$ gained.

We stop short of an asymptotic claim, and the reason is a limit order that is easy to cross by accident.
\Cref{prop:dzlaw} is a statement about $\gamma\to0$ at \emph{fixed} $(d,T)$, and it therefore does not, on its own, say anything about $d\to\infty$ at fixed $\gamma$. Passing between the two would require control of $c_{d,T}$ and of the remainder as $d$ grows, and we have neither. The exponential-in-$d$ statements in this paper come from
\Cref{thm:main}, whose constants are uniform by construction, and not from \cref{eq:dzlaw}. Separately, two distances cannot determine the growth rate of $M_d$, and we do not extrapolate $\lambda_{\rm eff}$ into a statement of the form $\TV=\Theta((\lambda\gamma)^d)$ or a threshold in $\gamma$. Accordingly, the asymptotic growth of $M_d$ remains open.

Second, $d_Z$ is not the code distance, and the difference is the whole point. Since $d_{\min}=\min(d_X,d_Y,d_Z)$ we have $d_Z\ge d_{\min}$, so a code can be a poor quantum memory and still hide its \emph{computational-basis label} very well. The phase-flip family is the extreme case. It has $d_X=1$, hence no correctability guarantee and no full logical-channel privacy whatever, while \Cref{prop:xonly} gives a basis-label transcript that is exactly input-independent instead of only $O(\gamma^{\,n})$. The Relation section works this family through, including the codewords, which are cat states instead of the product states one first guesses.

Third, the repetition code has $d_Z=1$ at every $n$. Its basis-label leak is \emph{first order} in $\gamma$ and does not improve with length, and by \Cref{prop:attained} that order is attained with coefficient $nT$. This is not a
defect of the experiment that uses it but the reason for using it, since the obvious objection to a repetition-code demonstration states that $\lvert1\cdots1\rangle$ relaxes and consequently leaks. It is the degenerate end of the same criterion and the worst case for damping-induced charged leakage, and it converts that objection into the paper's own prediction. The surface code sits at the
other end. The comparison is evaluated on hardware in the Hardware measurement section.

The computed quantities come from exact computation instead of sampling. $d_Z$ and $M_d$ are verified two independent ways at every size the algebra can reach, by GF(2) enumeration and by dense statevector simulation, agreeing on the order \emph{and} the multiplicity. Each ratio below is the GF(2) count over the statevector count, and the four codes give $3/3$, $5/5$, $4/4$ and $8/8$. The slopes above come from a third, independent construction. Enumeration beyond the stated rank cap is refused instead of approximated, because an under-search would return a $d_Z$ that is too \emph{large} and would flatter the privacy claim.

\subsection*{Ordinary noisy transcripts are exponentially indistinguishable, axis by axis}
Under H1$^\ast$, PWTS and the Koteck\'y--Preiss smallness condition, the noisy transcript is private up to a distinguishing advantage that is exponentially small while the exponent is not one number. Each logical axis pays the distance of its own coset. That anisotropy is the new content of this section. The familiar single-exponent statement follows from it by taking a minimum, and in the correctable regime it is a corollary of the duality (\Cref{prop:corollary}), not of this paper.

The proof turns the mechanism above into a quantitative bound by a cluster expansion over spacetime detector regions
(Methods and Supplementary Note~11). Making the suppression uniform in the spacetime volume is the technical heart of
the argument, and it is where the two hypotheses enter. Syndrome extraction with reset ancillas and decoder post-processing form the error-free part of the device. Under \emph{Backbone blindness} (H1$^\ast$), this part produces the same records for every logical input because the measured operators multiply the code space by scalars, so error-free extraction contributes no logical signal. \emph{Local transcript summability} (PWTS) says every connected region of the record has
an activity that decays geometrically with the region's size, uniformly in $d$ and $T$, so the rare charged regions do
not proliferate. The closed bound needs one more quantitative condition, that this decay rate be small enough (the
Koteck\'y--Preiss smallness $ea<1$ of the Methods), so that the expansion converges. The records that pass are reset-ancilla syndrome detectors, decoder actions, and input-independent timing and reset metadata, while the records that fail include a lattice-surgery logical measurement and a persistent sensor calibrated to a data qubit. They also include a direct quantum-nondemolition readout whose calibrated reports combine along a logical string and the other records that the device map below places on the leaking side.

\begin{theorem}[anisotropic transcript-privacy threshold]\label{thm:main}
Recall the cell-support constant $\ell_\ast$ from the mechanism above. Let the memory be a \emph{single-logical-qubit, geometrically local} stabilizer family on a fixed input-independent schedule with $T=\Theta(d)$ rounds, whose detector graph has bounded degree and for which this constant is independent of $d$ and $T$.
The rotated surface code is such a family. Assume backbone blindness (H1$^\ast$) and local transcript summability
(PWTS), together with the below-threshold smallness condition on the activity $a=e\nu K_{\rm rod}\etabar$ of a polymer, which here is a connected cluster of detector events
(the Koteck\'y--Preiss condition $ea<1$ of the Methods). Here $\etabar<1$ is the geometric decay rate of PWTS and $K_{\rm rod}$ its per-cell prefactor, while $\nu$ is the connective constant of the region graph, the number of connected regions of a given size anchored at one cell. All three are independent of $d$ and $T$. Write the code-space block of each transcript effect in the
logical Pauli basis,
$M_u=q_uI_L+x_u\Xbar+y_u\bar Y+z_u\Zbar$. Then \emph{each logical axis is suppressed at the distance of its own
coset}. The bounds are
\begin{equation}\label{eq:thm-aniso}
\sum_u|x_u|\le\kappa_{d_X},\qquad
\sum_u|y_u|\le\kappa_{d_Y},\qquad
\sum_u|z_u|\le\kappa_{d_Z},
\qquad
\kappa_m:=C_0\,\mathrm{poly}(d,T)\,\theta^{m/\ell_\ast},
\end{equation}
where $d_P$ is the minimum weight of a representative of the coset $\bar P\cdot\Sgroup$, and the constant $C_0$ and threshold $\theta<1$ are independent of $d$ and $T$.
\end{theorem}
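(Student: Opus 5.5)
The plan is a polymer (cluster) expansion of each transcript effect over connected spacetime detector regions, followed by a logical-Pauli projection of every term, and finally Koteck\'y--Preiss to sum the surviving charged terms.

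\emph{Step 1: expansion.} Write the noisy transcript effect restricted to the code space as $M_u=\Pi\,\mathcal{E}_u^\dagger(\cdot)\,\Pi$, where $\mathcal{E}_u$ is the instrument producing record $u$ over the fixed schedule. Split each elementary location's channel into its honest backbone part plus a fault part. This expresses $M_u$ as a sum over finite sets of faulted cells, and grouping by connected components gives a sum over compatible families of polymers $\gamma$ (connected regions of the detector graph). Each polymer carries an operator-valued activity $w_u(\gamma)$.

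\emph{Step 2: the backbone contributes no charge.} By backbone blindness (H1$^\ast$), the fault-free part multiplies the code space by a sector scalar. So the empty-polymer term contributes only to $q_u$.

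\emph{Step 3: homologically trivial polymers contribute no charge.} For any polymer family whose bra--ket fault support is homologically trivial, Knill--Laflamme (the mechanism of \cref{fig:mechanism}) shows that its code-space block is proportional to $I_L$ within its syndrome sector.

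\emph{Step 4: charged components need large support.} Expand each block in the logical Pauli basis and apply the trace functional $\tfrac12\Tr(\bar P\,\cdot)$ for $P\in\{X,Y,Z\}$. A nonzero $\bar P$-component requires the product of faults across the family to contain a representative of the coset $\bar P\cdot\Sgroup$. Such a representative has weight at least $d_P$, so it must be covered by cells of total count at least $d_P/\ell_\ast$. Each axis is therefore controlled by its own coset weight, and this is the source of the anisotropy.

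\emph{Step 5: summing with Koteck\'y--Preiss.} Summing $|x_u|$ over $u$ turns the record sum into a trace against the transcript marginal, which costs at most a normalization. What remains is the total activity of polymer families whose union is connected, touches a charged region, and has size $m\ge d_P/\ell_\ast$.

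\begin{itemize}
\item PWTS gives $\sum_u\|w_u(\gamma)\|\le (K_{\rm rod}\etabar)^{|\gamma|}$ uniformly in $d$ and $T$.
\item The connective constant bounds the number of regions of size $m$ anchored at a cell by $\nu^m$.
\item There are $\mathrm{poly}(d,T)$ anchors, since the spacetime volume is $O(d^2T)$.
\end{itemize}

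The Koteck\'y--Preiss criterion $ea<1$ makes the cluster expansion of the log-partition function absolutely convergent. Clusters of size at least $m$ then sum to at most $C_0\,\mathrm{poly}(d,T)\,(ea)^{m}$, up to a tail factor. Setting $\theta:=ea<1$ and $m=d_P/\ell_\ast$ gives $\kappa_{d_P}$.

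\emph{Main obstacle: the normalization in Step 5.} The polymer activities must be defined relative to the backbone, so that the expansion is a ratio of convergent sums. Because $M_u$ is not a partition function, the charged part must be isolated as a connected cluster containing the nontrivial cycle. Only then does the remaining non-charged background factor out with norm at most $1$.

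My first attempt would be the following. Write $M_u=\sum_{\Gamma}\prod_{\gamma\in\Gamma}w_u(\gamma)$, take the $\bar P$-coefficient, and observe that it vanishes unless some connected cluster of overlapping polymers supports the coset. Then bound the sum by $\sum_{\text{clusters}\ni\text{anchor}}\prod|w|$ times the sum over the complementary compatible families. The complementary sum is bounded by $1$ after summing over $u$, because the total transcript is a probability distribution. Pauli operators on the logical block have norm $1$, so the trace functional loses nothing.
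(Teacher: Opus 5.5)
Your architecture matches the paper's: H1$^\ast$ for the empty term, Knill--Laflamme for neutral clusters, a per-axis coset-weight cost, and Koteck\'y--Preiss summation with a $\mathrm{poly}(d,T)$ anchor count. The problem is Step~4, which is where the anisotropy comes from. It rests on an inference the paper shows to be false. You say a nonzero $\bar P$-component requires ``the product of faults'' to contain a representative of $\bar P\cdot\Sgroup$. That drops the backbone: it reads the logical content of $\Pi K_b^\dagger F^B_u K_a\Pi$ off $K_b^\dagger K_a$. H1$^\ast$ only makes $F^B_u=\sum_\sigma q_u(\sigma)\Pi_\sigma$ scalar sector by sector, and the weights differ between sectors. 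So contributions that cancel inside $K_b^\dagger K_a$ need not cancel once reweighted. Take $\Sgroup=\langle Z_1Z_2\rangle$, $\Zbar=Z_1$, $K_a=I+X_1$, $K_b=Z_1-X_1Z_1$. Then $K_b^\dagger K_a=0$, yet $\Pi K_b^\dagger(q_+\Pi_++q_-\Pi_-)K_a\Pi$ is a nonzero multiple of $\Zbar$ whenever $q_+\neq q_-$.

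The missing idea is to argue about Pauli \emph{pairs}, not about the product. Expand $K_a=\sum_E\alpha_E E$ and $K_b=\sum_{E'}\beta_{E'}E'$ separately, and write each $\Pi_\sigma$ as a character sum over stabilizers. Then use $\Pi E'^\dagger sE\Pi=\pm\Pi E'^\dagger E\Pi$ for $s\in\Sgroup$: the backbone only rescales each pair's coefficient and never creates a new Pauli term. A nonzero $\bar P$-component therefore forces the \emph{existence} of one pair with $E'^\dagger E\in\bar P\cdot\Sgroup$, supported in the joint bra--ket support. This gives $|A|\ge d_P$ and $|\Rlat|\ge d_P/\ell_\ast$ with no cancellation assumed. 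The same pair-level form is what makes the step valid for coherent, non-Pauli fault operators.

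The second gap is the normalization you flag yourself, and your ``first attempt'' does not close it. In $\sum_\Gamma\prod_{\gamma\in\Gamma}w_u(\gamma)$ over compatible families there are no clusters of overlapping polymers; those arise only in the Mayer expansion of $\log Z$. Splitting off one charged polymer and summing over the rest overcounts families with several charged polymers. Because the summands are signed, operator-valued terms, that overcount is not an upper bound. More seriously, the sum over families compatible with a fixed charged cluster $C$ is a restricted partition function $Z_{\Lambda\setminus\bar C}$, not the transcript law. Normalization of the transcript therefore does not make it $\le1$. It is the law of a different physical experiment (ideal operations forced near $C$) only in a raw per-location fault expansion. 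That is not evidently preserved by the PWTS rod resummation, and even there you would need inclusion--exclusion over the charged polymers. The paper instead uses a Koteck\'y--Preiss partition-ratio bound in the absolute outcome-summed norm (its Lemma~L2). This controls $Z_{\Lambda\setminus\bar C}/Z_\Lambda$ uniformly in the spacetime volume; such a ratio costs at most a factor exponential in $|C|$, which can be absorbed into $\theta$. That is where $ea<1$ actually does its work. In your sketch it is invoked for $\log Z$ but never connected to the bound on $\sum_u|x_u|$.
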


The quantitative bound uses the locality hypotheses in the step $|A|\le\ell_\ast|R|$. It is stated for families with uniformly bounded check support and detector-graph degree. \Cref{lem:cosetweight} applies to every stabilizer code, while the polymer bound also assumes the stated geometry.

Two corollaries follow, corresponding to the two privacy notions named above, and because $\kappa_m$ decreases in $m$, bounding all three axes by the smallest of the three distances gives the familiar single-exponent statement.

\begin{corollary}[full logical-channel privacy]\label{cor:fullchannel}
Let $\mathcal{M}$ be the induced classical channel from the logical qubit to the transcript and $\mathcal{C}$ the
input-independent channel emitting a fixed law. Logical inputs include basis states, superpositions, mixtures, or halves of an entangled pair. Then
\begin{equation}\label{eq:cor-diamond}
\|\mathcal{M}-\mathcal{C}\|_\diamond\ \le\ \kappa_{d_X}+\kappa_{d_Y}+\kappa_{d_Z}\ \le\ 3\,\kappa_{d_{\min}}
=e^{-\Theta(d_{\min})},\qquad d_{\min}=\min(d_X,d_Y,d_Z),
\end{equation}
so for \emph{any} two logical inputs $\rho,\sigma$, $\TV(P_\rho,P_\sigma)\le3\kappa_{d_{\min}}$, and every adversary distinguishing them from the transcript succeeds with advantage over guessing at most $\tfrac32\kappa_{d_{\min}}$. For the rotated surface code
$d_{\min}=d$ and this is $e^{-\Theta(d)}$.
\end{corollary}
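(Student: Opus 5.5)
The plan is to reduce the diamond norm to a statement about the code-space blocks $M_u$ and then read off the bound from \Cref{thm:main}. The induced channel $\mathcal{M}$ sends a logical operator $\rho$ to the classical distribution $u\mapsto\Tr(M_u\rho)$, where $M_u$ is the code-space compression of the effect of transcript $u$ (Heisenberg picture). It is a quantum-to-classical channel, so tensoring with a reference and using positivity of the $M_u$ works as in the unextended case. I take the input-independent channel to be $\mathcal{C}(\rho)=\Tr(\rho)\,(q_u)_u$, with $q_u=\tfrac12\Tr M_u$. It is a legitimate channel because $\sum_u M_u=I_L$ forces $\sum_u q_u=1$, and $q_u\ge0$ since $M_u\ge0$. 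The difference then has effects $\Delta_u=M_u-q_uI_L=x_u\Xbar+y_u\bar Y+z_u\Zbar$.

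Next I bound the diamond norm. For any input $\omega$ on logical qubit plus reference $R$, $(\mathcal{M}-\mathcal{C})\otimes\mathrm{id}_R(\omega)=\sum_u\lvert u\rangle\langle u\rvert\otimes\Tr_L[(\Delta_u\otimes I)\omega]$. Its trace norm is $\sum_u\|\Tr_L[(\Delta_u\otimes I)\omega]\|_1$. Each term is at most $\|\Delta_u\|_\infty\|\omega\|_1$, because partial trace against a bounded operator contracts trace norm. Since $\|\Xbar\|_\infty=\|\bar Y\|_\infty=\|\Zbar\|_\infty=1$, the triangle inequality gives $\|\Delta_u\|_\infty\le|x_u|+|y_u|+|z_u|$. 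Summing over $u$ and invoking \cref{eq:thm-aniso} yields $\|\mathcal{M}-\mathcal{C}\|_\diamond\le\kappa_{d_X}+\kappa_{d_Y}+\kappa_{d_Z}$.

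The remaining steps are direct. The function $\kappa_m$ is decreasing in $m$ because $\theta<1$, so each term is at most $\kappa_{d_{\min}}$. With $d_{\min}=\Theta(d)$ and $T=\Theta(d)$, the polynomial prefactor is absorbed and the bound is $e^{-\Theta(d_{\min})}$. For two inputs $\rho,\sigma$, the triangle inequality through $\mathcal{C}$ and the contractivity of $\mathcal{M}-\mathcal{C}$ on trace-norm-one states give $\|P_\rho-P_\sigma\|_1=\|(\mathcal{M}-\mathcal{C})(\rho-\sigma)\|_1$. Bounding the difference directly as $\sum_u|\Tr(\Delta_u(\rho-\sigma))|\le\sum_u\|\Delta_u\|_\infty\|\rho-\sigma\|_1$ gives a bound of at most $2\cdot3\kappa_{d_{\min}}$ in $\ell_1$. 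Hence $\TV\le3\kappa_{d_{\min}}$ under the $\tfrac12$-normalised convention. The advantage bound $\tfrac12\TV\le\tfrac32\kappa_{d_{\min}}$ follows from Helstrom/Neyman--Pearson with equal priors. For the rotated surface code, $d_X=d_Z=d$ and $d_Y\ge d$, so $d_{\min}=d$.

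The main obstacle is bookkeeping rather than depth: fixing the $\TV$ and diamond-norm normalisations consistently, so that the constant $3$ and the advantage $\tfrac32\kappa$ come out as stated. I would check this on the extreme case $\Delta_u$ proportional to $\Zbar$ with basis-state inputs. A second point is confirming that $M_u$ is well defined as a logical operator, namely that the encoding map composed with the transcript instrument yields effects on the code space. That is exactly the setup of \Cref{thm:main}, so I would cite it rather than reprove it.
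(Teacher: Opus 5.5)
Your proposal is correct and follows the paper's own argument: the proof of \Cref{lem:offdiag} likewise bounds $\|\mathcal{M}-\mathcal{C}\|_\diamond$ by $\sum_u\|M_u-q_uI_L\|_\infty$ using the orthogonality of the classical output registers, then applies $\|M_u-q_uI_L\|_\infty\le|x_u|+|y_u|+|z_u|$ together with \cref{eq:thm-aniso}, and finishes with the fact that $\kappa_m$ decreases in $m$. Your explicit partial-trace estimate for the reference system, and your check that $q_u=\tfrac12\Tr M_u$ defines a valid channel, spell out steps the paper states in a single line.
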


\begin{corollary}[basis-label privacy]\label{cor:basislabel}
For the two $\Zbar$-eigenstates, $\Tr\bigl(M_u(\rho_{0_L}-\rho_{1_L})\bigr)=2z_u$ identically, so the $x_u$ and $y_u$
terms are absent and
\begin{equation}\label{eq:cor-basis}
\TV(\Pzero,\Pone)=\sum_u|z_u|\ \le\ \kappa_{d_Z}.
\end{equation}
A code with $d_Z\gg d_{\min}$ can therefore be certified to hide its computational-basis label exponentially better than it hides an
arbitrary logical input.
\end{corollary}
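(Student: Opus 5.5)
The identity $\Tr\bigl(M_u(\rho_{0_L}-\rho_{1_L})\bigr)=2z_u$ is a trace computation in the logical Pauli basis. I will then sum its absolute values over $u$ and apply \Cref{thm:main}.

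\emph{Step 1 (code-space reduction).} Both $\rho_{0_L}$ and $\rho_{1_L}$ are supported on the code space. For every transcript effect $E_u$ (the Heisenberg-picture POVM element of the whole multi-round instrument, including classical post-processing), this gives $\Tr(E_u\rho_b)=\Tr(M_u\rho_b)$, where $M_u=\Pi E_u\Pi$ is the code-space block. Hence $P_b(u)=\Tr(M_u\rho_b)$. Here I take $b\in\{0,1\}$ and $\rho_b:=\rho_{b_L}$, the logical state $\lvert b_L\rangle\!\langle b_L\rvert$.

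\emph{Step 2 (Pauli trace identity).} I expand $M_u=q_uI_L+x_u\Xbar+y_u\bar Y+z_u\Zbar$ and write $\rho_{0_L}-\rho_{1_L}=\Zbar$ as an operator on the logical qubit, since $\rho_{b_L}=\tfrac12(I_L\pm\Zbar)$. The logical Paulis are trace-orthogonal, with $\Tr(P\,Q)=2\delta_{PQ}$ on the two-dimensional logical space. Therefore $\Tr(M_u\Zbar)=2z_u$, and the $q_u,x_u,y_u$ terms vanish identically. One small check is needed: the coefficients are real, because $M_u$ is Hermitian and positive, so $z_u=\tfrac12\Tr(M_u\Zbar)\in\mathbb{R}$. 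This makes $|z_u|$ unambiguous.

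\emph{Step 3 (TV equality and bound).} By definition,
\[
\TV(\Pzero,\Pone)=\tfrac12\sum_u\bigl|P_0(u)-P_1(u)\bigr|=\tfrac12\sum_u|2z_u|=\sum_u|z_u|.
\]
For a continuous or analog record component, the sum is replaced by an integral against the POVM density; nothing else changes. The $z$-axis bound of \cref{eq:thm-aniso} then gives $\sum_u|z_u|\le\kappa_{d_Z}$.

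\emph{Step 4 (the comparative remark).} The final sentence of the corollary follows from monotonicity. By definition $\kappa_m=C_0\,\mathrm{poly}(d,T)\,\theta^{m/\ell_\ast}$ with $\theta<1$ and constants uniform in $d$ and $T$. So $\kappa_{d_Z}/\kappa_{d_{\min}}=\theta^{(d_Z-d_{\min})/\ell_\ast}$, which is exponentially small in the gap $d_Z-d_{\min}$.

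\emph{Main obstacle.} The only delicate point is Step 1. It requires the effects $E_u$ to be well defined as operators on the initial code space, so that the entire noisy multi-round instrument is absorbed into $M_u$. This is exactly the setting \Cref{thm:main} already fixes. If the inputs were only approximately encoded, a leakage term outside the code space would have to be bounded separately. Under the stated hypotheses the inputs are exact $\Zbar$-eigenstates, so no such term arises.
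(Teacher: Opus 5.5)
Your proposal is correct and follows the paper's route: write $\rho_{0_L}-\rho_{1_L}=\Zbar$ on the code space, use trace-orthogonality of the logical Paulis to get $\Tr(M_u\Zbar)=2z_u$ (so $q_u,x_u,y_u$ drop out), and then invoke the $z$-axis bound $\sum_u|z_u|\le\kappa_{d_Z}$ of \Cref{thm:main}, equivalently \Cref{lem:offdiag}. Your monotonicity reading of the last sentence, comparing $\kappa_{d_Z}$ with the $d_{\min}$-controlled bound of \Cref{cor:fullchannel}, is the intended one, and the reality of $z_u$ needs only Hermiticity of $M_u$, not positivity.
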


\Cref{thm:main} is strictly stronger than \Cref{cor:fullchannel}, and \Cref{cor:basislabel} is the half a single
recovery error cannot reach. The extension from the basis pair to arbitrary inputs is not a triangle inequality over basis states, because that inequality would fail. The measurement $\{(I\pm\Xbar)/2\}$ has identical laws on $\zL$ and $\oL$ yet separates $\lvert\pm_L\rangle$ perfectly. It follows instead from controlling the off-diagonal code-space elements
directly, which is what \cref{eq:thm-aniso} does. \Cref{thm:main} is proved in the Methods, with the constants fixed
in Supplementary Note~11.3 and 11.9--11.10.

The privacy guarantee is the operational one. Because total variation controls every binary test, the bound is an
$f$-differential-privacy statement (Methods) that, for this binary transcript experiment, implies the
approximate-differential-privacy corollaries, including $(0,e^{-\Theta(d)})$-differential privacy. We state it as a
distinguishing-advantage bound because that is what an adversary actually faces. We do not claim pure
$\varepsilon$-differential privacy. Rare transcripts can have an unbounded likelihood ratio, even though their total
probability is exponentially small, so a small additive term is unavoidable and honest. In the running example, for a family meeting the theorem's hypotheses, this says the provider's confidence in reading the stored logical value off the honest syndrome log decays exponentially as the customer buys a higher-distance encoding.

\subsection*{A charged calibrated record leaks}
The same analysis gives a sharp sufficient condition for the guarantee to fail. If the transcript exposes a record
calibrated to a logical operator whose value it recovers with error $p_W$, the logical input leaks at total variation at least $1-2p_W$, and for a witness whose $p_W$ decays with distance the leak again scales with the code
distance, now toward full disclosure.

The cleanest example is a lattice-surgery parity measurement in which merging two patches to read the joint logical parity $\Zbar_1\Zbar_2$ places a logical operator into the measured algebra, which violates H1$^\ast$. The merged-parity
outcome is a near-perfect report of the logical input, and a transcript that contains it is almost perfectly
distinguishing. The converse lemma (Methods) makes this precise. If calibrated local reports multiply to a logical
operator, then the transcript estimates that logical value, and the total variation approaches one as the readout
errors shrink. The converse classifies charged calibrated witnesses. Records that merely correlate with local noise or measure the wrong operator fall outside this class and may remain private. The leak
is established by a witness that is both charged, meaning homologically nontrivial, and calibrated, meaning that its value
tracks a known logical operator.

The leaking side is a designed primitive, not a contrived pathology, because a logical measurement is supposed to reveal a logical value. What the converse adds is that this informativeness \emph{sharpens} with the code distance, the same parameter that hides the honest transcript. The merged parity is protected by a seam of length $d$, so it fails only when an error chain crosses that seam. Theorem and converse run the same engine in opposite directions, and the design instruction is single. Distance suppresses accidental logical strings in the transcript but makes an intentional logical readout more reliable, so keep logical operators out of the exposed algebra.

The records of an honest memory then fall into three groups, which is the device map the rest of the paper uses.
The physical noise of an honest run includes circuit-level depolarizing noise, small coherent over-rotations, finite-range crosstalk, and reduction-handled leakage. Under the two hypotheses, \emph{Certified private} records include reset-ancilla syndrome detectors, which have bounded footprint and are sector-scalar under H1$^\ast$. They also include the decoder's own actions $A=f(S)$ and fixed timing and reset flags, which are input-independent classical kernels. This physical noise is also included because each contribution contains only bounded-footprint fault atoms whose charged sector still costs $d/\ell_\ast$. Full certification also needs
the stated factorization and smallness assumptions, while only records satisfying both qualifiers of the converse are \emph{Leaking}, including a lattice-surgery measurement whose measured algebra contains $\Zbar_1\Zbar_2$. Others are a per-data-qubit QND or dispersive readout, a persistent data-calibrated sensor, or a mobile leakage excitation carrying a record. Each of these records leaks once its local reports combine along a logical string over the $T=\Theta(d)$ rounds, which is the sufficient condition the converse supplies. A local readout that does not so combine, measures the wrong operator, or is one-time-padded is not placed in this group by that condition. \emph{Nuisance} records, including input-independent $1/f$ or common-mode drift, break locality without carrying a logical signal and degrade the logical error rate without revealing the logical bit, so conditioning handles them.

\subsection*{Simulation consistency checks}
The simulations below are consistency checks, and one of them must be labelled precisely, because taken at face value
it would be much stronger evidence than it is. In the stabilizer setting the second logical input is prepared by a verified logical-$\Xbar$ injection, and $\Xbar$ commutes with every measured stabilizer, so the two detector records are drawn from \emph{the same distribution by construction}. Thus $\TV=0$ exactly, as a matter of algebra, before any noise is specified. No linear or other classifier can return any other result, so an adversary returning chance accuracy on that pair has confirmed an identity of the simulator, not measured a privacy property of a device. We therefore read the honest-arm rows of \cref{tab:numerics} as a pipeline check because the analysis chain does not manufacture a separation where none exists, which is what a negative control should establish and no more. The charged-witness
rows are the informative ones, showing that the same chain detects a leak that is genuinely present. Evidence about a real device cannot come from this construction and is reported in the Hardware measurement section, while evidence for the \emph{scaling} of the leak comes from the enumeration above, which is exact instead of sampled.

All numerics use the rotated surface code with circuit-level depolarizing noise, fixed random seeds, and bootstrap
confidence intervals, and every figure is regenerated by a single script from a committed result file. The numerical
values are collected in \cref{tab:numerics}.

\begin{table}[ht]
\centering
\small
\begin{tabular}{@{}p{0.40\linewidth} p{0.21\linewidth} p{0.33\linewidth}@{}}
\toprule
Quantity & Regime & Result (with $95\%$ CI) \\
\midrule
Adversary AUC on honest $U$ (logistic) & $d=3,5,7$ & $0.499,\,0.498,\,0.498$ (chance) \\
Adversary AUC with witness $W$ & $d=3,5,7$ & $0.982,\,0.986,\,0.990$ \\
Adversary AUC, coherent$+$leakage & $d=3$ (exact statevector) & $\in[0.497,0.507]$, five noise models \\
MMD two-sample test, honest / witness & $d=3,5$ & $p=0.45,0.76$ (n.s.) $/$ $p=0.003$ \\
Lattice-surgery error $p_{\rm LS}$ (seam) & $d=3\!\to\!11$ & $6.5\times10^{-3}\to2.2\times10^{-4}$, fit $e^{-0.43d}$ \\
\bottomrule
\end{tabular}
\caption{Simulation results. Rotated surface code, circuit-level depolarizing noise (one- and two-qubit
depolarization, measurement and reset flips, per-round data depolarization) at physical rate $p=0.005$ for the
classifier rows and $p=0.003$ for lattice surgery; MWPM decoding. The classifier rows use $40\,000$ shots per point ($20\,000$ per logical class), together with a stratified $70/30$ train/test split. AUC is evaluated on the held-out $30\%$, and $95\%$ CIs come from $1000$ bootstrap resamples of the test set. The coherent/leakage row uses exact statevector trajectories with $8000$ shots per point. The lattice-surgery row uses $40\,000$ shots per distance. Per-point manifest with seeds and tool versions in
Supplementary Note~20.5; extended checks in Supplementary Notes~8 and~20. \textbf{The first row is a consistency
check, not a measurement}. $\Xbar$ commutes with every measured stabilizer, so under Pauli noise the two detector laws are identical by construction and no classifier could report other than chance.}
\label{tab:numerics}
\end{table}

\begin{figure}[t]
\centering
\includegraphics[width=\linewidth]{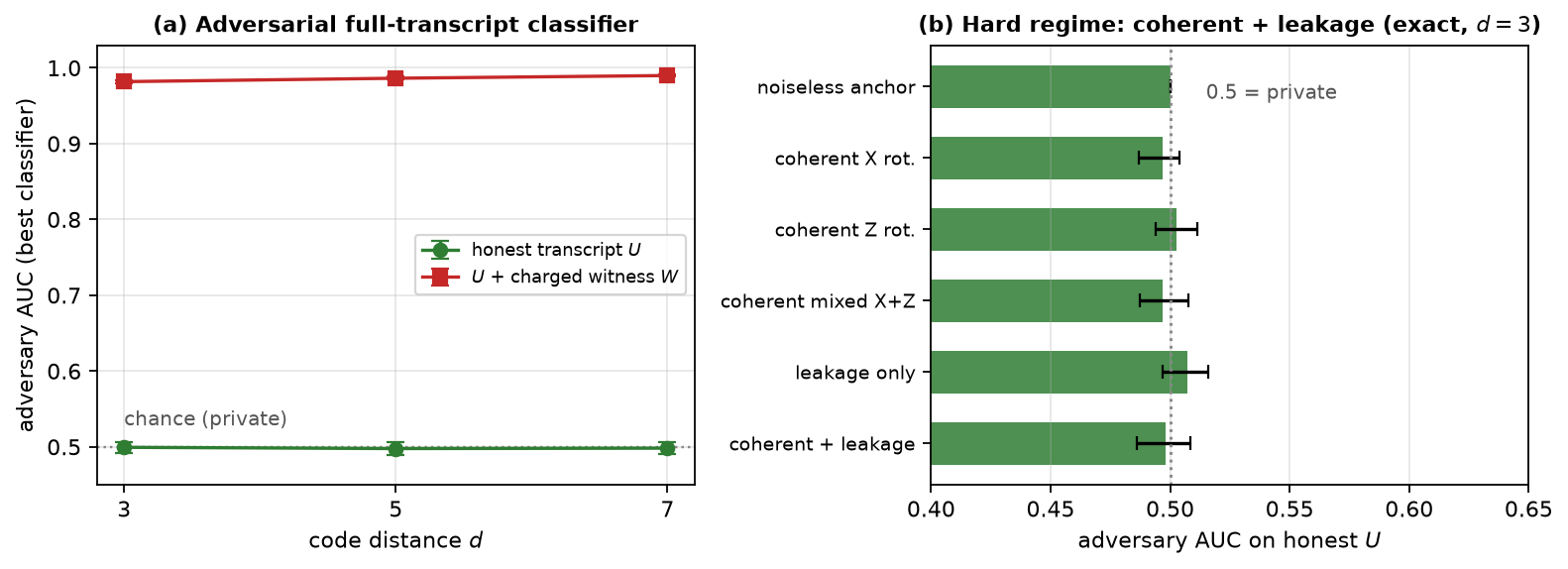}
\caption{An adversarial null for the honest transcript (bootstrap $95\%$ CIs; circuit-level depolarizing noise at
$p=0.005$, minimum-weight perfect-matching decoder, logistic adversary on the full detector record, held-out split).
(a) An adversary given the full detector record stays at chance for honest $U$ across $d=3,5,7$ and reaches near one
with a charged witness $W$. (b) The same adversary stays at chance under coherent and leakage noise in an exact
statevector simulation at $d=3$ (five noise models).}
\label{fig:nogo}
\end{figure}

The check gives the adversary the full record instead of a chosen statistic and complements the trained classifier with a kernel two-sample test (maximum mean discrepancy, characteristic kernel). The test probes equality of the transcript laws instead of the skill of a classifier family, and given enough samples it consistently detects any difference. Across $d=3,5,7$ the held-out area under the ROC curve sits at chance, $0.499$ to $0.498$. The permutation $p$-value is $0.45$ at $d=3$ and $0.76$ at $d=5$, and the model-free total variation of a coarse-grained statistic sits at its finite-sample floor. Appending a charged calibrated witness lifts the same adversary to $0.99$, the $p$-value to $0.003$, and the model-free statistic an order of magnitude above the floor, which shows the converse operationally (\cref{fig:nogo}a). At these shot counts, the classifier test resolves an AUC excess of about $0.011$ at $80\%$ power, and the bootstrap upper limit on the honest record's excess is below $0.01$. What each honest test excludes is recorded instead of left implicit, so the chance-level results bound the realizable separation from above instead of merely failing to find it (Supplementary Note~20.6).

Coherent and leakage noise are treated in the stress tests below, since a stabilizer simulator cannot represent them
and the exact statevector trajectory simulator used there reaches only $d=3$.

\paragraph*{The converse, and the locality calibration.}
The leaking regime is equally testable, and the locality hypothesis that powers the no-leakage theorem can be measured directly. \Cref{fig:converse} reports both.

\begin{figure}[t]
\centering
\includegraphics[width=\linewidth]{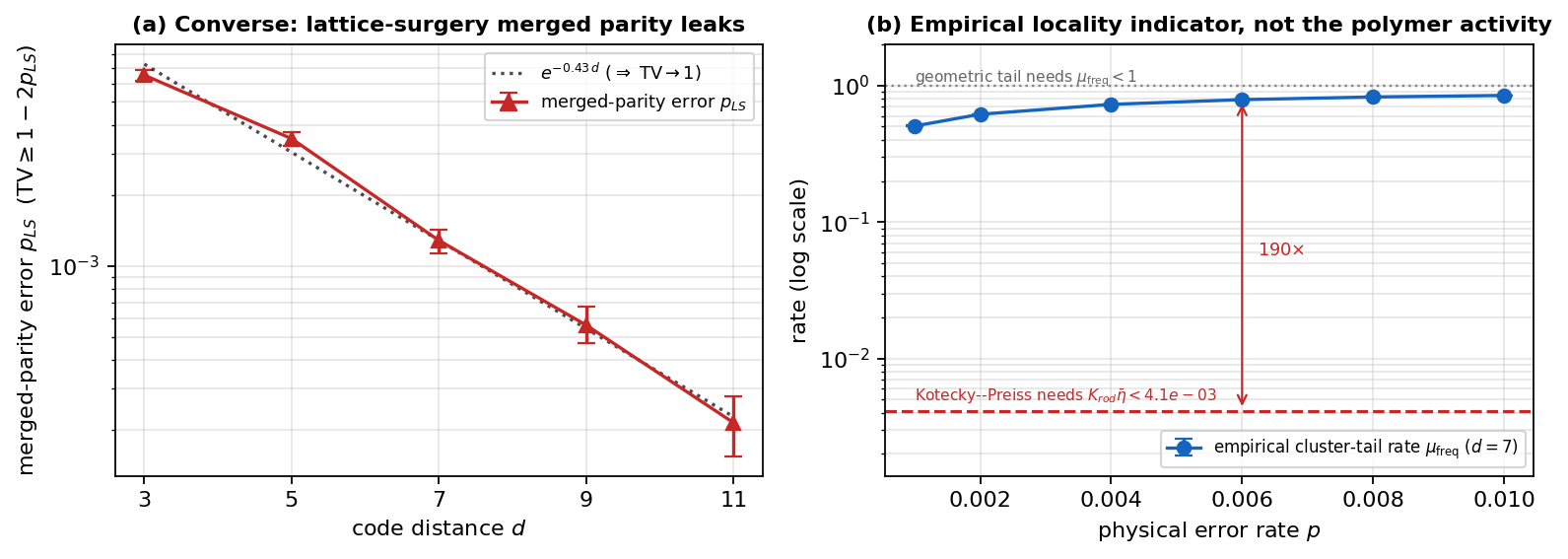}
\caption{The converse and the locality indicator (bootstrap $95\%$ CIs). (a) The lattice-surgery merged-parity error
$p_{LS}$ decays exponentially in $d$ (MWPM decoder), so the converse bound $1-2p_{LS}$ rises toward one. (b) The
\emph{empirical} cluster-tail rate $\mu_{\rm freq}$, fitted to the sizes of the observed fired-detector components
$\Robs$, which are not the fault-support regions $\Rlat$ the theorem is stated for. This rate also differs from the polymer activity. The Koteck\'y--Preiss condition $ea<1$ requires $K_{\rm rod}\bar\eta<1/(e^3\,\mathrm{deg})\approx
4.1\times10^{-3}$ at detector degree $12$, a threshold drawn as the dashed line, while the measured indicator sits about $190\times$ above it. A geometric tail with $\mu_{\rm freq}<1$ supports the \emph{form} PWTS assumes. It does not certify the smallness the theorem needs.}
\label{fig:converse}
\end{figure}

The converse is quantitative and scales as predicted. We model the lattice-surgery joint-parity measurement by its
distance-$d$-protected merge seam and decode it with minimum-weight perfect matching, a standard named decoder. The
merged-parity error $p_{\rm LS}$ falls geometrically with distance, from $6.5\times10^{-3}$ at $d=3$ to
$2.2\times10^{-4}$ at $d=11$, an exponential fit with decay rate $0.43$ per unit distance (\cref{fig:converse}a). The
converse total-variation bound $1-2p_{\rm LS}$ therefore rises toward one. The implication for a designer is direct. A logical measurement record becomes a more reliable leak as the distance grows, which is the opposite of the no-leakage theorem. This trend confirms that the boundary between the two sides of the device map is not an artifact of small distance.

The locality hypothesis has a measurable proxy, and how far the proxy goes must be stated exactly, once, since PWTS asks
that the pre-logical activity $\eta(R)$ of a detecting region decay geometrically in $|R|$. The data instead yield the tail rate $\mu_{\rm freq}$ of the size distribution of the \emph{observed components} $\Robs$, which are connected components of detectors that actually fired (\cref{fig:converse}b). This behavior supports the geometric \emph{form} PWTS assumes and its uniformity in $d$, but it does not certify the smallness the closed bound needs. The rate stays below one at every distance tested and converges as the patch grows. Two recorded gaps separate the fitted rate from the rate PWTS assumes. The first is that the fit counts how often components of a given size occur, while the hypothesis bounds the resummed operator-norm activity such a region can carry. The second is that the fit sees the fired-detector component $\Robs$, while the hypothesis is stated over the latent fault-support region $\Rlat$. It is measured on $\Robs$ instead of on the $\Rlat$ of \Cref{lem:cosetweight}, which a cancelling fault chain can make arbitrarily smaller than its own footprint. The Koteck\'y--Preiss condition $ea<1$ requires $K_{\rm rod}\bar\eta<1/(e^3\,\mathrm{deg})\approx4.1\times10^{-3}$ at detector degree $12$, and the measured indicator sits about $190\times$ above it at the simulated rates. On hardware calibration, the corresponding gap is $21.5\times$ in $\varepsilon_0$. \emph{This is the screening/certification distinction referred to throughout}. Passing the diagnostic does not license \Cref{thm:main}, which also needs the transcript factorization, the microscopic channel-norm bounds, and the smallness condition. 
\paragraph*{Stress tests.} These split in two. Where the noise is Pauli, the $\Xbar$-prepared pair has identical detector laws by covariance, so varying the classifier or the Pauli noise model cannot change the answer. The tested classifiers are gradient-boosted trees and a multilayer perceptron, while the tested variations include $Z$-biased single-qubit noise and finite-range $Z$--$Z$ crosstalk, with all tests returning chance across $d=3,5,7$. The coherent and leakage tests differ in kind and provide the only simulation evidence here that could have come out otherwise because a coherent over-rotation channel is not $\Xbar$-covariant and the algebra permits a leak. Under coherent $X$, coherent $Z$, mixed coherent, pure
leakage, and combined coherent-plus-leakage noise the adversary stays within statistical distance of one half
(\cref{fig:nogo}b), with a noiseless anchor of exactly one half. That is a genuine null at $d=3$ only, because the exact statevector simulator does not reach further, but it is not a large-distance coherent-noise scaling test, and we do not present it as one. Finally, two statements about the physical error rate must be kept apart. The specific honest memory stays empirically blind across a wide sweep even above the code threshold, which is a symmetry consequence and not an extension of the theorem. By contrast, the PWTS cluster-tail rate that the proof needs climbs toward one as the rate rises, but we do
not claim the privacy and code thresholds coincide. The converse is confirmed a second way with two independent distance-$d$ patches decoded separately. Their joint logical parity decays as $e^{-0.44d}$ from $1.3\times10^{-2}$ at $d=3$ to $3.9\times10^{-4}$ at $d=11$. This independent-patch quantity is about $2p_L$, and we do not claim that it bounds the merged-seam error. Extended checks are in Supplementary Notes~8 and~20.

\subsection*{A diagnostic test of when the theorem applies}
The no-leakage theorem and the converse together give a usable screen for whether the theorem can be invoked for a given device, and this screen
has two parts that match the two hypotheses and certify different things.

\emph{Part one, exact.} Compute the group generated by the measured checks in binary-symplectic form and verify that
it contains no nontrivial logical operator. This runs in polynomial time and it certifies, exactly, that the idealized
measured algebra carries no explicit logical witness, and the honest rotated surface-code gadget passes at $d=3,5,7$. A lattice-surgery instrument fails because $\Zbar_1\Zbar_2$ lies in the measured group, and the witness it returns is precisely the record whose leak the converse quantifies. It cannot reach coherent faults, analog records, leakage,
timing or feedback, which is why part two exists.

\emph{Part two, empirical.} Estimate the transcript's locality from the connected-component size distribution of fired detectors, whose geometric tail rate $\mu_{\rm freq}$ is fitted with confidence intervals (\cref{fig:converse}b). The controlled alternative toggles a local Pauli and measures the response against region size. The measurement shown here is \emph{passive}, whereas that controlled alternative is a different experiment described in Supplementary Note~11 and is not the one shown here. The fitted locality rate therefore keeps the status established above. Full certification would also require a
device model showing that every exposed analog and classical record factors through allowed local detector events or
input-independent kernels.

The order of the two parts matters in practice. The subtler failures include a persistent sensor or a long-memory record. The algebra check is cheap and catches deliberate logical channels immediately, while the calibration guards against those subtler failures that no static algebra check could see.

\section*{Hardware measurement}
\label{sec:hardware}

The preceding sections establish a conditional theorem. This section reports what a real superconducting processor
does, and the answer is that our sufficient certificate for it is not satisfied there, by a large and measurable
margin. All data below were taken on
\texttt{ibm\_cleveland}, a 156-qubit IBM Heron~r2 device, in \SI{24}{\minute} of quantum-processor time charged
across eight jobs. Raw per-shot records, per-qubit calibration snapshots taken both at submission and after each job,
job identifiers, the realised acquisition order and the analysis code are in the repository.

\paragraph{Acquisition artifact is bounded from above, on the device.}
Every number below is read against a control acquired on the same qubits, in the same job, under the same randomised interleaving. The control prepares one logical class twice and hands it to the estimator under two different labels, so any total variation present there reflects drift or acquisition artifact and does not reflect logical-state dependence.

Bounding that requires an inequality pointing the opposite way from the one used everywhere else in this work. The selection-free statistic is a \emph{lower} bound, and a zero reports only that one classifier at one fixed threshold found nothing. This is a failure to detect, not a demonstration of absence, and the same point applies equally to the randomized-encoding arm discussed below, which is why we do not quote the null cells' lower bounds. What those cells admit, and the leak cells do not, is a genuine upper bound because they run the shortest circuits in the session, so their entire record is four or six bits. On an alphabet of $k\in\{16,64\}$ categories with $n\ge20\,000$ shots per class, the empirical laws pin down the true ones. Jointly with probability at least $95\%$, the distribution-free Bretagnolle--Huber--Carol inequality gives
\begin{equation}
\TV(P,Q)\;\le\;\TV(\widehat P,\widehat Q)+\!\!\sum_{i\in\{0,1\}}\!\!\sqrt{\frac{k\ln 2+\ln(4/\delta)}{2n_i}}.
\label{eq:nullupper}
\end{equation}
It assumes nothing about agreement between the two classes, because their agreement is exactly what the test evaluates. The four null cells were run on the repetition line at the shortest and longest exposures and on the $[[4,1,2]]$ layout. Across these cells, the empirical total variations are $0.002$ to $0.011$, and \cref{eq:nullupper} bounds every one of them above by $0.051$. The like-for-like leak cell, same line and same
exposure, has a $95\%$ \emph{lower} bound of $0.267$. The two intervals do not meet, so two inequalities pointing the right ways establish the separation between a leak and an acquisition artifact on the device without a noise-model inference.

\paragraph{What is measured, and on which code.}
The leak experiment runs a \emph{repetition-code} $Z$-memory, not the rotated surface code of \Cref{thm:main}. This is deliberate and is the point of the design because a repetition code has $Z$-distance $d_Z=1$, with a logical basis given by the physical computational basis in which amplitude damping acts. It is therefore the least protected case for damping-induced charged leakage, and the cheapest setting in which to
resolve the $d_Z=1$ mechanism. It is \emph{not} a worst case for backbone blindness because, under the convention fixed below, H1$^\ast$ remains intact and charged fault-path contributions carry the input dependence.
The surface code, with $d_Z=d$, is the other end of the same criterion and is measured separately below, so readers should not read the repetition-code numbers as a test of \Cref{thm:main}. The repetition-code experiment tests the predicted $d_Z=1$ leakage mechanism, while the separate $21.5\times$ certificate calculation assesses whether the theorem's sufficient smallness condition can be invoked on this device. Neither is evidence for the other.

\begin{figure}[t]
\centering
\includegraphics[width=\linewidth]{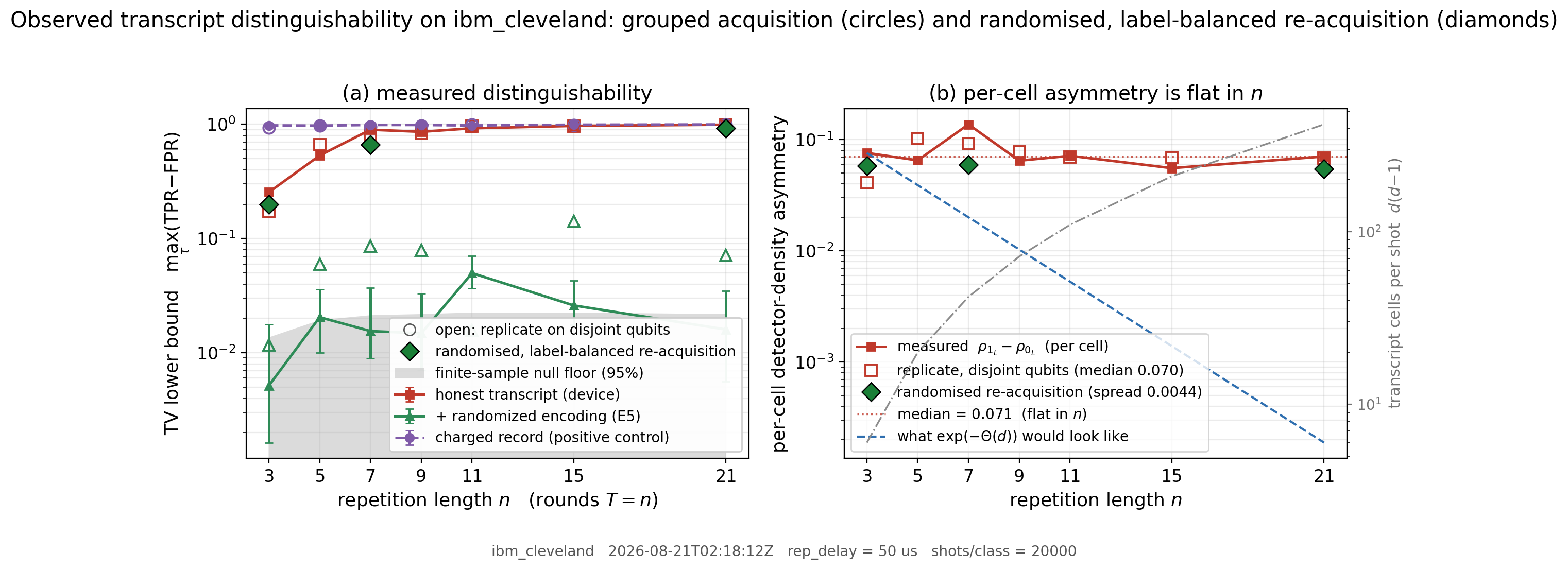}
\caption{Measured transcript privacy on \texttt{ibm\_cleveland}. Filled circles are the original grouped acquisition. The \textbf{filled diamonds are the randomised, label-balanced re-acquisition} at $n=3,7,21$, which is the acquisition used for the headline number. (a) Total-variation lower bound against code repetition length for the honest transcript, randomized encoding, and a deliberately planted charged record. The shaded band is the finite-sample null floor, and open markers repeat the whole sweep on a disjoint qubit set. Randomisation moves $n=21$ from $0.949$ to $0.927$ and $n=7$ from $0.880$ to $0.660$, and the second change provides the control that detects acquisition-order contamination without ratifying the original value. Repetition has $d_Z=1$, so
\Cref{prop:attained} gives a first-order leak that lengthening the code does not suppress. (b) The per-cell detector-density asymmetry is flat in $n$ as that mechanism requires and flatter still under randomisation (spread $0.0044$ against $0.067$). The dashed curve shows what an exponentially suppressed leak would look like, and the grey dash-dotted curve is the number of transcript cells per shot, $n(n-1)$.}
\label{fig:hw_e1}
\end{figure}

\begin{figure}[t]
\centering
\includegraphics[width=0.56\linewidth]{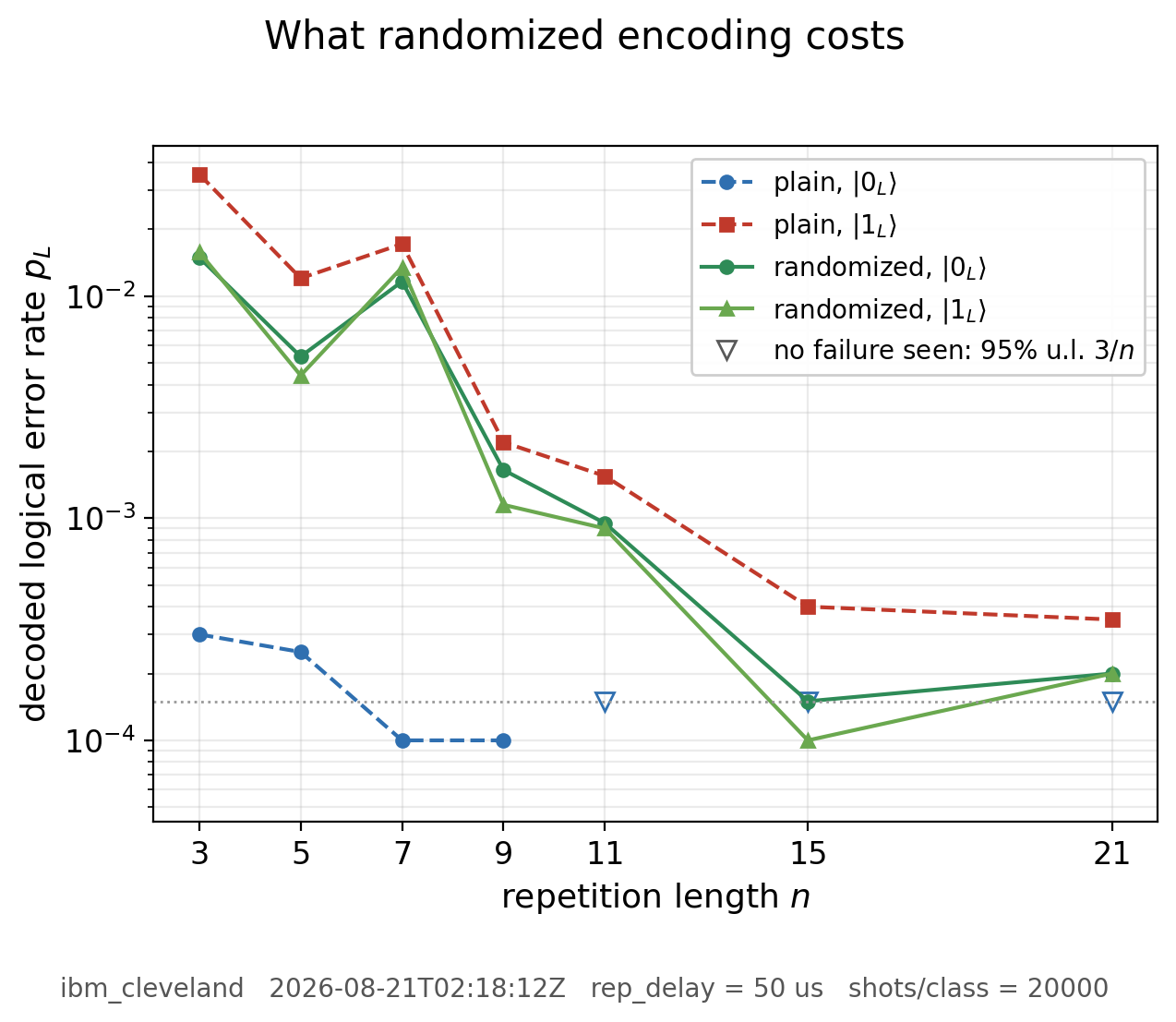}
\caption{What the mitigation costs. Decoded logical error rate per logical input, plain and with randomized encoding.
Randomized encoding equalises the two inputs by moving both to their mean, so the cheaper input pays. At $n=3$, the plain memory fails at $3.0\times10^{-4}$ for $\zL$ against $3.5\times10^{-2}$ for $\oL$, and after randomized encoding the two agree to within $6\%$. Open triangles are rule-of-three $95\%$ upper limits ($3/n$) at points where no logical
failure was observed in $20\,000$ shots.}
\label{fig:hw_cost}
\end{figure}

\subsection*{The honest transcript is distinguishable, under randomised acquisition}

For each repetition length $n\in\{3,\dots,21\}$ we run $T=n$ rounds and prepare each logical basis state $20\,000$ times,
withholding the final data readout from the adversary. The reported statistic is the total-variation lower bound $\max_\tau(\mathrm{TPR}-\mathrm{FPR})$ on a held-out split, which is the quantity \Cref{thm:main} bounds from above. Significance is evaluated by test-label permutation ($10^3$ draws, Bonferroni-corrected across model families).

That statistic maximises over $\tau$ on the split that then scores it, so as a point estimate it is optimistically
biased. Freedom from threshold and model-selection bias is a weaker and different claim than causal attribution to the logical input. We therefore report a second, \emph{selection-free} quantity as the headline, computed from the same shots by a three-way split. We fit the score on $50\%$, choose \emph{both} the model family and the single threshold $\tau^\ast$ on a disjoint $25\%$, and evaluate $\mathrm{TPR}-\mathrm{FPR}$ at that fixed $\tau^\ast$ on the remaining $25\%$, never re-optimised. With $\tau^\ast$ fixed before the test split is touched, $\mathrm{TPR}$ and $\mathrm{FPR}$ are computed on disjoint groups of shots, but disjoint is not the same as independent, and the acquisition order matters here because the circuits were \emph{not} randomly interleaved. For each length, the job ran the $b=0$ circuits of all three arms and then the $b=1$ circuits, with each circuit's shots contiguous, which confounds the logical label with position in the job. Two controls in the existing data bear on whether that confound produces the signal, and a third
measurement, reported immediately below, settles it.

\emph{Control 1: the mitigated arm spans the same gap.} The randomized-encoding circuits carry the same $b$ label,
sit adjacent to the plain circuits in the same job, and their $b=0$ and $b=1$ halves are separated by the \emph{same}
temporal gap. Any drift acting across that gap would act on both arms. Measured with the same estimator, the
randomized-encoding arm returns a point estimate of $|{\le}0.0006|$ and a bound of $0.000$ at every length,
against $0.949$ for the plain arm.

\emph{Control 2: contiguous shot splits.} Splitting shots at random lets training and test share a
drift state. The procedure trains on the first half of each circuit's shots and chooses $\tau^\ast$ on the third quarter, and it evaluates on the last quarter, which contains the latest data and was not used earlier. Re-running the three-way estimate on \emph{contiguous} blocks gives $0.958$ at $n=21$ and $0.218$ to $0.958$ across the sweep, in line with the random-split values (\texttt{sim/hw/acquisition\_controls.py}).

\emph{Control 3: the randomised re-acquisition.} Neither control above excludes a drift aligned with the $b=0/b=1$ boundary that is also suppressed under randomized encoding. Only randomised acquisition does. We therefore re-ran the three representative lengths with the circuit order randomly permuted under a recorded seed. The logical label was balanced across the job ($0.50$ in each half), and the re-acquisition cost \SI{45}{\second} of processor time (\texttt{sim/hw/submit\_e1.py --interleave-seed}). With the labels randomly assigned, device drift cannot be
systematically bound to the logical input.

\begin{center}\small
\begin{tabular}{lccc}
\toprule
one-sided $95\%$ lower bound & $n=3$ & $n=7$ & $n=21$\\
\midrule
grouped acquisition (all $b{=}0$, then all $b{=}1$) & $0.239$ & $0.880$ & $0.949$\\
\textbf{randomised, label-balanced acquisition} & $\mathbf{0.199}$ & $\mathbf{0.660}$ & $\mathbf{0.927}$\\
\addlinespace
per-cell asymmetry, grouped & $0.076$ & $0.138$ & $0.071$\\
per-cell asymmetry, randomised & $0.058$ & $0.059$ & $0.055$\\
\bottomrule
\end{tabular}
\end{center}

Three things follow. The headline survives because at $n=21$ the bound moves from $0.949$ to $0.927$, establishing the separation as a property of the logical input independent of shot timing. The randomized-encoding arm again returns $0.000$ at every length, but in that run the $n=7$ cell was partly an artefact, as shown by $0.880\to0.660$, with its per-cell asymmetry falling from $0.138$ to $0.059$. That cell was the one outlier in the grouped sweep, sitting well above the others with no explanation. Acquisition order explains the outlier. The cell remains in the report, and we note that its consequence is helpful, not harmful. Under randomised acquisition, the per-cell asymmetry is flat at $0.055$--$0.059$ across a sevenfold range of $n$, considerably flatter than the grouped $0.071$--$0.138$. Flatness in
$n$ is what $d_Z=1$ predicts, and removing the acquisition-order contamination makes the prediction \emph{cleaner},
not weaker.

The remaining four lengths of the sweep ($n=5,9,11,15$) were acquired in the original grouped order and are reported as such. The randomised control covers $n=3,7,21$. We take an exact one-sided Clopper--Pearson \emph{lower} bound on $\mathrm{TPR}$ and an exact one-sided \emph{upper} bound on $\mathrm{FPR}$, each at level $\alpha/2=0.025$. By the union bound, their difference is a one-sided lower bound with joint coverage of at least $95\%$ (\texttt{sim/hw/threeway\_tv.py}), which exchangeability alone cannot guarantee because correlated exchangeable draws can be overdispersed and break the coverage. That coverage statement is conditional on the shots within a class being independent and identically distributed Bernoulli draws, and the controls above are what we can offer in its support. Under that model the number is a lower bound on the population total variation at the stated
confidence, not a selected empirical statistic.

\begin{table}[htbp]
\centering\small
\begin{tabular}{lccccccc}
\toprule
repetition length $n$ & 3 & 5 & 7 & 9 & 11 & 15 & 21\\
\midrule
honest transcript & 0.254 & 0.533 & 0.897 & 0.860 & 0.923 & 0.968 & \textbf{0.991}\\
\quad replicate, disjoint qubits & 0.172 & 0.665 & 0.811 & 0.837 & 0.960 & 0.968 & \textbf{0.993}\\
$+$ randomized encoding & 0.005 & 0.021 & 0.016 & 0.015 & 0.050 & 0.026 & 0.016\\
charged record (positive control) & 0.978 & 0.974 & 0.986 & 0.986 & 0.976 & 0.991 & 0.996\\
\midrule
\multicolumn{8}{@{}l}{\emph{selection-free lower bound, grouped acquisition}}\\
honest transcript & 0.239 & 0.496 & 0.880 & 0.807 & 0.882 & 0.914 & \textbf{0.949}\\
$+$ randomized encoding$^\dagger$ & 0.000 & 0.000 & 0.000 & 0.000 & 0.000 & 0.000 & 0.000\\
\midrule
per-cell asymmetry $\rho_{1_L}-\rho_{0_L}$ & 0.076 & 0.065 & 0.138 & 0.065 & 0.072 & 0.056 & 0.071\\
\quad replicate & 0.041 & 0.103 & 0.092 & 0.078 & 0.070 & 0.069 & 0.069\\
\bottomrule
\end{tabular}
\caption{Measured transcript distinguishability on \texttt{ibm\_cleveland}. Total-variation lower bounds,
$20\,000$ shots per logical class per entry, $T=n$ rounds, inter-shot delay \SI{50}{\micro\second}. The $^\dagger$A lower bound of $0.000$ asserts only $\TV\ge0$. It is consistent with the by-construction equality of the two laws but does not certify it, since the estimator returns no upper bound. The
finite-sample null floor is $\approx0.022$ and every honest-transcript and positive-control entry has permutation
$p=0.001$. ``Replicate'' repeats the entire sweep on a qubit set disjoint from the first. The rows headed
\emph{selection-free lower bound} are the one-sided $95\%$ bound from a three-way split, with $\tau$ fixed
before the test split is touched and coverage taken under the within-class binomial model. Those rows come
from the grouped acquisition, and the randomised control above covers $n=3$, $7$ and $21$.}
\label{tab:hw}
\end{table}

Three things are observed at once, and they correspond to the three arms of a design whose analysis plan,
circuit families and result manifests were fixed before any machine time was spent on \emph{this} sweep. The randomised re-acquisition and the exposure sweep that follow are declared as follow-ups, each frozen before its own job was submitted.

\emph{The honest transcript is distinguishable, and adding repetition does not help.} At repetition length $n=21$, an adversary reading only the syndrome record attains $\TV\ge0.927$ at one-sided $95\%$ confidence under randomised
acquisition (\cref{fig:hw_e1}). This corresponds to a success probability of at least $0.963$ and an advantage over guessing of at least $0.463$, each truncated downward so that it remains a lower bound. (The selected statistic on that same randomised re-acquisition is $0.940$, a gap of $0.013$, while on the earlier grouped sweep, the corresponding pair is $0.991$ against $0.949$, a gap of $0.042$. Both gaps are measurements of selection bias, not matters of argument, and the gap is smaller under randomised acquisition.) This is what \Cref{prop:attained} gives, not a counterexample to \Cref{thm:main}, because the repetition code has $d_Z=1$ at every $n$ and its first-order leak in $\gamma$ cannot be suppressed by lengthening the code. Adding qubits buys correction against bit flips and buys nothing at all against transcript exposure, and that is the prediction being tested, while the observation is consistent with it.

\emph{The rise is not an artifact of the classifier or of the qubits.} The same analysis chain returns the null floor when the leak is absent and returns $\approx0.98$ when a charged record is deliberately planted, at every distance. The whole sweep reproduces on a disjoint qubit set, landing at $0.993$ against $0.991$.

\emph{Randomized encoding removes it, at no two-qubit-gate cost, for the pair demonstrated.} On each shot we prepare
the codeword $b\oplus r$ for a secret random bit $r$ and relabel the outcome by $r$ afterwards, so the transcript law
is $(P_0+P_1)/2$ for either \emph{basis} input. Measured, this returns the empirical lower bound to the finite-sample
floor at six of seven lengths, a $62$-fold reduction at $n=21$.

The privacy argument for this mitigation is a construction, not a measurement, and should be read that way. It also
carries a trust boundary, which we state explicitly because the threat model elsewhere in this paper grants the
observer full circuit and device knowledge. The provider can know the entire circuit that produces the randomized codeword. The secret bit $r$ is generated by the client or a trusted input-preparation agent and never leaves that party, while the provider receives the preparation of $c=b\oplus r$. The provider can know $c$ while knowing neither $b$ nor $r$, and the client alone undoes the relabelling on the returned outcome. The randomization is
therefore not a layer the provider applies and could read off, which is precisely why it must be physical and why the
key must not appear anywhere in the transcript. Given that boundary, the two basis inputs produce the same observed law. If $r$ is uniform, is not exposed anywhere in the transcript, and the device physically prepares $c=b\oplus r$, then for either basis input the observed law is $P(U\mid b)=\tfrac12\bigl(P(U\mid 0_L)+P(U\mid 1_L)\bigr)$, the same object for $b=0$ and $b=1$. The two laws are
therefore \emph{identical by construction} and $\TV=0$ exactly, under that threat model. The implementation check asks whether the randomization was really applied to the physical state and not to a label, and the hardware numbers provide this check, not independent evidence of privacy. A one-sided lower confidence bound of zero asserts only $\TV\ge0$ and certifies nothing, and the three-way procedure returns this bound for the arm at every length. It is consistent with equality but does not establish it because establishing it empirically would require an \emph{upper} bound on $\TV$, which the estimator does not provide.

First, the cost. This is a \emph{physical} randomization, not a relabelling. Merely changing a Pauli-frame label in software leaves the hardware in the same state and the environment damps it identically, so the transcript would not be twirled at all. Preparing the opposite codeword costs a transversal layer of single-qubit $X$ gates, and on the transpiled circuits at $d=5$, the measured changes are $+5$ one-qubit gates and $+1$ depth. They include \textbf{zero} additional two-qubit gates.
The claim ``No two-qubit-gate cost'' is accurate. The claim ``zero gate cost'' is inaccurate.

Second, an $\Xbar$ randomization acts trivially on $\lvert\pm_L\rangle$, since
$\Xbar\lvert\pm_L\rangle=\pm\lvert\pm_L\rangle$, so it symmetrises the pair $\{\zL,\oL\}$ and covers the
computational-basis label. Covering an unknown state needs the full logical Pauli
twirl over $\{I,\Xbar,\bar Y,\Zbar\}$, again applied physically, with the key unknown to the observer and tracked
through everything downstream. The gate cost reported above is that of basis-state randomized encoding in a memory experiment, measured on
the pair it covers.

\paragraph{The per-cell asymmetry, which is flat in $n$.}
A rising total variation invites an objection based on transcript length because the record grows as $n(n-1)$ cells per shot, from $6$ at $n=3$ to $420$ at $n=21$. The sharper quantity is the \emph{per-cell} asymmetry. Write $\rho_b$ for the fraction of the $n(n-1)$ recorded detector cells that fire, averaged over the shots of logical class $b$. The per-cell asymmetry is $\rho_{1_L}-\rho_{0_L}$, shown in the last two rows of \Cref{tab:hw}, and it is flat across a sevenfold range of $n$, with a median of $0.071$ and a value of $0.070$ on the independent replicate.

What this does and does not measure needs care, because the limits differ. \Cref{prop:dzlaw,prop:attained} concern a
\emph{fixed} code at \emph{fixed} $T$ as $\gamma\to0$. The sweep here holds the device and its noise essentially fixed while the circuit grows with $T=n$, so it
measures how distinguishability accumulates with record length at fixed $\gamma$. The exponent in $\gamma$ is
measured separately, by the fixed-code exposure sweep below. On this processor, the sweep establishes nonvanishing per-cell asymmetry for the $d_Z=1$ repetition family as the code lengthens, which is consistent with a first-order damping mechanism whose per-round, per-qubit coefficient does not fall with $n$. Total distinguishability then accumulates over a growing record instead of decaying, but consistency is not a measured slope.

The natural next experiment directly tests the order with a fixed code and effective damping exposure varied through each round's idle duration, followed by a fit of $\mathrm{d}\log\widehat{\TV}/\mathrm{d}\log\gamma$. The prediction is a slope near $1$ for a $d_Z=1$ repetition memory and near $3$ for the $d_Z=3$ surface patch, and we have verified exactly that behaviour in simulation, with measured slopes $0.9995$, $1.9994$ and $2.9995$ against $d_Z=1,2,3$. The hardware counterpart of that exponent measurement is the fixed-code exposure sweep reported below.

This measures the code, not the hypotheses. It is not by itself evidence that H1$^\ast$ or PWTS fail on this device. Under the convention used throughout (below), the non-identity component of the no-jump branch is a fault atom, not part of the honest backbone. Thus H1$^\ast$ can hold while charged contributions carry the entire leak at order $\gamma^{d_Z}$. What the hardware establishes is that on a real processor the damping asymmetry is large enough
to be resolved with $20\,000$ shots, and that for $d_Z=1$ it is not suppressed. The separate question of whether the
theorem's hypotheses hold here is answered by the certificate below, which reports a $21.5\times$ shortfall.

\paragraph{A convention that must be fixed.}
Amplitude damping admits two decompositions and they lead to opposite-sounding statements. The first decomposition places the entire no-jump Kraus branch in the honest backbone, where that branch carries a $Z$-dependent component and H1$^\ast$ fails outright. The second keeps only the scalar part in the backbone and treats the non-identity $Z$ component of the no-jump branch as a local fault atom. Under this decomposition, H1$^\ast$ holds and the input dependence appears as charged fault-path contributions. \textbf{We use the second convention throughout} because the PWTS derivation and the $O(\gamma^{d_Z})$ analysis both assume it, and under this convention, amplitude damping is a non-blind local fault covered by the theorem, not an assumption violation excluded by it. Under this convention H1$^\ast$ remains intact throughout, and charged fault-path contributions account for the observed asymmetry without any failure of backbone blindness.

\paragraph{Where the asymmetry comes from.}
On this processor, a syndrome round lasts \SI{5.6}{\micro\second}. Ancilla reset and measurement take \SI{2.68}{\micro\second} and \SI{2.65}{\micro\second}, respectively, while two-qubit gates take \SI{0.14}{\micro\second}. The data qubits are idle for the measurement and reset windows. The two-qubit layers occupy $\SI{0.14}{\micro\second}$ of the $\SI{5.47}{\micro\second}$ round, so at first order in $\gamma$ they carry about $2.5\%$ of the damping exposure and the idle windows carry the rest. The relevant feature is how long the data qubits wait on readout. The exposure the sweep varies is thus a property of the syndrome-extraction \emph{cycle} and not of the gates, and it is the quantity a faster or non-demolition readout would change.

\subsection*{The leak order is the code's $Z$-distance}
The two codes are compared in \cref{fig:hw_e2}.

The same device, in the same session and at the same physical error rates, was given both a three-qubit repetition code and a distance-3 rotated surface code for both logical inputs.

\begin{center}\small
\begin{tabular}{lccc}
\toprule
code & $d_Z$ & empirical TV \emph{lower bound} & verdict\\
\midrule
repetition, $n=3$ & 1 & $\ge0.358$ & leaks ($p=0.001$)\\
rotated surface, $d=3$ & 3 & $\ge0.032$ & not resolved above the floor $0.023$\\
\bottomrule
\end{tabular}
\end{center}

The tested statistic yields an empirical TV lower bound $11.2$ times smaller for the $d_Z=3$ circuit than for the
$d_Z=1$ circuit. That phrasing is deliberate and the stronger one is not available because both entries are \emph{lower} bounds, so they do not determine the ratio $\TV_{\rm rep}/\TV_{\rm surf}$. The bounds $\TV_{\rm rep}\ge0.358$ and $\TV_{\rm surf}\ge0.032$ allow the surface code's true total variation to exceed $0.032$ without resolution by this statistic. The verdict follows a stated rule. A null floor is the $95$th percentile of the statistic over $1000$ random permutations of the test labels, which is the value label-shuffled data reaches by chance alone, and the quoted intervals are bootstrap intervals over the test split. The surface-code bound is $0.032$ with a $95\%$ interval $[0.016,0.051]$, and that interval straddles the null floor $0.023$, so the bound is not resolved above it. The repetition bound $0.358$ has interval $[0.346,0.373]$, which clears its floor $0.016$ entirely. A lower bound sitting at the finite-sample floor reports a failure to detect, not an upper bound on what is there.

Nor can the routing overhead be used to rescue the claim. The surface-code circuit is routing-dominated on a heavy-hex map, using $1125$ two-qubit gates against the repetition code's $48$, but extra noise does not necessarily increase distinguishability. Additional state-independent error randomises both classes together and can reduce it, so we report the comparison as a statistic-specific observation consistent with \cref{eq:dzlaw}, and not as a measured suppression of the leak. The needed bound can come from an exact likelihood ratio or a calibrated generative model, with a sufficient statistic admitting two-sided bounds as another option. Establishing a suppression would require a confidence upper bound on $\TV_{\rm surf}$, which we do not have.

\begin{figure}[t]
\centering
\includegraphics[width=0.62\linewidth]{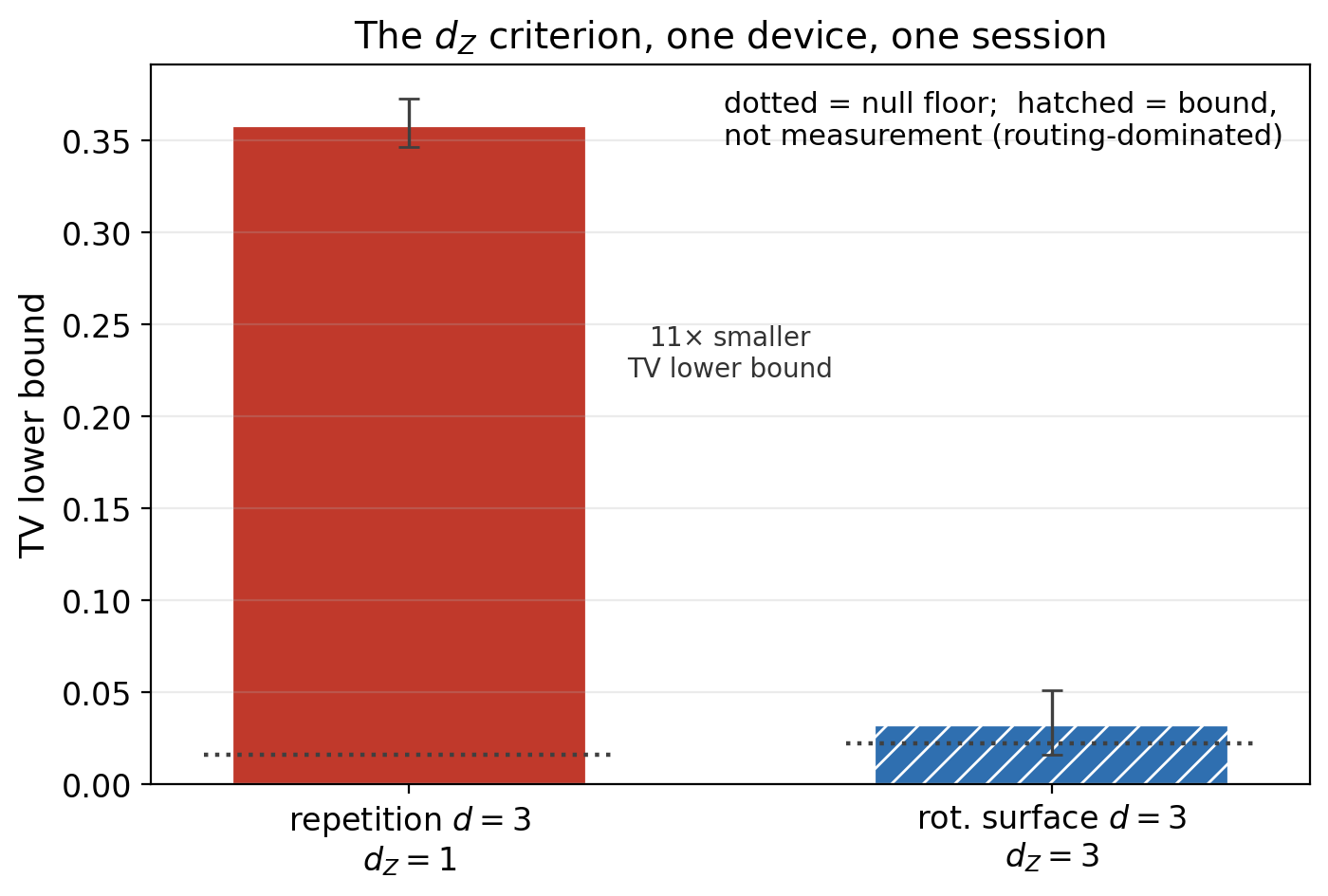}
\caption{The $d_Z$ criterion, measured. A $d_Z=1$ code ($n=3$ repetition) and a $d_Z=3$ code (distance-3 rotated surface) were measured on the same device in the same session. Dotted lines are null floors, and the hatched bar marks the surface code as a bound instead of a measurement because its circuit is routing-dominated on heavy-hex.}
\label{fig:hw_e2}
\end{figure}

\begin{figure}[t]
\centering
\includegraphics[width=\linewidth]{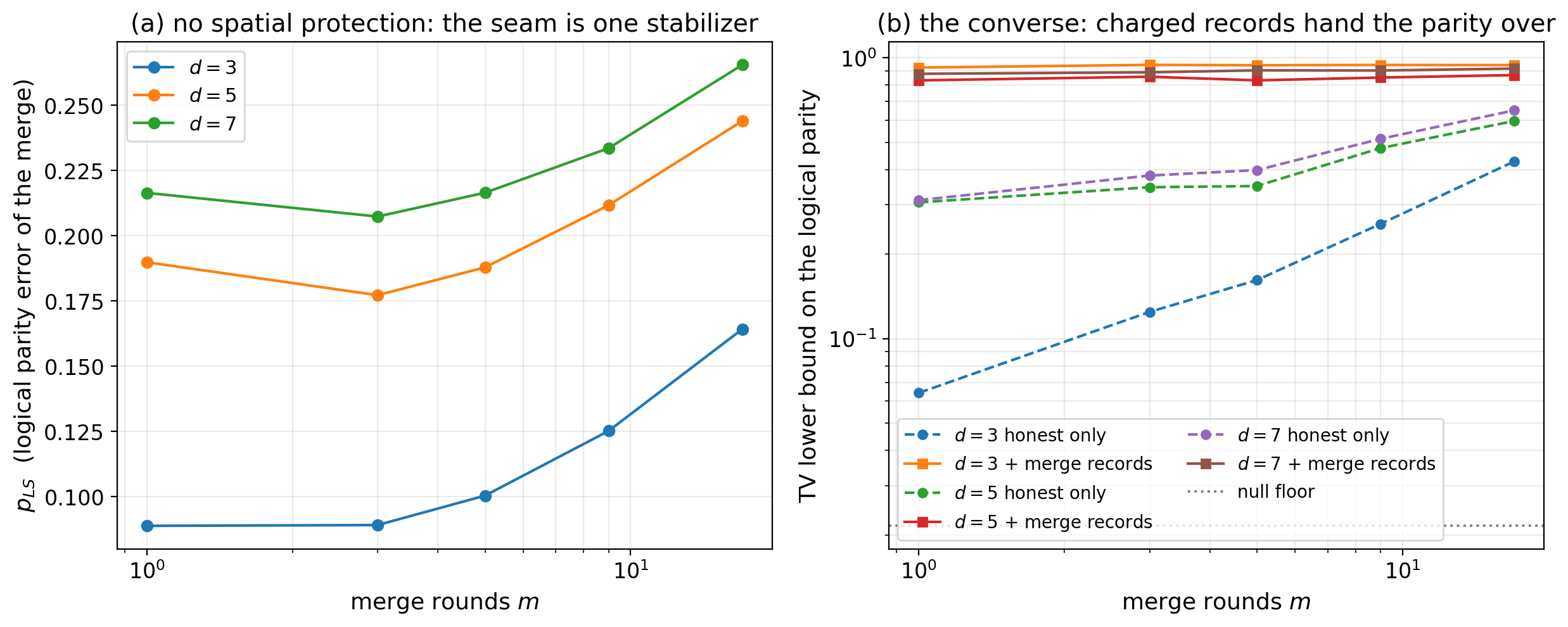}
\caption{The converse, measured. (a) For $d=3,5,7$, merged-parity error is plotted against merge durations of $1,3,5,9$ and $17$ rounds on a line where the seam is one stabilizer at every $d$, so the curves do not separate downward. At $d=7$ the error runs $0.216$, $0.207$, $0.217$, $0.234$ and $0.266$ across that grid. (b) The adversary reading logical parity
with (solid) and without (dashed) the merge records.}
\label{fig:hw_e3}
\end{figure}

\subsection*{The order in $\gamma$, measured at fixed code}

The comparison above varies the code and the circuit together, so it separates two codes but does not measure the
exponent in \cref{eq:dzlaw}. Measuring the exponent requires the code, the round count and the measured operators to
be held fixed while the damping exposure alone is varied. We use an explicit idle of duration $\tau$ on the data qubits in every round. It is placed immediately before the CNOT layers that imprint the data state on the ancillas, so every recorded check reads a state exposed for the full $\tau$. The per-round damping probability is then
\begin{equation}
\gamma(\tau)=1-\exp\!\big[-(t_{\rm round}+\tau)/T_1\big],
\label{eq:gammatau}
\end{equation}
with $t_{\rm round}=\SI{5.47}{\micro\second}$ read from the backend's own instruction durations (two-qubit
\SI{68}{\nano\second}, measurement \SI{2652}{\nano\second}, reset \SI{2684}{\nano\second}) and $T_1$ from the per-qubit calibration recorded at submission. Each repetition arm is pinned to a line selected for how evenly its qubits decay, so a single rate describes the whole line and the table below quotes it: $\SI{328}{\micro\second}$ at $n=5$ and $\SI{370}{\micro\second}$ at $n=3$. The per-qubit rate vector of the Methods is what the unpinned $[[4,1,2]]$ arm requires, since its four qubits span a factor of $3.5$. The circuits are pinned to a chosen path, so the repetition arms route to $12$ and $24$ two-qubit gates with no routing overhead at all. At every exposure, the two logical-label circuits compile to identical depth and gate count.

\Cref{prop:attained} gives this family's damping-only transcript law in closed form, \cref{eq:repclosed}, with no free parameter to fit, and its $\gamma\to0$ limit $nT\gamma$ carries both the $d_Z=1$ exponent and its coefficient. The
exponent is one number out of that curve. The curve is the whole prediction, and it is the stronger thing to test.

\begin{center}\small
\begin{tabular}{lccccc}
\toprule
$\tau$ (\si{\micro\second}) & $0$ & $3$ & $8$ & $18$ & $35$\\
\midrule
\multicolumn{6}{l}{\emph{repetition $n=5$, $T=1$, $T_1=\SI{328}{\micro\second}$, $40$--$150$k shots per class}}\\
$\gamma$ & $0.0166$ & $0.0255$ & $0.0403$ & $0.0691$ & $0.1162$\\
selection-free estimate & $0.035$ & $0.084$ & $0.175$ & $0.284$ & $0.431$\\
$95\%$ one-sided lower bound & $0.030$ & $0.078$ & $0.165$ & $0.271$ & $0.417$\\
\cref{eq:repclosed} & $0.080$ & $0.121$ & $0.186$ & $0.301$ & $0.461$\\
estimate $\div$ closed form & $0.43$ & $0.69$ & $0.94$ & $0.94$ & $0.94$\\
\bottomrule
\end{tabular}
\end{center}

Over the three exposures where the estimator recovers more than $80\%$ of the curve, the measured log-log slope is $0.85\pm0.03$ against the $0.86$ that \cref{eq:repclosed} itself predicts over that range of $\gamma$. Both slopes come from the same unweighted least-squares fit of $\log\TV$ on $\log\gamma$ across those exposures, the measured one through the point estimates and the predicted one through the closed form at the same $\gamma$. The quoted uncertainty is the standard error of that slope from its residuals. The $n=3$ arm gives $1.07\pm0.07$ against $0.93$, but neither number is the integer $d_Z=1$, as expected because that integer is the $\gamma\to0$ limit while this device's smallest reachable $\gamma$ is $0.015$, where the curve has already left the asymptotic regime. The curve is testable here, and the $n=5$ arm reproduces it at a constant $0.94$ of its height across the upper threefold range of $\gamma$. The ratio falls below that level only where the signal itself approaches the finite-sample floor. The exponent measures the shape of the curve. A constant ratio is informative because the shape is recovered independently of the overall efficiency with which the estimator converts a distribution difference into a threshold statistic.

\paragraph{At $d_Z=2$ the adversary's model class stops being a free choice.}
The same sweep run on a $[[4,1,2]]$ code carries a second-order leak, and it is not visible in the same way. For a
repetition code the two logical classes differ in the \emph{mean} detector density, which any linear score function
resolves. For the $[[4,1,2]]$ code, the classes $(\lvert0000\rangle+\lvert1111\rangle)/\sqrt2$ and $(\lvert1010\rangle+\lvert0101\rangle)/\sqrt2$ have equal mean excitation but differ in the \emph{correlations} between detectors. A
logistic regression recovers $19$--$22\%$ of what a boosted tree recovers from the same shots, reading $0.02$--$0.10$
where the exact law gives $0.07$--$0.47$. On hardware the model family selected on the validation split was the linear
one in one of ten $d_Z=1$ cells and in none of nine $d_Z=2$ cells. An adversary bound reported against a linear score function is therefore not conservative at higher $d_Z$ because it is blind, so the model class has to be selected instead of fixed. This is why the estimator here chooses among four families on a split disjoint from the
one that scores it.

\subsection*{The converse, and the cost of privacy}

Merging two repetition patches by lattice surgery puts $\Zbar_A\Zbar_B$ into the measured algebra
(\cref{fig:hw_e3}). Handing the
adversary the merge records in addition to the split-phase records raises the TV lower bound on the logical parity
from
$0.064$ to $0.924$ at $d=3$, and from $0.312$ to $0.878$ at $d=7$. Two qualifications belong with these numbers. First, on a linear geometry the seam is one stabilizer at every distance, so there is no spatial protection to demonstrate, and the merged-parity error indeed worsens with distance ($p_{LS}=0.100$ at $d=3$, $0.217$ at $d=7$). This is the regime of \Cref{cor:rawreports}, not of the exponential converse. Second, the ``honest only'' arm is itself above the floor because the patches are repetition codes and $d_Z=1$ codes leak. On this device, the converse describes a jump on top of a baseline that is already not private and should be reported that way.

The price of the mitigation is visible in the decoded logical error rate (\cref{fig:hw_cost}), which is a second and
independent exposure
channel. At $n=3$ the plain memory fails at $3.0\times10^{-4}$ for $\zL$ and $3.5\times10^{-2}$ for $\oL$, a factor of $118$, so an observer who sees only \emph{failure rates}, never a syndrome, also learns the logical state. Randomized encoding equalises these to $1.50\times10^{-2}$ and $1.58\times10^{-2}$, within $6\%$, by moving both inputs to their mean, so the cheaper input pays the whole bill.

\subsection*{The theorem's own convergence condition, on this device}

At the most generous admissible model constants, the Koteck\'y--Preiss condition evaluated on the live calibration of the qubit lines actually used gives $\epsilon_0=8.9\times10^{-2}$ against a requirement of $4.1\times10^{-3}$. The equivalent value is $ea=22.4$, whereas $ea<1$ is needed. The two numbers measure different things. The $21.5\times$ is the ratio $8.9\times10^{-2}$ over $4.1\times10^{-3}$, by which the measured $\epsilon_0$ exceeds the largest value the condition admits, and it is the factor every channel rate would have to fall by. The $22.4$ is the polymer activity itself, and it is not $21.5$ because $\etabar$ is nonlinear in $\epsilon_0$. Our sufficient certificate therefore misses on today's hardware, with the depolarizing term dominant, and the diagnostic is not a certification. The condition $ea\ge1$ means the expansion is not known to converge, which is not the same as a leak, and the operational evidence in that regime is the measurement above. Two properties of the number are worth stating. The asymmetry between $P(1\mid0)$ and $P(0\mid1)$ is itself a logical-state-dependent channel that must be measured directly because the device \texttt{Target} exposes only a symmetric readout error and therefore leaves it \emph{understated} here. The reported $\epsilon_0$ uses the worst qubit on each line, which is the right choice for a bound that must hold everywhere on it.

\section*{Relation to correctability--privacy duality, and scope}
\label{sec:complementarity}

A standard route leads from error correction to privacy, and it applies here, so we state the resulting corollary, then
show that it is strictly weaker than what this paper establishes.

For a Stinespring isometry $V:L\to \mathsf{C}\otimes\mathsf{P}$, the correctability--privacy duality states that $L$ is correctable from $\mathsf{C}$ if and only if it is private with respect to $\mathsf{P}$. The approximate version turns $\epsilon$-correctability into $2\sqrt{\epsilon}$-privacy in diamond norm~\cite{kretschmann2008,beny2010,schumacher1996}.

A broadcast classical register consists of two orthogonal, perfectly correlated registers, not one register on both sides of a cut, and both registers occur in the following dilation.
\begin{equation}\label{eq:dilation}
V\lvert\psi\rangle=\sum_{u,\alpha}K_{u,\alpha}\lvert\psi\rangle_{Q_{\rm f}}\otimes\lvert u\rangle_{U_D}
\otimes\lvert u\rangle_{U_O}\otimes\lvert\alpha\rangle_E,
\end{equation}
Here $U_D$ and $U_O$ are the decoder's and observer's respective transcript copies, while $E$ is the remaining environment, and the dilation is an isometry whenever $\sum_{u,\alpha}K_{u,\alpha}^\dagger K_{u,\alpha}=I$, with the bipartition $\mathsf{C}=Q_{\rm f}U_D$ and $\mathsf{P}=U_OE$. Tracing out $\mathsf{P}$ returns the transcript-assisted memory channel $\mathcal{N}_{\mathsf C}(\rho)=\sum_{u,\alpha}K_{u,\alpha}\rho K_{u,\alpha}^\dagger\otimes\lvert
u\rangle\!\langle u\rvert_{U_D}$ because the perfect correlation with $U_O$ decoheres $U_D$ into a classical register. Tracing out $\mathsf{C}$ and then $E$ returns exactly the transcript channel $\mathcal{M}$ of the Methods, since $\sum_\alpha K_{u,\alpha}^\dagger K_{u,\alpha}=E_u$.

A below-threshold memory is correctable from $\mathsf{C}$ because that is what the fault-tolerance threshold theorem supplies, and it is a statement about the recovered logical \emph{qubit} $L_{\rm out}$, not about the classical readout $Y$. Hence:

\begin{proposition}[generic transcript privacy is a corollary]\label{prop:corollary}
If the transcript-assisted memory is $\epsilon_d$-correctable in diamond norm, then
$\|\mathcal{M}-\mathcal{C}\|_\diamond\le2\sqrt{\epsilon_d}$, and in particular a surface-code memory below threshold, with
$\epsilon_d=e^{-\Theta(d)}$, has transcript privacy $e^{-\Theta(d)}$.
\end{proposition}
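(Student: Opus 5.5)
The plan is to apply the approximate correctability--privacy duality directly to the dilation \cref{eq:dilation}, and then check that the complementary channel it controls is the transcript channel $\mathcal{M}$, not something larger.

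\emph{Step one: the complementary channel.} The Stinespring isometry $V$ of \cref{eq:dilation} is built with the cut $\mathsf{C}=Q_{\rm f}U_D$, $\mathsf{P}=U_OE$. Tracing out $\mathsf{C}$ gives a channel $\mathcal{N}_{\mathsf P}$ from $L$ to $U_OE$. Tracing further over $E$ yields $\mathcal{M}$, because $\sum_\alpha K_{u,\alpha}^\dagger K_{u,\alpha}=E_u$. Partial trace is a channel, so by data processing any diamond-norm bound on $\mathcal{N}_{\mathsf P}$ passes to $\mathcal{M}$.

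\emph{Step two: invoke the duality.} The hypothesis is that there is a recovery $\mathcal{R}$ on $\mathsf{C}$ with $\|\mathcal{R}\circ\mathcal{N}_{\mathsf C}-\mathrm{id}_L\|_\diamond\le\epsilon_d$. In the form of \cite{kretschmann2008,beny2010}, the duality then gives a fixed state $\omega$ on $\mathsf{P}$ such that $\|\mathcal{N}_{\mathsf P}-\mathcal{C}_\omega\|_\diamond\le2\sqrt{\epsilon_d}$, where $\mathcal{C}_\omega(\rho)=\Tr(\rho)\,\omega$. Its proof runs through Uhlmann's theorem and the Fuchs--van de Graaf inequality. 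The main care needed here is normalisation. The paper's $\epsilon$ may be stated as a diamond distance or as a worst-case entanglement infidelity, and these convert into each other at the cost of a square root or a factor of two. I would fix the convention to be the one under which the cited constant $2\sqrt{\epsilon}$ holds, and absorb any discrepancy into the $\Theta$. Marginalising $\omega$ to $U_O$ defines the fixed law of $\mathcal{C}$. Then $\|\mathcal{M}-\mathcal{C}\|_\diamond\le2\sqrt{\epsilon_d}$ by step one.

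\emph{Step three: the surface-code instance.} I would take the threshold theorem as the source of $\epsilon_d=e^{-\Theta(d)}$ for a below-threshold rotated surface-code memory over $T=\Theta(d)$ rounds. The decoder reads $U_D$ and applies a Pauli correction to $Q_{\rm f}$, and this is a recovery acting on $\mathsf{C}$. Then $2\sqrt{\epsilon_d}=e^{-\Theta(d)/2}=e^{-\Theta(d)}$.

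The step I expect to be the main obstacle is checking that the threshold theorem really delivers correctability in the diamond (entanglement-preserving) sense. Logical-error-rate statements often concern only Pauli-basis inputs. I would first try to reduce to a Pauli-twirled logical channel and bound its diamond distance by the sum of the logical $\bar X,\bar Y,\bar Z$ error probabilities. If coherent noise rules out twirling, I would instead assume diamond-norm correctability explicitly, as the statement's hypothesis already does. The surface-code clause then becomes conditional on a threshold theorem of that strength.
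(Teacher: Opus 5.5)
Your proposal is correct and follows essentially the paper's own route. You build the two-copy dilation with the cut $\mathsf{C}=Q_{\rm f}U_D$, $\mathsf{P}=U_OE$, invoke the approximate correctability--privacy duality to obtain the $2\sqrt{\epsilon_d}$ bound on the complementary channel, push that bound down to $\mathcal{M}$ by tracing out $E$, and take $\epsilon_d=e^{-\Theta(d)}$ from the threshold theorem read as a diamond-norm statement about the recovered logical qubit. One small detail is missing from both your proposal and the paper: the marginal of $\omega$ on $U_O$ need not be diagonal in the $u$ basis, so to obtain a classical fixed law you should also dephase $U_O$, which costs nothing by data processing because the output of $\mathcal{M}$ is already diagonal.
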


The existing correctability result in \Cref{prop:corollary} and~\cite{kretschmann2008,beny2010} supplies exponential privacy for ordinary telemetry of a correctable memory.

\subsection*{Where the corollary does not reach}

Correctability is sufficient for privacy, never necessary. Privacy is \emph{anisotropic}, as \Cref{thm:main} shows by assigning each logical axis its own distance, and a code can therefore hide one axis far better than another, whereas correctability sees only $d_{\min}=\min(d_X,d_Y,d_Z)$. Two notions must therefore be kept apart, and we name them here for use throughout.

\begin{description}[leftmargin=1.5em,itemsep=1pt,font=\normalfont\itshape]
\item[Full logical-channel privacy.] $\|\mathcal M-\mathcal C\|_\diamond$ covers \emph{arbitrary} logical inputs including superpositions, mixtures and reference-entangled states, and is controlled by $d_{\min}$, which is the ordinary code distance for a CSS code, while \Cref{thm:main} bounds this quantity and \Cref{prop:corollary} also delivers it.
\item[Basis-label privacy under amplitude damping.] $\TV(P_{0_L},P_{1_L})$ for the two $\Zbar$-\emph{eigenstates},
under the specified damping transcript model. The first permitted order of this quantity is set by $d_Z$ alone (\Cref{cor:basislabel}), a distance that can exceed $d_{\min}$ by an arbitrary amount, and this order is what \cref{eq:dzlaw} computes and what the hardware measures.
\end{description}

The second guarantee protects one logical axis, not the channel, and it is therefore strictly weaker and protects exactly the axis that a $Z$-basis memory exposes to amplitude damping. Two families make the separation sharp. Both have $d_{\min}=1$, so \Cref{prop:corollary} is vacuous for both and neither has any full-channel guarantee at all.

\begin{table}[htbp]
\centering\small
\begin{tabular}{@{}l cccc >{\raggedright\arraybackslash}p{0.34\textwidth}@{}}
\toprule
family & $d_X$ & $d_Y$ & $d_Z$ & $d_{\min}$ & basis-label leak\\
\midrule
repetition, $n$ qubits & $n$ & $n$ & $1$ & $1$ & $nT\gamma+O(\gamma^2)$, first order, attained\\
\addlinespace[2pt]
phase-flip (dual repetition), $n$ qubits & $1$ & $n$ & $n$ & $1$ & \textbf{exactly zero} for the syndrome record, and $\gamma^{\,n}$ once jumps are exposed\\
\bottomrule
\end{tabular}
\caption{The two families that separate the code distance from the $Z$-distance. Both have $d_{\min}=1$, so
correctability supplies no full-channel guarantee for either, yet their basis-label leaks sit at opposite
extremes. Every distance is an exact enumeration of the corresponding logical coset
(\texttt{sim/hw/coset\_distances.py}).}
\label{tab:families}
\end{table}

\Cref{tab:families} collects them, and every entry is an exact enumeration of the corresponding coset rather than a construction-based assertion. Supplementary Note~11.11 tabulates all three distances for these families and for the rotated surface code, using the same code objects that the simulations use. The surface code is itself mildly anisotropic ($d_Y=2d-1$), which is harmless here
but confirms that the componentwise statement is not an artefact of the two degenerate families.

The repetition code stores its logical basis in the damping basis, where a single-qubit population reveals the label, and on hardware the honest transcript recovers that label with a TV lower bound of $0.927$ at $n=21$ under randomised acquisition.

The phase-flip code is its mirror image, and its codewords must be stated carefully because the natural guess is wrong, while its stabilizers are $X_iX_{i+1}$ and its $Z$-stabilizer group is \emph{trivial}, so $\Zbar=Z_1\cdots Z_n$ and $\Xbar=X_1$. The basis-label statement concerns the $\Zbar$-eigenstates. These eigenstates are therefore the cat states
\begin{equation}
\lvert0_L\rangle,\lvert1_L\rangle=\tfrac{1}{\sqrt2}\bigl(\lvert+\rangle^{\otimes n}\pm\lvert-\rangle^{\otimes
n}\bigr),
\end{equation}
and \emph{not} the product states $\lvert\pm\rangle^{\otimes n}$, which are the eigenstates of $\Xbar=X_1$ and agree on \emph{every} $Z$-moment. On the cat pair every moment $\langle Z_S\rangle$ with $|S|<n$ agrees and only
$\langle Z_1\cdots Z_n\rangle=\pm1$ separates them, so no damping-induced record distinguishes them below order
$\gamma^{\,n}$. The coset $\Zbar\cdot\Sgroup$ is the single element $Z_1\cdots Z_n$, giving $d_Z=n$ with multiplicity $1$. The trivial $Z$-stabilizer group makes this exact, and direct enumeration of all $\langle Z_S\rangle$ confirms that the first differing weight is exactly $n$ for $n=3,4,5$.

In fact the separation is sharper than $\gamma^{\,n}$, and in the direction that helps. Every syndrome outcome probability is \emph{identically} independent of the logical input, so the \emph{syndrome} transcript carries exactly zero information for every $\gamma$ and every number of rounds, provided the damping jumps themselves are not exposed (\Cref{prop:xonly}). This follows because this code's measured algebra is entirely $X$-type and the adjoint of amplitude damping maps an $X$-type Pauli to a scalar multiple of itself. The syndrome blindness does not arise because the states are close. One damping layer at $\gamma=0.1$ leaves the two damped codewords $1.70$ apart in trace distance at $n=3$, near the maximum of $2$, so a direct measurement separates them almost perfectly. Only the syndrome record is blind, whereas adjoining the environment's jump pattern makes the leak return at exactly $\gamma^{\,n}$ with coefficient $1$ (\Cref{prop:jump}). Which records are exposed, not how good the code is, decides the outcome.

Note what this does \emph{not} say, since $d_X=1$ leaves the phase-flip code without full logical-channel privacy because a record reading $X_1$ would expose an arbitrary logical input immediately, while the exponential statement is about the $\Zbar$ axis only. The point is that a guarantee stated through one distance cannot express this result, while frameworks that \emph{can} speak axis-by-axis still have a hypothesis that fails on this very family (see below).

No sharpening of \Cref{prop:corollary} recovers these cases because an implication with a false hypothesis says nothing, and we verify below that the same objection applies to the subalgebra form of the duality.
What is needed is a criterion reading the code axis by axis, which is what \Cref{thm:main} provides and
\cref{eq:dzlaw} sharpens. The criterion is channel-relative by construction, with the \emph{amplitude-damping} leak governed by $d_Z$ because that channel acts in the $Z$ basis, while a different dominant channel selects a different axis and a different distance. That is the content of the anisotropic statement, not a qualification of it.

\subsection*{Operator-algebra QEC: a neighbouring notion, and why it is not this one}

Operator-algebra
quantum error correction treats general observable algebras in the Heisenberg
picture~\cite{beny2007oaqec,beny2007algebras}, approximate versions exist~\cite{beny2010algebras,beny2010}, and a
complementarity between private and correctable \emph{subalgebras} is available~\cite{crann2016}, generalising the
subsystem statements~\cite{kribs2008private,kretschmann2008}. In that language, the computational-basis label is represented by the commutative algebra $\mathcal{A}=\mathrm{span}\{I_L,\Zbar\}$. The relevant comparison is with private-algebra privacy for $\mathcal{A}$, whose structural definition differs from the state-distinguishability notion used here.

\emph{A private algebra need not keep its classical centre secret.} Privacy of $\mathcal{A}$ in the operator-algebra sense requires the channel's Heisenberg range, the set of observables it can produce when run backwards on the output, to land in the commutant $\mathcal{A}'$, thereby hiding the \emph{noncommutative} information carried by $\mathcal{A}$ while making no promise about the classical information in the centre $Z(\mathcal{A})=\mathcal{A}\cap\mathcal{A}'$. In plain terms, the algebraic notion hides which superposition was stored and says nothing about which of two basis labels was stored. Here $\mathcal{A}$ is commutative, so it is its own commutant and its centre is all of it, and the notion promises nothing at all about the label. For one logical qubit, $\mathcal{A}=\mathrm{span}\{I_L,\Zbar\}$ is maximal abelian. Thus, $\mathcal{A}'=\mathcal{A}$ and $Z(\mathcal{A})=\mathcal{A}$, so the centre is the entire algebra and the basis label sits precisely where the notion promises nothing. The point is settled by a two-line example in which complete $Z$-dephasing publishes the $Z$-basis populations, while its Heisenberg range lies in the diagonal algebra, which is $\mathcal{A}'$, so $\mathcal{A}$ is private by the standard definition. Yet the two $\Zbar$-eigenstates are mapped to perfectly distinguishable outputs, $\TV=1$ (\texttt{sim/hw/oaqec\_check.py}), and the two notions therefore do not coincide for the commutative logical algebra considered here. In particular, private-algebra privacy does not protect its classical centre.

\emph{What this paper proves is a state-distinguishability statement.} The state-distinguishability quantity is an operational bound on what any test of the transcript can achieve. Our object is $\TV\bigl(P_{U\mid0_L},P_{U\mid1_L}\bigr)$ together with an explicit rate $\kappa_{d_Z}=C\,\mathrm{poly}(d,T)\,\theta^{d_Z/\ell_\ast}$ resolved by the code's coset geometry and the detector graph. The proof derives both the quantity and the rate directly from code coset geometry and the detector graph.

\emph{The algebraic computation provides a comparison and nothing more.} For completeness, we evaluate the Bény--Kempf--Kribs condition for $\mathcal{A}$ to be correctable under the \emph{physical} amplitude-damping channel. The condition is $\code K_a^\dagger K_b\code\in\mathcal{A}'$ for all $a,b$, and maximal abelianness of $\mathcal{A}$ requires every $\code K_a^\dagger K_b\code$ to have vanishing $\Xbar$ and $\bar Y$ components. A Kraus label records which physical qubits jumped, so a pattern of $j$ simultaneous jumps carries amplitude $\gamma^{\,j/2}$. The block $\code K_a^\dagger K_b\code$ acquires an $\Xbar$ component only when one of the two patterns covers a representative of the coset $\Xbar\cdot\Sgroup$, because a jump split across the two patterns annihilates the code space. The enumeration below therefore runs to $d_X$ simultaneous jumps, which is where the leading term lives (\texttt{sim/hw/oaqec\_check.py}).

\begin{center}\small
\begin{tabular}{lcccc}
\toprule
code & $d_Z$ & $d_X$ & largest $|\Xbar|,|\bar Y|$ at $\gamma=10^{-2}$ & order\\
\midrule
phase-flip, $n=3$ & $3$ & $1$ & $4.95\times10^{-2}$ & $\gamma^{1/2}$\\
phase-flip, $n=4$ & $4$ & $1$ & $4.93\times10^{-2}$ & $\gamma^{1/2}$\\
repetition, $n=3$ & $1$ & $3$ & $5.00\times10^{-4}$ & $\gamma^{3/2}$\\
repetition, $n=5$ & $1$ & $5$ & $5.00\times10^{-6}$ & $\gamma^{5/2}$\\
rotated surface, $d=3$ & $3$ & $3$ & $2.43\times10^{-4}$ & $\gamma^{3/2}$\\
\bottomrule
\end{tabular}
\end{center}

No family here is exactly correctable, and the order of the violation is $\gamma^{\,d_X/2}$ in every row, with coefficient $1/2$ for the repetition family and $1/4$ for the surface code. The phase-flip family sits at $d_X=1$, so a single-qubit jump $K_1=\lvert0\rangle\!\langle1\rvert$ already carries $\Xbar$ weight. That is the substance of the comparison, because the algebraic question is governed by $d_X$ while the transcript question this paper answers is governed by $d_Z$, and the table's own columns show the two running in opposite directions. The comparison evaluates whether $\Zbar$ is recoverable on the \emph{output} side of the physical channel, a question defined across a different cut from transcript privacy. The physical channel's environment is closer to the full jump record than to the syndrome record, and \Cref{prop:xonly,prop:jump} show that the two cuts give \emph{different answers on this very family}. The syndrome record is exactly blind, while the syndrome-plus-jump record leaks at $\gamma^{\,d_Z}$. A statement about one does not transfer to the other.

\emph{The resulting claims.} The first is an \emph{operational} pairwise and channel distinguishability bound distinct from algebraic containment. The second is an \emph{axis-resolved} exponent in $d_X,d_Y,d_Z$. The third is an \emph{explicit} detector-graph and Koteck\'y--Preiss rate distinct from an abstract distance. The fourth is a \emph{sufficient criterion for the transcript that invokes no correctability of any axis}. This criterion applies to the phase-flip family and other families without an axis-correctability hypothesis.

\subsection*{Contributions of the axis-resolved framework}

The contribution is the anisotropy and its consequences because \Cref{thm:main} charges each logical Pauli component the distance of \emph{its own} coset instead of charging all three components the code distance. A guarantee stated through $d_{\min}$ alone cannot express this distinction, and the subalgebra route does not reach it on the families that exhibit it. \Cref{prop:dzlaw,prop:attained,prop:xonly,prop:jump} determine the order of the damping leak and the conditions under which it is attained, and they also identify the exposed records that decide between the two outcomes. The regime $d_Z\gg d_{\min}$ is not a pathological corner because biased-noise architectures actually build it, and \Cref{lem:converse} supplies the direction missing from the one-sided duality and names which records expose the input and at what rate. The hardware section measures the leak on a real device and, with matched simulation, it reproduces the state-dependent signal from damping alone, locates the exposure in the syndrome cycle, and prices the cost of removing it. The hypothesis $\GB\subseteq\Sgroup$ is decidable in polynomial time, and the locality indicator is measurable, but a correctability statement supplies neither property.

For Pauli noise the syndrome distribution is \emph{exactly} independent of the logical input by the stabilizer formalism. The contribution treats coherent faults, temporally correlated faults of the bounded-persistence class that PWTS resums, and arbitrary logical inputs, whose fault-path amplitudes require controlling all four code-space matrix elements through \Cref{lem:cosetweight}.

\subsection*{Which secret the theorem covers}

A separate line of work shows that syndrome data can expose \emph{which computation is running}. Shukla, Browne and Nishio~\cite{shukla2026decoder} identify ``gate fingerprints'' in the syndrome stream of Clifford$+T$ surface-code computations, and reconstructing patch activity or dataflow is a further distinct attack surface. These results concern a different secret from the one \Cref{thm:main} covers, and the distinction determines an
operator's safe publication scope. \Cref{tab:scope} sets the two side by side.

\begin{table}[htbp]
\centering
\small
\begin{tabular}{lc}
\toprule
Secret & Covered by \Cref{thm:main} \\
\midrule
Logical state, fixed input-independent schedule & yes, under H1$^\ast$ + PWTS + smallness \\
Identity of the logical gate or circuit being executed & \textbf{no}~\cite{shukla2026decoder} \\
Patch activity, dataflow, resource-usage pattern & \textbf{no} \\
Adaptively or maliciously chosen schedule & \textbf{no} (excluded by hypothesis) \\
Final logical output $Y$ & \textbf{no} (delivered to the user, outside $U$) \\
Direct physical or analog probing of data qubits & \textbf{no} \\
\bottomrule
\end{tabular}
\caption{Scope. The theorem protects the logical input against an observer of the execution transcript on a fixed
schedule. The theorem does not establish telemetry safety because a transcript can hide the logical state and still reveal the algorithm.}
\label{tab:scope}
\end{table}

The practical reading is narrower than ``syndromes can be released''. On a fixed schedule, a code with large
$d_{\min}$ protects an arbitrary logical input, and a code with large $d_Z$ protects the computational-basis label
against amplitude damping specifically. The schedule itself, and everything inferable from it, is not protected at
all.

\section*{Discussion}
The execution transcript of a fault-tolerant computer is a security interface in its own right, and this work marks out two sufficient regimes for it. Syndromes and anything that is a fixed function of them are what we call correctable records, meaning records the decoder reads in order to correct without their carrying the logical state. Under the hypotheses above, the same code distance that corrects errors also hides them, because a contribution that depends on the logical input has to carry a charged operator across a distance-sized region, and the smallness condition makes such regions exponentially rare.
Records calibrated to a logical operator are not. A lattice-surgery parity readout is an intentional logical
measurement that happens to be logged, and it leaks accordingly. Which operators a static patch can place in its measured algebra is itself constrained, and those constraints are what decide where the boundary falls~\cite{shen2026nogo}, so the boundary runs through everyday fault-tolerant primitives instead of exotic ones. That is the relocation principle, and it is the structural content
of this work.

What the hardware adds is a correction to how such a result should be read. Missing the Koteck\'y--Preiss smallness requirement by $21.5\times$ on the measured calibration is a failure of our sufficient certificate, not a demonstration that the physical activity violates the condition. The theorem is conditional, and this sufficient certificate is not satisfied on a present-day superconducting processor. Under randomised, label-balanced acquisition, the honest transcript of a $d_Z=1$ memory identifies its logical input with a TV lower bound of $0.927$ at repetition length $n=21$. The guarantee is thus a design target, and our certificate for it is not met on the processor measured here. Three consequences are practical.

\emph{The code choice is the privacy control.} The basis-label leak order is the code's $Z$-distance. Under the evaluation rule established in the hardware section, the empirical TV lower bound for the $d_Z=3$ circuit on one chip is $11.2$ times smaller than the bound for the $d_Z=1$ circuit. A repetition code stores its logical basis in the noise-preferred basis and is the degenerate worst case, whereas a surface code is not, so the code choice is what sets the leak order, making privacy cheap to obtain and easy to lose.

\emph{The exposure appears to be set by the syndrome cycle instead of the gate set.} On the measured device a round lasts \SI{5.6}{\micro\second}, of which the data qubits spend \SI{5.3}{\micro\second} idle while ancillas are measured and reset, and this idle duration is $40\times$ the gate exposure. The two-qubit layers carry about $2.5\%$ of the round's damping exposure and the idle windows carry the rest, so matched simulation attributes the observed asymmetry to damping during that window. The exposure sweep varies that window directly, and the hardware data are consistent with damping being the dominant mechanism. 
\emph{A cheap mitigation exists, and it is not free.} The opposite codeword is physically prepared under a secret bit and relabelled afterwards. Randomized encoding costs a transversal single-qubit layer and no two-qubit gates ($+5$ one-qubit gates, $+1$ depth at $d=5$) and returns the empirical bound to the statistical floor on good qubit lines when measured on the basis-state pair. On degraded lines, a residual survives, so the claim is reduction by more than an order of magnitude, not removal, while the mitigation's price appears in the logical error rate. The inputs are equalised by moving both to their mean, so whichever was cheaper pays. Two deployment specifications apply. The demonstrated randomization protects the computational-basis label. Protection of an arbitrary logical input uses the full logical Pauli twirl applied physically with a secret key tracked downstream. Relabelling in software alone achieves nothing because a bookkeeping convention leaves the hardware state, and hence the transcript, unchanged.

The guarantee concerns the logical \emph{state} under a fixed, input-independent schedule. Under the same hypotheses, a large $d_{\min}$ protects an arbitrary logical state, while a large $d_Z$ protects the computational-basis label against amplitude damping. Circuit identity, patch activity, adaptive schedules, the final output and direct physical probing are separate observables. Syndrome data can carry gate fingerprints revealing the circuit~\cite{shukla2026decoder}. The schedule and everything inferable from it remain observable.

The threat model is deliberately strong because it gives the adversary the entire transcript and full knowledge of the device, and it also places no bound on the adversary's computation. A weaker adversary seeing only a sampled or coarse-grained log is covered automatically by
the data-processing inequality. The model excludes an adversary able to place a probe on a data qubit, because placing such a probe is physical access instead of a transcript channel. It says nothing about the computational output, which the user must protect by other means.

Relation to prior work. The generic statement is that a correctable memory's transcript is exponentially private. The correctability--privacy duality of approximate quantum error correction~\cite{kretschmann2008,beny2010,schumacher1996} is the closest formal neighbour, and \Cref{prop:corollary} records this existing result. What does not follow from that duality is the axis-by-axis anisotropy of privacy, in which the basis label is governed by $d_Z$ instead of by the code distance, and the two can be arbitrarily far apart. Output-level quantum differential privacy compares two output states under a divergence~\cite{hirche2023,quantumblackwell2025}, while a line of work turns a device's own noise into a privacy budget through inherent gate and readout noise~\cite{zhong2024controllable,ju2024harnessing} and through projection~\cite{li2024projection}. The same line of work uses the error-correction layer itself~\cite{zhong2024,zhong2024dpdqc}, and the connection between the two fields is surveyed in~\cite{zhao2024bridging}, but every one of those budgets is spent at the output. Our object is the internal multi-round transcript, and our mechanism is the homological cost of logical charge instead of an added noise budget. Our bound implies the usual approximate-privacy statements, including $(0,e^{-\Theta(d)})$-differential privacy, because total variation controls every binary test. The operative contribution is the transcript distinguishability bound, with the usual approximate-privacy statements recorded as corollaries.

For delegated and blind quantum computation, the theorem addresses whether the error-correction layer beneath logical encryption exposes the input. Under the stated hypotheses the syndrome telemetry of an honest server is exponentially uninformative about the logical input, while a server exposing a calibrated logical record leaks. Circuit blindness is supplied by the cryptographic protocol~\cite{shukla2026decoder}.

Extending the theorem to other topological and quantum low-density par\-i\-ty check codes~\cite{bravyi1998,bombin2006,bombin2007} should preserve the charge-costs-distance mechanism with a code-dependent constant, while on the empirical side, the measurement that closes the damping check for the repetition family is the fixed-code exposure sweep reported in the hardware section. The sweep, which holds the code fixed while varying the per-round idle duration, uses randomly interleaved and label-balanced acquisition to fit $\mathrm{d}\log\TV/\mathrm{d}\log\gamma$, and it has now been run with an $n=5$ arm (\texttt{sim/hw/submit\_e5b.py}). That arm reproduces the parameter-free curve of \cref{eq:repclosed} at a constant $0.94$ of its height, with a finite-range exponent of $0.85\pm0.03$ against the $0.86$ that curve itself predicts over the reachable range of $\gamma$. Exact computation explains why those are finite-range values and not
integers, and the squeeze is worth recording because it bounds what any such experiment can deliver on present
hardware. At small $\gamma$, the local slope does converge to $d_Z$, but the total variation falls below the finite-sample floor for $d_Z\ge2$, while raising the exposure until it clears the floor pushes $\gamma$ out of the asymptotic regime. Outside that regime, the exact local slope for the $d_Z=3$ patch takes the successive values $2.18$, $1.73$, $0.98$, and $-0.08$ as $\gamma$ goes from $0.28$ to $0.62$ and the total variation saturates. A $d_Z=1$ slope is measurable in tens of seconds of processor time, and a $d_Z=1$ versus $d_Z=2$ contrast is reachable at a few hundred seconds. A $d_Z=3$ slope is not reachable on this device at this shot budget. 
As fault-tolerant machines move from the laboratory to shared infrastructure, their control telemetry becomes a security surface, so the useful question is which parts to expose on which codes and with what measured margin, not whether to expose telemetry. On one real processor, this work gives the size of the gap that remains, and its conditional answer for the surface-code memory supplies a criterion for the code, together with a mitigation whose cost is priced on the repetition-code pair measured here.

\section*{Methods}

\subsection*{The transcript instrument and the privacy metric}
We model a distance-$d$ rotated surface-code memory run for $T=\Theta(d)$ rounds on a fixed, input-independent schedule, and the logical input is one of the two basis states $\zL$, $\oL$. During execution the device emits a classical
\emph{transcript}
\begin{equation}
U=(S,A,R,\tau,J),
\end{equation}
collecting the syndrome history $S$, the decoder actions $A=f(S)$, the ancilla reset records $R$, the timing metadata
$\tau$, and any jump or herald records $J$.

Three registers must be kept apart, and conflating them is the source of a genuine ambiguity that we remove here.
\begin{itemize}[leftmargin=1.5em,itemsep=1pt]
\item $Q_{\rm f}$, the \emph{final physical data register}, consists of the $n=\Theta(d^2)$ data qubits that remain uncorrected at the end of the run.
\item $L_{\rm out}$, the \emph{recovered logical qubit}, is the quantum system obtained by applying to $Q_{\rm f}$ the Pauli-frame recovery $\mathcal{R}_u$ selected by the transcript. This is the object a memory is supposed to preserve,
and correctness of the memory is the diamond-norm statement
$\bigl\|\,\mathbb{E}_u\,\mathcal{R}_u\circ\mathcal{N}_u-\mathrm{id}_L\,\bigr\|_\diamond\le e^{-\Theta(d)}$, which is
what a fault-tolerance threshold theorem supplies. It is a statement about an arbitrary logical input, including
superpositions, not about a bit.
\item $Y$, the \emph{reported logical value}, is the classical outcome of measuring $L_{\rm out}$ in the logical $Z$ basis, produced only when the user asks for a computational-basis readout. $Y$ is a readout \emph{of} $L_{\rm out}$,
not $L_{\rm out}$ itself.
\end{itemize}
The transcript excludes all three, which means that $U$ contains neither $Q_{\rm f}$, nor $L_{\rm out}$, nor $Y$, and recovery is performed by the user or by a trusted agent holding $Q_{\rm f}$. In contrast, the observer considered in this paper sees only $U$.

The distinction is load-bearing twice over. The distinction makes the privacy claim a diamond-norm statement about the channel $L\to U$ instead of two probability distributions, so superpositions and mixtures are covered at the same rate. It also makes the correctable side of the complementarity argument (Relation section) a quantum subsystem instead of a classical bit, without which that argument could not even be posed.

We describe the transcript using detector events $\delta_t=s_t\oplus s_{t-1}$ of the three-dimensional decoding graph, and in these coordinates a persistent data error appears only when it starts and ends. It is not reported every round.

The map $\mathcal{M}:\rho\mapsto U$ is a quantum instrument, a completely positive map whose classical output is the
transcript and whose quantum output is discarded, followed by fixed classical post-processing. Writing $\{E_u\}$ for its effects, the probability of transcript $u$ on input $\rho$ is $\Tr(E_u\rho)$. The decoder action $A=f(S)$ and any timing or reset metadata are deterministic or input-independent functions applied on top, so the data-processing inequality prevents them from increasing distinguishability. The running example is the basis pair $\zL,\oL$. However, the bound itself is not limited to that pair (\Cref{lem:offdiag}). For that pair, privacy is the distinguishability of the two transcript laws
$\Pzero$ and $\Pone$ under total variation,
\begin{equation}
\TV(\Pzero,\Pone)=\tfrac12\sum_u\bigl|\Pzero(u)-\Pone(u)\bigr|=\tfrac12\sum_u\bigl|\Tr(E_u\Delta)\bigr|,
\qquad \Delta=\rho_0-\rho_1,
\end{equation}
where $\{E_u\}$ are the instrument effects. For a binary input, total variation has a direct testing meaning because the Neyman--Pearson lemma implies that every test deciding $\zL$ versus $\oL$ from $U$ has type-I plus type-II error at least $1-\TV$. The best distinguishing advantage is therefore $\TV/2$.

\paragraph{Terminology.}
The record passes through a fixed hierarchy, and each stage has a name used consistently throughout.
\begin{equation*}
\begin{aligned}
&\underbrace{\text{raw measurement outcomes } s_t}_{\text{per-round ancilla readouts}}
\;\longrightarrow\;
\underbrace{\text{detector events } \delta_t=s_t\oplus s_{t-1}}_{\text{the decoding graph's vertices}}\\[6pt]
&\qquad\longrightarrow\;
\underbrace{\text{decoder actions and metadata } A=f(S),\,R,\,\tau,\,J}_{\text{fixed functions of the above}}
\;\longrightarrow\; U .
\end{aligned}
\end{equation*}
The remaining terms are defined once here.

\begin{description}[leftmargin=1.5em,itemsep=1pt,font=\normalfont\itshape]
\item[Fault-support region $\Rlat$.] The connected union of the detector \emph{cells} that a bra--ket fault pair touches, taken \emph{before} the mod-2 cancellation that produces the observed syndrome, is the object for which the polymer expansion, PWTS and \Cref{lem:cosetweight} are stated throughout.
\item[Observed component $\Robs$.] A connected component of \emph{fired} detectors in a sampled transcript is what the empirical diagnostic measures, but it is \emph{not} the same object. A length-$k$ fault chain occupies $\Theta(k)$ cells but fires only its endpoints, whereas a homologically nontrivial loop can occupy $\Theta(d)$ cells while firing nothing at all. Every geometric statement below uses the fault-support region $\Rlat$. By contrast, the observed component $\Robs$ appears only in the empirical diagnostic.
\item[Neutral / charged.] A region is \emph{neutral} when the operator it carries is homologically trivial and \emph{charged} otherwise, while neutrality is a homological property, \emph{not} zero syndrome, so a logical loop with trivial boundary and nontrivial homology is charged.
\item[Backbone.] The stabilizer gadgets, ancilla resets and classical post-processing make up the error-free part of the syndrome-extraction circuit, and the separate fault paths are inserted on top of this backbone.
\item[Persistent wire, or rod.] A fault that survives many rounds at bounded spatial footprint. In detector coordinates, it appears only where it starts and ends, so it is one bounded-size polymer instead of a chain whose degree grows with $T$. Resumming persistent rods makes the expansion converge uniformly in $T$.
\item[Calibrated witness.] A transcript record that tracks a known physical operator with a known, bounded error, so
that an adversary can combine such records without first having to learn the device. It is \emph{charged} when the
product of the operators it tracks is a nontrivial logical.
\item[PWTS.] \emph{Persistent-wire transcript summability} is the hypothesis that the resummed activity $\eta(R)$ of the fault histories producing any connected fault-support region $\Rlat$ decays geometrically in $|\Rlat|$, uniformly in $d$ and $T$ (\Cref{hyp:pwts}).
\end{description}

\subsection*{The two transcript-locality hypotheses}
The privacy statement is conditional on two explicit hypotheses on the observed instrument, both of which are properties of the honest execution, stated once and carried through the proof.

The first hypothesis, H1$^\ast$, says the honest part of the device is blind to the logical input. In plain terms, the records produced by error-free syndrome extraction look the same for every stored logical state because the measured operators are scalar on the code space.

\begin{assumption}[H1$^\ast$: sector-scalar honest backbone]\label{hyp:h1}
For every observed outcome $u$, the honest-backbone effect is a stabilizer polynomial that is scalar within each syndrome sector,
\begin{equation}
F^B_u=\sum_{\sigma} q_u(\sigma)\,\Pi_\sigma \in \mathbb{C}[\Sgroup],
\end{equation}
where $\Pi_\sigma$ projects onto the syndrome sector $\sigma$ and $\Sgroup$ is the stabilizer group. Equivalently, the sectorwise Knill--Laflamme condition $\code\, E_\sigma^\dagger F^B_u E'_\sigma\,\code=\lambda_{\sigma,u,E,E'}\,\code$ holds, and ``neutral'' means trivial homology, with zero syndrome alone insufficient. H1$^\ast$ holds for circuits with complete
stabilizer gadgets, fresh or reset ancillas, classical-kernel records, decoder post-processing $A=f(S)$, and no
intermediate physical logical recovery.
\end{assumption}

Two features of the condition matter for what follows, the first being its sectorwise form, which is necessary because a summed Knill--Laflamme condition asks only that the backbone be scalar after averaging over syndromes. The coefficients $q_u(\sigma)$ vary from sector to sector, so a sector-dependent phase could carry logical information hidden by the average. The condition we use pins the backbone to a scalar within each syndrome sector separately. The
second is that ``neutral'' is a homological notion. A fault that produces zero net syndrome can still be a logical loop,
which has trivial boundary but nontrivial homology, and such a loop is not neutral. Reading neutrality as zero syndrome
instead of trivial homology would wrongly classify a logical operator as harmless, so the distinction is
load-bearing.

One part of H1$^\ast$ can be checked by binary-symplectic algebra. The measured-check group $\GB$ must lie inside the
stabilizer group $\Sgroup$, so that no product of measured checks is a nontrivial logical. This is a polynomial-time
linear-algebra test over the binary field, and it is exactly the first half of the diagnostic. The full hypothesis is an explicit assumption covering coherent faults, analog records, leakage, timing, and feedback, so the $\GB\subseteq\Sgroup$ test supports but does not establish it. Those richer records are not products of measured checks and therefore lie outside the reach of the algebra test. This gap between the checkable part and the full hypothesis makes a second, calibrated part necessary for the diagnostic. It also explains why we present H1$^\ast$ as a hypothesis and not as a proved property of generic hardware.

The second hypothesis, PWTS, says faults stay local in the transcript because a region of the record that could carry logical information has an activity that shrinks geometrically with the region's size.

\begin{assumption}[PWTS: persistent-wire transcript summability]\label{hyp:pwts}
There are constants $K_{\rm rod}$ and $\etabar<1$, both independent of $d$ and $T$, such that for every connected
fault-support region $\Rlat$ (correctable or not) the pre-logical rod activity obeys
\begin{equation}
\eta(R)\le (K_{\rm rod}\,\etabar)^{|R|}.
\end{equation}
Here $\eta(R)$ is the resummed absolute norm of the non-blind fault-path contributions whose detecting support is $R$, taken \emph{before} the homological-charge test. In plain terms, it is the total weight of all local fault histories that produce the same connected detector region $R$, before asking whether that region is logical. The bound is on this
pre-logical activity, so it applies to charged regions as well. The logical quotient of a correctable region is already zero by H1$^\ast$ and Knill--Laflamme, and the
cluster expansion needs the pre-logical activity of every region.
\end{assumption}

PWTS plays the role of the standard locality and noise assumptions of fault-tolerance threshold
theorems~\cite{aliferis2006,terhalburkard2005,aliferis2007}, now for the transcript. PWTS has a model derivation for the standard reset-ancilla syndrome-record model under the stated microscopic locality and rate assumptions. Once the exact transcript factorization is established, PWTS holds with the explicit expression
\begin{equation}
\etabar=e^{c_{\rm blk}\varepsilon_0}-1\le 2c_{\rm blk}\varepsilon_0,\qquad
\varepsilon_0=\varepsilon_{\rm dep}+\varepsilon_{\rm meas}+\varepsilon_{\rm reset}+\varepsilon_{\rm coh}
+\varepsilon_{\rm xtalk}+\varepsilon_{\rm leak}.
\end{equation}
Each $\varepsilon$ is the local unnormalized non-identity branch weight of its noise channel (for coherent over-rotations $\varepsilon_{\rm coh}=\sup_v 2|\sin(\theta_v/2)|\le\sup_v|\theta_v|$). The derivation is given in
Supplementary Note~2. Outside the stated model PWTS must be separately certified.

\subsection*{Privacy notion}
The no-leakage theorem bounds the induced classical channel from the logical qubit to the transcript within $\kappa_d=e^{-\Theta(d)}$ of an input-independent channel in the diamond norm, so $\TV(P_\rho,P_\sigma)\le\kappa_d$ for \emph{every} pair of logical inputs, including pairs beyond $\zL,\oL$. We state the operational consequence in
hypothesis-testing form~\cite{dong2022gdp,kairouz2015,wasserman2010} and record the differential-privacy statements as
corollaries~\cite{dwork2006,hirche2023}.

\begin{theorem}[hypothesis-testing privacy]\label{thm:fdp}
If $\TV(\Pzero,\Pone)\le\kappa_d$, the transcript experiment is $f_{\kappa_d}$-private with trade-off $T(\alpha)\ge(1-\kappa_d-\alpha)_+$. Consequently, every test has type-I plus type-II error at least $1-\kappa_d$, so the optimal distinguishing advantage is at most $\kappa_d/2$. It follows that the experiment is $(0,\kappa_d)$-DP, with
$E_{e^{\varepsilon}}(\Pzero\Vert\Pone),\,E_{e^{\varepsilon}}(\Pone\Vert\Pzero)\le\kappa_d$ for all $\varepsilon\ge0$.
\end{theorem}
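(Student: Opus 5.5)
The plan is to reduce everything to the standard coupling characterization of total variation between $\Pzero$ and $\Pone$. Write $\kappa:=\kappa_d$ and $\delta(x)=\tfrac12(\Pzero-\Pone)$ pointwise. The first step treats an arbitrary test $\phi:\mathcal U\to[0,1]$, with type-I error $\alpha=\mathbb E_{\Pzero}\phi$ and type-II error $\beta=\mathbb E_{\Pone}(1-\phi)$. Then
\[
1-\alpha-\beta=\mathbb E_{\Pone}\phi-\mathbb E_{\Pzero}\phi\le\sup_{A}\bigl(\Pone(A)-\Pzero(A)\bigr)=\TV(\Pzero,\Pone)\le\kappa,
\]
where the supremum over randomized tests equals the supremum over events, because the objective is linear in $\phi$ and is maximized by an indicator. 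This gives $\alpha+\beta\ge1-\kappa$. The trade-off function $T(\alpha)=\inf\{\beta:\ \mathbb E_{\Pzero}\phi\le\alpha\}$ is therefore at least $1-\kappa-\alpha$. Since $\beta\ge0$ trivially, the bound is $(1-\kappa-\alpha)_+$, which is the $f_{\kappa}$-privacy statement in the sense of Dong--Roth--Su. The same inequality, applied to the Bayes test under a uniform prior, gives success probability $\tfrac12(1-\alpha)+\tfrac12(1-\beta)\le\tfrac12+\tfrac{\kappa}{2}$. The advantage over guessing is thus at most $\kappa/2$, consistent with the Neyman--Pearson remark already made in the Methods.

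For the differential-privacy corollary, I will bound the hockey-stick divergence directly:
\[
E_{e^\varepsilon}(\Pzero\Vert\Pone)=\sup_A\bigl(\Pzero(A)-e^{\varepsilon}\Pone(A)\bigr)\le\sup_A\bigl(\Pzero(A)-\Pone(A)\bigr)=\TV\le\kappa
\]
for every $\varepsilon\ge0$, because $e^\varepsilon\Pone(A)\ge\Pone(A)$. The bound for $E_{e^\varepsilon}(\Pone\Vert\Pzero)$ follows by symmetry of TV. The characterization of $(\varepsilon,\delta)$-DP as $E_{e^\varepsilon}\le\delta$ in both directions then yields $(\varepsilon,\kappa)$-DP for all $\varepsilon\ge0$, and in particular $(0,\kappa)$-DP, i.e.\ $\Pzero(A)\le\Pone(A)+\kappa$ for every event $A$. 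Here the neighbouring datasets are the two logical inputs, so the binary experiment is the whole mechanism.

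I do not expect a real obstacle; every step is an elementary inequality. The one point needing care is making sure that randomized tests, and any post-processing of $U$ such as the decoder map $A=f(S)$, are covered. I will handle this by working with tests $\phi\in[0,1]$ on the full transcript space, or by invoking data processing for TV. A second point concerns the countable transcript alphabet: the supremum over events may have to be taken as the measurable supremum, which holds without change. Finally, I will note that the hypothesis $\TV\le\kappa_d$ is supplied by \Cref{cor:basislabel} for the basis pair. For general inputs $\rho,\sigma$, it is supplied by \Cref{cor:fullchannel}, so the same argument applies verbatim to any pair of logical inputs.
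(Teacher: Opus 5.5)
Your proposal is correct and follows essentially the paper's argument. The paper's proof is the one-line Neyman--Pearson observation that every test has type-I plus type-II error at least $1-\TV(\Pzero,\Pone)$, which is exactly your variational bound $1-\alpha-\beta\le\sup_A\bigl(\Pone(A)-\Pzero(A)\bigr)$. It is then followed by the same monotonicity $E_{e^\varepsilon}(\Pzero\Vert\Pone)\le E_1(\Pzero\Vert\Pone)=\TV(\Pzero,\Pone)$ for $\varepsilon\ge0$. One small addition: state explicitly that the reverse trade-off $T(\Pone,\Pzero)$ required by $f$-privacy follows from the same inequality with the two laws exchanged, just as you already use the symmetry of $\TV$ for the second hockey-stick bound.
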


The proof is one line of Neyman--Pearson (Supplementary Note~3). The trade-off function is the primitive here, and the DP parameters are summaries of it, so we do not present the DP corollary as a separate contribution. Pure $\varepsilon$-DP and R\'enyi-DP are unavailable for a structural reason instead of a technical one. A faithful model has transcripts occurring under one input and not the other, which forces an infinite likelihood ratio and makes a nonzero additive term unavoidable. Supplementary Note~3 also records a smooth $(\varepsilon_d,\delta_d)$-DP corollary with both parameters $e^{-\Theta(d)}$. It states that off an exponentially rare event the privacy loss is exponentially small, which is weaker than a relative-likelihood statement and is not claimed as one.

Two features of the experiment differ from the classical database setting and should be stated, the first being that the \emph{total} adjacency relation makes every logical input a neighbour of every other, which is stronger than differing-in-one-record adjacency. The diamond-norm form also covers an adversary holding a quantum register entangled with the logical qubit because the norm is taken over an arbitrary reference system (Supplementary Note~11.10). The classical-transcript adversary of the Results is the special case with a trivial reference.

\subsection*{Device-model derivation of the locality hypothesis}
We derive the PWTS bound for the standard reset-ancilla syndrome-record model, with the full derivation in Supplementary Note~2 and a summary here. The model assumption is that the adversary-visible transcript factors through reset-ancilla
detector events and input-independent classical post-processing, with no exposed record calibrated to a data-local
operator. Under this factorization, one blocks a complete syndrome-extraction gadget and bounds the unnormalized outcome-summed weight of its non-identity branches by
\begin{equation}
\etabar=e^{c_{\rm blk}\varepsilon_0}-1\le 2c_{\rm blk}\varepsilon_0,\qquad
\varepsilon_0=\varepsilon_{\rm dep}+\varepsilon_{\rm meas}+\varepsilon_{\rm reset}+\varepsilon_{\rm coh}
+\varepsilon_{\rm xtalk}+\varepsilon_{\rm leak}.
\end{equation}
Each $\varepsilon$ is the local non-identity branch weight of one noise channel, and a coherent over-rotation of angle $\theta$ contributes $\varepsilon_{\rm coh}=\sup_v 2|\sin(\theta_v/2)|\le\sup_v|\theta_v|$. A persistent data error registers only at its detector endpoints and occupies a bounded detector footprint, so its temporal history resums once into a spatial rod. This happens because $\delta_t=s_t\oplus s_{t-1}$ is zero while the syndrome is constant. The polymer-activity parameter $a=e\nu K_{\rm rod}\etabar$ then depends on the physical rates as
$\etabar=O(p)+O(|\theta|)+O(\chi)+O(\lambda)$, with $p$ the stochastic rate, $\theta$ the coherent angle, $\chi$ the
crosstalk strength, and $\lambda$ the leakage rate. The factorization assumption used here excludes the persistent quantum-nondemolition counterexample. In this counterexample, a record reports a data-local factor every round, thereby reconstructing a logical operator over $\Theta(d)$ rounds and driving the one-wire activity to one. Outside the stated
model the bound must be certified separately, which is the role of the hardware calibration in the diagnostic.

The derivation also draws a clean boundary between records that satisfy the hypothesis and records that violate it, and
the boundary is physical, not formal. On the satisfying side are the ordinary ingredients of honest local
execution. Reset-ancilla syndrome detectors satisfy it because each fault has a bounded detector footprint in the three-dimensional decoding graph, while ancilla-measurement records satisfy it because they are stabilizer-check outcomes, sector-scalar under H1$^\ast$. Local reset and herald flags, fixed timing metadata, and the decoder's own actions
satisfy it because they are input-independent classical functions, and classical post-processing cannot increase the
total variation. Circuit-level depolarizing noise, small coherent over-rotations, finite-range crosstalk, and
leakage-reduced leakage satisfy it because each contributes a bounded-footprint atom, enlarging the constants but not
the scaling. On the violating side are records that report a logical factor repeatedly or persistently, including direct data-qubit dispersive readout and per-qubit energy or amplitude telemetry, which drive the one-wire activity toward one and break the bound. A persistent sensor tied to a data qubit and mobile leakage that carries a measurement record likewise drive the one-wire activity toward one and break the bound. A useful borderline case is low-frequency flux noise, which remains harmless when unobserved or exposed only as an input-independent nuisance that affects the transcript through syndrome faults. It is dangerous only if its exposed record becomes correlated with a data-local logical factor. The lesson the boundary teaches is that the hypothesis is
not about how noisy the device is, but about whether its records repeatedly read out the data.

\subsection*{Proof architecture of the no-leakage theorem}
The bound $\TV=\tfrac12\sum_u|\Tr(E_u\Delta)|$ is controlled by four structural steps and a cluster expansion,
reduced to three lemmas proved in Supplementary Note~1. The naive expansion has terms of every order, and a bound controlling only pairwise correlations would miss the dangerous high-order terms in which many individually invisible local records combine into a logical string. By contrast, the cluster expansion controls all orders at once. In this expansion, a polymer denotes a connected cluster of detector events. We work throughout with the absolute outcome-summed norm, which turns the signed sum over fault paths into a positive polymer gas and therefore relies on no cancellation between records.
\begin{enumerate}[leftmargin=1.5em,itemsep=2pt]
\item \emph{Neutral contributions vanish.} By Knill--Laflamme, a bra--ket pair with homologically trivial support has
code-space matrix elements proportional to the identity within its syndrome sector, so it cancels in
$\Tr(E_u\Delta)$ (\Cref{rem:supports}).
\item \emph{The honest backbone is blind (Lemma L1).} Under H1$^\ast$ the backbone effect is sector-scalar. Its
off-code dressing carries no logical signal, so only the non-blind quotient remains.
\item \emph{Charge costs distance.} Homology forces any charged spacetime cluster to have support $|C|\ge d_Z/\ell_\ast$ for a constant $\ell_\ast$, where $d_Z$ is the minimum weight of the coset $\Zbar\Sgroup$. For the rotated surface code, this relation reduces to $d_Z=d$.
\item \emph{No cancellation (Lemma L2).} Working with the absolute outcome-summed norm yields a signed-polymer
expansion with a Koteck\'y--Preiss partition-ratio bound~\cite{koteckypreiss1986}, so the neutral exterior cancels
uniformly in the spacetime volume.
\item \emph{Subcritical activity (Lemma L3).} Counting by connected detecting-region size, a direct
Mayer--Penrose--Fern\'andez--Procacci tree expansion~\cite{penrose1967,fernandezprocacci2007} with persistent errors
resummed into bounded spatial footprints gives, under PWTS, a region-activity sum $\sum_{|R|=n}\eta(R)\le(e\nu)^{-1}
a^n$ with $a=e\nu K_{\rm rod}\etabar$.
\end{enumerate}
Combining the charged-support lower bound $|C|\ge d_Z/\ell_\ast$ with the subcritical activity and the
Koteck\'y--Preiss convergence yields, for $T=\Theta(d)$,
\begin{equation}
\TV(\Pzero,\Pone)\le C\,\mathrm{poly}(d,T)\,\theta^{d_Z/\ell_\ast}=e^{-\Theta(d_Z)},
\end{equation}
which is the no-leakage theorem. Here $\nu$ is the region connective constant, and $K_{\rm rod}$ is the PWTS activity constant. Convergence requires the Koteck\'y--Preiss condition $ea<1$, and all constants are independent of $d$ and $T$ because PWTS bounds region activities uniformly.

A triangle inequality over basis states would not extend this result to arbitrary inputs because the measurement $\{(I\pm\Xbar)/2\}$ has identical laws on $\zL,\oL$ while separating $\lvert\pm_L\rangle$, so the off-diagonal code-space elements must be controlled directly.

\begin{definition}[coset distances]\label{def:cosetdist}
For a stabilizer code with stabilizer group $\Sgroup$ and logical Paulis $\Xbar,\bar Y,\Zbar$, set
\begin{equation}\label{eq:cosetdist}
d_P:=\min\bigl\{\mathrm{wt}(g):g\in\bar P\cdot\Sgroup\bigr\},\qquad \bar P\in\{\Xbar,\bar Y,\Zbar\},
\qquad d_{\min}:=\min(d_X,d_Y,d_Z).
\end{equation}
Every element of $\Norm\setminus\Sgroup$ lies in exactly one of the three cosets, so $d_{\min}$ is the ordinary code distance and $d_P\ge d_{\min}$ for each axis. However, the inequality can be strict by an arbitrary amount (Supplementary Note~11.11).
\end{definition}

The single step on which everything turns is that a nonzero logical-$\bar P$ component costs $d_P$. The controlling quantity is not $d_{\min}$. This step is worth isolating because the obvious argument does not give it and the operators it must cover are not Paulis.

\begin{lemma}[a charged component costs its own coset's distance]\label{lem:cosetweight}
Let $O$ be any operator that is supported on a qubit set $A$ and may in particular be coherent and non-Pauli. If
$\Tr_L\bigl(\bar P\,\code\,O\,\code\bigr)\ne0$ for some $\bar P\in\{\Xbar,\bar Y,\Zbar\}$, then $|A|\ge d_P$, and the
connected fault-support region $\Rlat$ carrying $O$ satisfies $|\Rlat|\ge d_P/\ell_\ast$.
\end{lemma}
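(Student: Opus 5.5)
The plan is to expand $O$ in the Pauli basis on its support and reduce the claim to a statement about individual Pauli operators. Since $O$ is supported on $A$, write $O=\sum_{g}c_g\,g$ with $g$ ranging over Paulis supported in $A$ (up to phase). This expansion is exact for any operator, coherent or non-Pauli, because the Paulis on $A$ form a basis of the operators on $A$.

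Linearity then gives $\Tr_L(\bar P\,\code O\code)=\sum_g c_g\,\Tr_L(\bar P\,\code g\code)$. If the left side is nonzero, at least one term $\Tr_L(\bar P\,\code g\code)$ with $\operatorname{supp}(g)\subseteq A$ is nonzero.

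The second step is the standard classification of a projected Pauli $\code g\code$:
\begin{itemize}
\item if $g$ anticommutes with some stabilizer, then $\code g\code=0$;
\item if $g\in\pm\Sgroup$ (up to phase), then $\code g\code\propto\code$, i.e.\ $\propto I_L$;
\item if $g\in\Norm\setminus\Sgroup$, then $g$ lies in exactly one coset $\bar Q\cdot\Sgroup$ by \Cref{def:cosetdist}, and $\code g\code\propto\bar Q$ restricted to the code space.
\end{itemize}
Because the logical Paulis are trace-orthogonal on $L$, $\Tr_L(\bar P\bar Q)=2\delta_{PQ}$ and $\Tr_L(\bar P I_L)=0$. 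So the nonzero term forces $g\in\bar P\cdot\Sgroup$, hence $\mathrm{wt}(g)\ge d_P$ by definition. Since $\operatorname{supp}(g)\subseteq A$, this gives $|A|\ge\mathrm{wt}(g)\ge d_P$.

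This is exactly why the bound is $d_P$ rather than $d_{\min}$. Only the $\bar P$-coset component survives the trace, so a low-weight component in another coset contributes nothing to the $\bar P$ coefficient. The naive argument, "$O$ acts nontrivially on the code space, so its support exceeds the code distance", would give only $d_{\min}$.

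The third step converts qubit support into region size. The fault-support region $\Rlat$ carrying $O$ is, by its definition in the Terminology paragraph, the connected union of detector cells touched by the bra--ket fault pair before mod-2 cancellation. Every data qubit in $A$ must therefore lie in at least one cell of $\Rlat$; otherwise that qubit would not be acted on by the fault history. Each cell covers at most $\ell_\ast$ data qubits, so $|A|\le\ell_\ast|\Rlat|$, which gives $|\Rlat|\ge d_P/\ell_\ast$.

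I expect this covering step to be the main obstacle. It requires that the operator $O$ attached to a polymer really is supported within the qubits covered by its cells, including for persistent rods that are resummed over time. For a rod, the temporal history collapses to a bounded spatial footprint, and I would argue that its spatial support is still covered by the cells at its endpoints, because the footprint of the wire is a fixed set of data qubits adjacent to those detectors. I would handle this by fixing the convention that each fault atom is assigned all cells adjacent to its qubit support. That makes the inclusion $A\subseteq\bigcup_{c\in\Rlat}\operatorname{supp}(c)$ hold by construction, and the bound $|A|\le\ell_\ast|\Rlat|$ follows directly.
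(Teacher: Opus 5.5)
Your proposal is correct and follows the paper's proof essentially step for step. Both expand $O$ in the Pauli basis on $A$, classify $\code E\code$ three ways, and use trace orthogonality of the logical Paulis to isolate the single coset $\bar P\cdot\Sgroup$, and both get $|A|\le\ell_\ast|\Rlat|$ from the cell-support constant. The paper handles the covering step you flag as the main obstacle exactly as you propose: it defines $\Rlat$ as the union of all cells the bra--ket pair touches, before mod-2 cancellation, so every qubit of $A$ lies in some cell of $\Rlat$ by construction.
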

\begin{proof}
Expand $O$ in the $n$-qubit Pauli basis, $O=\sum_E c_E E$, and since $O$ is supported on $A$, every $E$ with $c_E\ne0$ has $\mathrm{supp}(E)\subseteq A$. A Pauli's code-space block is determined by its class. If $E\notin\Norm$, then $\code E\code=0$ because it anticommutes with some stabilizer $s$, so $\code E\code=\code sEs\code=-\code E\code$. The relation $\code E\code=\pm\code$ holds if $E\in\Sgroup$. If $E$ lies in one of the three nontrivial cosets, then $\code E\code$ is the corresponding logical Pauli up to a phase. Since $\Tr_L(\bar P\bar Q)=2\delta_{PQ}$, only Paulis in the \emph{single}
coset $\bar P\cdot\Sgroup$ contribute to $\Tr_L(\bar P\,\code\,O\,\code)$. The hypothesis therefore forces some
$E\in\bar P\cdot\Sgroup$ with $c_E\ne0$, and $\mathrm{wt}(E)\ge d_P$ by \cref{def:cosetdist}; as
$\mathrm{supp}(E)\subseteq A$, $|A|\ge d_P$. The region satisfies $|\Rlat|\ge d_P/\ell_\ast$ because each detector cell covers at most $\ell_\ast$ data qubits. By definition, $\Rlat$ is the union of the cells the pair \emph{touches}, so every qubit of $A$ lies in some cell of $\Rlat$ and $|A|\le\ell_\ast|\Rlat|$.
\end{proof}

\begin{remark}[why this must be $\Rlat$ and not the fired detectors]\label{rem:latent}
The inequality $|A|\le\ell_\ast|\Rlat|$ is false for the observed component $\Robs$, and the failure is not marginal because detector values are mod-2 parities, so a chain of faults cancels in its interior. On the repetition code's detector graph, a length-$8$ chain touches $8$ cells and fires a single detector, and a logical representative can occupy its whole support while firing nothing. Bounding the support by the fired set would therefore give no distance cost at
all. The activity is summed over \emph{all} outcome patterns on the touched cells, including the silent pattern. The polymer expansion groups faults into connected unions of touched cells so that a homologically charged but syndrome-silent configuration still pays $|\Rlat|\ge d_P/\ell_\ast$. This is the step that
carries the coset distance into the exponent, and it is the one place where the distinction is mathematically essential.
\end{remark}

\begin{remark}[why the one-line argument is not enough]\label{rem:notkl-main}
Knill--Laflamme gives only the weaker $|A|\ge d_{\min}$. If $|A|<d_{\min}$, then $\code O\code=c\code$ is a scalar and \emph{every} traceless logical component vanishes at once, which discards the axis label and returns the same distance for all three axes. It is also tempting to argue that a nonzero $z_u$ forces $\code K_b^\dagger K_a\code\propto\Zbar$
and hence that $K_b^\dagger K_a$ ``lies in'' the coset $\Zbar\Sgroup$. For Pauli fault paths, that argument is harmless, but a coherent or non-Markovian fault path makes $K_b^\dagger K_a$ a \emph{superposition} of sectors that lies in no single coset. A nonzero component says only that its projection onto that sector is nonzero, and \Cref{lem:cosetweight} covers this case because it is stated for arbitrary operators. Its proof needs only the existence of one Pauli term in the coset and imposes no bound on how many terms occur, so it introduces no dimensional constant.
\end{remark}

\begin{lemma}[componentwise off-diagonal block control]\label{lem:offdiag}
Write the code-space block of each transcript effect as $M_u=q_u I_L+x_u\Xbar+y_u\bar Y+z_u\Zbar$, and let
$\mathcal{M}(\rho)=\sum_u\Tr(M_u\rho)\,\lvert u\rangle\!\langle u\rvert$ be the induced classical channel from the
logical qubit to the transcript, with $\mathcal{C}(\rho)=\sum_u q_u\Tr(\rho)\,\lvert u\rangle\!\langle u\rvert$ the
input-independent channel. Under the hypotheses of \Cref{thm:main}, each logical Pauli component carries the distance cost \emph{of its own coset}, through the bounds
\begin{equation}\label{eq:componentwise}
\sum_u|x_u|\le\kappa_{d_X},\qquad \sum_u|y_u|\le\kappa_{d_Y},\qquad \sum_u|z_u|\le\kappa_{d_Z},
\qquad \kappa_m:=C\,\mathrm{poly}(d,T)\,\theta^{m/\ell_\ast}.
\end{equation} Consequently, the following inequality holds over any reference system and for any normalized input.
\begin{equation}
\|\mathcal{M}-\mathcal{C}\|_\diamond\le\sum_u\|M_u-q_uI_L\|_\infty\le
\kappa_{d_X}+\kappa_{d_Y}+\kappa_{d_Z}\le 3\,\kappa_{d_{\min}}.
\end{equation}
This inequality implies $\TV(P_\rho,P_\sigma)\le 3\kappa_{d_{\min}}$ for all logical inputs $\rho,\sigma$, including superpositions and mixtures (\Cref{cor:fullchannel}). For the two $\Zbar$-eigenstates, only $z_u$ survives, giving $\TV(\Pzero,\Pone)\le\kappa_{d_Z}$ (\Cref{cor:basislabel}).
\end{lemma}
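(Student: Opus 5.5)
The plan is to split the statement into its structural part, which is the componentwise bound \cref{eq:componentwise}, and its consequences, which are the diamond-norm inequality and the two corollaries. The consequences are elementary once \cref{eq:componentwise} holds, so I will prove the componentwise bound first and then read off the rest.

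\emph{The componentwise bound.} I will not try to obtain $x_u$ by bounding $M_u$ in operator norm, because that would only give $d_{\min}$ (\Cref{rem:notkl-main}). Instead I will write each coefficient as a projection, $x_u=\tfrac12\Tr_L(\Xbar M_u)$, and similarly $y_u=\tfrac12\Tr_L(\bar Y M_u)$ and $z_u=\tfrac12\Tr_L(\Zbar M_u)$.

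\begin{itemize}[leftmargin=1.5em,itemsep=1pt]
\item \emph{Fault-path expansion.} I expand $M_u=\code E_u\code$ over fault paths: bra--ket pairs of Kraus products inserted on the honest backbone.
\item \emph{Backbone and neutral terms.} By H1$^\ast$ (Lemma L1), the backbone-only terms are sector-scalar, so they contribute only to $q_u$. By Knill--Laflamme, every term whose connected fault-support region $\Rlat$ is homologically trivial is likewise proportional to $I_L$ within its sector.
\item \emph{Polymer grouping.} I group the remaining terms into polymer configurations on the detector graph. The Koteck\'y--Preiss ratio bound (Lemma L2) then factors off the neutral exterior, uniformly in the spacetime volume. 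This leaves a sum over connected charged clusters $R$, weighted by activities $\eta(R)$ times a bounded partition ratio.
\item \emph{Each axis pays its own coset.} For a given axis $\bar P$, a cluster contributes to the $\bar P$ coefficient only if $\Tr_L(\bar P\,\code O_R\code)\ne0$, where $O_R$ is the operator the cluster carries. By \Cref{lem:cosetweight}, this forces $|R|\ge d_P/\ell_\ast$. That lemma applies to arbitrary, including coherent, operators, which is exactly what lets each axis pay its own coset distance rather than $d_{\min}$.
\end{itemize}

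Summing $\sum_u|x_u|$ and using the absolute outcome-summed norm, the sum over $u$ is absorbed into $\eta(R)$. I then bound by the number of anchor cells, which is $\mathrm{poly}(d,T)$, times $\sum_{n\ge d_X/\ell_\ast}(e\nu)^{-1}a^n$ from Lemma L3. Since $ea<1$ forces $a<1$, this is a geometric tail of order $\theta^{d_X/\ell_\ast}$, with $\theta=a$ up to constants. The same argument gives the $y$ and $z$ bounds with their own distances.

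\emph{Main obstacle.} I expect the hardest step to be showing that the outcome sum of $|x_u|$ is controlled by the same absolute activities as the operator norms, with no loss that depends on $d$ or $T$. The partition ratio is shared across outcomes and must be bounded uniformly. My first attempt will be to pull the absolute value inside the cluster expansion and rely on PWTS bounding the pre-logical norm summed over all outcome patterns on the touched cells, including the silent pattern.

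\emph{Consequences.} For the diamond norm, note that $\mathcal M-\mathcal C$ maps $\rho\mapsto\sum_u\Tr((M_u-q_uI_L)\rho)\lvert u\rangle\!\langle u\rvert$. Tensoring with a reference system and using that the trace norm of a block-diagonal output is the sum of the blocks' trace norms, each block is bounded by $\|M_u-q_uI_L\|_\infty$ on a normalized input. This gives the first inequality. Next, $\|x\Xbar+y\bar Y+z\Zbar\|_\infty\le|x|+|y|+|z|$ yields $\kappa_{d_X}+\kappa_{d_Y}+\kappa_{d_Z}$, and since $\kappa_m$ is decreasing in $m$ this is at most $3\kappa_{d_{\min}}$. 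For any logical inputs $\rho,\sigma$, the triangle inequality through $\mathcal C$ gives $\TV(P_\rho,P_\sigma)\le\tfrac12\cdot2\cdot3\kappa_{d_{\min}}$. Finally, for the $\Zbar$-eigenstates, $\Tr(M_u(\rho_0-\rho_1))=2z_u$ because $\Xbar$ and $\bar Y$ have zero expectation on both states, so $\TV(\Pzero,\Pone)=\sum_u|z_u|\le\kappa_{d_Z}$.
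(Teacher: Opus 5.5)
Your consequences (the orthogonal-block diamond-norm bound, $\|x\Xbar+y\bar Y+z\Zbar\|_\infty\le|x|+|y|+|z|$, monotonicity of $\kappa_m$, and $\Tr(M_u(\rho_{0_L}-\rho_{1_L}))=2z_u$) match the paper, and so does your geometric tail sum. The step \emph{each axis pays its own coset} has a genuine gap.

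A fault term of $M_u$ is $\code K_b^\dagger F^B_u K_a\code$, with the backbone effect $F^B_u=\sum_\sigma q_u(\sigma)\Pi_\sigma$ sitting between the bra and ket paths. You never say what happens to it, because your use of H1$^\ast$ only disposes of backbone-only terms. Neither reading of ``the operator the cluster carries'' works:
\begin{itemize}
\item If $O_R=K_b^\dagger K_a$, your ``only if'' is false. The sector-dependent weights $q_u(\sigma)$ can undo cancellations between Pauli pairs inside $K_b^\dagger K_a$. \Cref{rem:sectorcaution} exhibits $K_b^\dagger K_a=0$ identically while $\code K_b^\dagger F^B K_a\code$ has a nonzero $\Zbar$ component whenever $q_+\neq q_-$.
\item If $O_R=K_b^\dagger F^B_u K_a$, then $O_R$ is not supported on the fault support, since $F^B_u$ is a polynomial in stabilizers spread over the whole patch. \Cref{lem:cosetweight} then gives no information about the size of the cluster.
\end{itemize}

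The missing idea is the paper's stabilizer-pass step. Expand $K_a=\sum_E\alpha_E E$ and $K_b=\sum_{E'}\beta_{E'}E'$ in Paulis, and write $\Pi_\sigma=|\Sgroup|^{-1}\sum_{s\in\Sgroup}\chi_\sigma(s)\,s$. Then use $\code E'^\dagger sE\code=\pm\code E'^\dagger E\code$, which holds because $s$ commutes or anticommutes with $E$ and $s\code=\code$. This gives $\code K_b^\dagger F^B_uK_a\code=\sum_{E,E'}\overline{\beta_{E'}}\alpha_E\,w_u(E,E')\,\code E'^\dagger E\code$ for scalars $w_u(E,E')$. So the backbone can only rescale Pauli pairs already present and never creates a new one.

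A nonzero $\bar P$ component then forces the \emph{existence} of a single pair with $E'^\dagger E\in\bar P\cdot\Sgroup$, supported in the bra--ket joint support $A$. Hence $|A|\ge d_P$ and $|\Rlat|\ge d_P/\ell_\ast$. This argues from the existence of one pair and never from cancellation between pairs, so it is immune to the counterexample above. With it inserted, your cluster-expansion bookkeeping goes through as in the paper.

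Note also that the uniformity of the outcome sum, which you flag as the main obstacle, is not where the difficulty lies. The paper builds it into PWTS, whose $\eta(R)$ is by definition an absolute outcome-summed activity.
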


\begin{proof}[Proof of \Cref{lem:offdiag}, hence of \Cref{thm:main}]
Expand the effect over bra and ket fault paths, $E_u=\sum_{a,b}c_{ab}(u)\,K_b^\dagger F^B_u K_a$, where $K_a,K_b$ are
fault-path operators and $F^B_u$ is the honest-backbone effect. Projecting onto the code space and reading off the
logical Pauli components,
\begin{equation}
x_u=\tfrac12\Tr\bigl(\Xbar\,\code E_u\code\bigr),\qquad\text{and similarly for }y_u,z_u .
\end{equation}
Fix one pair $(a,b)$ and expand both fault-path operators in the physical Pauli basis,
$K_a=\sum_{E}\alpha_E E$ and $K_b=\sum_{E'}\beta_{E'}E'$. It is \emph{not} legitimate to argue that $F^B_u$, being a scalar within each syndrome sector, simply drops out and leaves the logical content to $K_b^\dagger K_a$. Writing $F^B_u=\sum_\sigma q_u(\sigma)\code_\sigma$ gives
\begin{equation}\label{eq:sectorsum}
\code\,K_b^\dagger F^B_u K_a\,\code=\sum_\sigma q_u(\sigma)\,\code\,K_b^\dagger\code_\sigma K_a\,\code .
\end{equation}
Contributions that cancel between sectors inside $\code K_b^\dagger K_a\code$ need not cancel once weighted by distinct $q_u(\sigma)$. Indeed $\code K_b^\dagger K_a\code=0$ is compatible with
$\code K_b^\dagger F^B_u K_a\code\propto\Zbar$ (\Cref{rem:sectorcaution}).

What survives is stronger than what that step assumed, and it needs no cancellation at all. Each syndrome
projector is a combination of stabilizers, $\code_\sigma=|\Sgroup|^{-1}\sum_{s\in\Sgroup}\chi_\sigma(s)\,s$, and
for any Paulis $E,E'$ and any $s\in\Sgroup$,
\begin{equation}\label{eq:stabpass}
\code\,E'^\dagger s E\,\code=\pm\,\code\,E'^\dagger E\,\code ,
\end{equation}
since $s$ commutes or anticommutes past $E$ and $s\code=\code$. A syndrome projector can therefore only \emph{rescale} the coefficient of a Pauli pair, and it can never introduce a Pauli term that was not already present.
Collecting \cref{eq:sectorsum,eq:stabpass},
\begin{equation}
\code\,K_b^\dagger F^B_u K_a\,\code=\sum_{E,E'}\overline{\beta_{E'}}\,\alpha_E\;w_u(E,E')\;\code\,E'^\dagger E\,\code ,
\end{equation}
with scalars $w_u(E,E')$ collecting the sector weights and signs. Hence a nonzero logical-$\bar P$ component
forces the \emph{existence} of a Pauli pair with $\Tr_L(\bar P\,\code\,E'^\dagger E\,\code)\ne0$, that is
$E'^\dagger E\in\bar P\cdot\Sgroup$, so $\mathrm{wt}(E'^\dagger E)\ge d_P$ by \cref{def:cosetdist}. Since
$\mathrm{supp}(E'^\dagger E)\subseteq\mathrm{supp}(E)\cup\mathrm{supp}(E')$, that weight is carried inside the
bra--ket \emph{joint} support $A$ of the pair (\Cref{rem:supports}), giving $|A|\ge d_P$ and, by
\Cref{lem:cosetweight}, $|\Rlat|\ge d_P/\ell_\ast$ for the region carrying it. The same with $\Xbar$ and $\bar Y$ gives $d_X/\ell_\ast$ and $d_Y/\ell_\ast$, and each logical axis is charged with respect to its own coset and pays its own distance.

It remains to sum the activities of all such regions, and the bound is taken before the homological test, which makes it available for charged regions. PWTS bounds the pre-logical activity of \emph{every} connected region by $\eta(R)\le(K_{\rm rod}\etabar)^{|R|}$. The Koteck\'y--Preiss condition $ea<1$ yields a convergent cluster expansion in which the total weight of charged clusters of size at least $m/\ell_\ast$ is bounded by $\kappa_m=C\,\mathrm{poly}(d,T)\,\theta^{m/\ell_\ast}$, with $\theta<1$. This expansion sums over connected regions anchored anywhere in the spacetime volume, using connective constant $\nu$ and polymer activity $a=e\nu K_{\rm rod}\etabar$, while the polynomial factor counts anchor positions. Applying this with $m=d_X,d_Y,d_Z$ respectively gives \cref{eq:componentwise}. Finally, $\|M_u-q_uI_L\|_\infty\le|x_u|+|y_u|+|z_u|$, and summing over $u$ gives the stated diamond-norm bound. A channel with classical output is bounded by that quantity because the output registers $\lvert u\rangle\!\langle u\rvert$ are mutually orthogonal, and adjoining a reference system does not increase it. The tail bound itself, its polynomial prefactor and the identification $\theta=a$ are derived in Supplementary Notes~11.5 to~11.7. Detailed constants are in Supplementary Note~11.3 (the coset distances and the charge lemma), 11.9 (the componentwise bound), and 11.10 (the diamond-norm form). Supplementary Note~11.11 tabulates $d_X,d_Y,d_Z$ for every code family used in this work, by exact enumeration.
\end{proof}

\begin{remark}[why the sector weights cannot be dropped]\label{rem:sectorcaution}
The tempting short step is ``$F^B_u$ is sector-scalar, so the logical content comes from $K_b^\dagger K_a$'', but this step is false, as an explicit two-qubit example shows. Take $\Sgroup=\langle Z_1Z_2\rangle$ with $\Zbar=Z_1$,
and $K_a=I+X_1$, $K_b=Z_1-X_1Z_1$. Then $K_b^\dagger K_a=0$ identically, so $\code K_b^\dagger K_a\code=0$. However, with $F^B=q_+\code_++q_-\code_-$ one finds $\code K_b^\dagger F^B K_a\code$ has $\Zbar$-component $\tfrac12(q_+-q_-)$, nonzero whenever the backbone weights the two sectors differently. The proof above avoids this because it uses only the \emph{existence} of a pair in the relevant coset and never invokes cancellation between Pauli pairs. The example is consistent with the conclusion because the pair $Z_1\cdot I$ lies in $\Zbar\cdot\Sgroup=\{Z_1,Z_2\}$, whose minimum weight is $d_Z=1$, and the joint support is a single qubit.
\end{remark}

\begin{remark}[which support is meant]\label{rem:supports}
The phrase ``below the distance'' is used loosely in the literature, and four distinct notions appear above, but the proof given here uses only the fourth of these notions. A distance-$d$ code corrects arbitrary errors of weight at most $\lfloor(d-1)/2\rfloor$, but the
statement that actually enters is local indistinguishability, $\code\,O\,\code=c_O\code$ for $O$ supported on a
homologically trivial region. Concretely, (i) the weight of a \emph{single} fault path is not the relevant quantity, while (ii) Knill--Laflamme uses the joint support of the \emph{bra--ket pair} $K_b^\dagger K_a$, which can be up to twice either path. (iii) The expansion is organised by the \emph{fault-support region} $\Rlat$ occupied by the pair, and PWTS bounds that region, while (iv) \emph{Homological charge}, not support size as such, determines the exponential cost.
A fault of large support that is homologically trivial is free, and a fault of support exactly $d$ that wraps the
torus is not. Reading (iv) as (i) would make the argument look like a claim that ``every contribution below distance
$d$ vanishes'', which is neither what is proved nor true.
\end{remark}

The imaginary part $y_u$ is the coherent off-diagonal contribution, controlled by the same charge cost with no
cancellation assumed, so coherent interference cannot leave a residual $\langle 0_L|E_u|1_L\rangle$ term. The diamond
norm is taken with the standard normalization $\|\rho-\sigma\|_1\le2$, the channel output is the classical transcript
register $\lvert u\rangle\!\langle u\rvert$, and the supremum over reference systems is what certifies robustness against
a quantum-side-information adversary.

Two features of this architecture deserve emphasis, because they are where a naive argument fails. The first is that
the expansion is over connected detecting regions counted by their support size, not over individual faults counted by
number. A persistent error that lasts many rounds is a single rod of bounded spatial footprint, paid once through its
PWTS activity, not a long chain of vertices whose degree grows with the number of rounds. Without this resummation the
polymer degree would grow with $T$ and the expansion would not converge uniformly, which is the precise reason that the
hypothesis PWTS, not a per-fault smallness condition, is the right one. The second is that we work with the
absolute outcome-summed norm from the start. This converts the signed sum over fault paths, where input-sensitive
contributions could in principle cancel against one another, into a positive polymer gas where no cancellation is
assumed and the bound is therefore robust. A weaker argument that controlled only two-point correlations would miss the high-order terms in which $d$ individually invisible single-wire records multiply into a logical operator. The all-order cluster expansion is designed to control exactly those terms.

The constants are explicit. A \emph{fault-support region} is a connected set of detector cells on the three-dimensional decoding graph, and a region is \emph{charged} when its support carries a homologically nontrivial logical operator and \emph{neutral} otherwise. The neutral quotient is the code-space scalar action of a below-distance operator by Knill--Laflamme. The constant $\ell_\ast$ is the maximum number of data qubits a single detecting cell covers, so a weight-$w$ operator occupies at least $w/\ell_\ast$ cells and a charged region has size at least $d/\ell_\ast$. For the rotated code, $\ell_\ast$ is of order the check weight times the temporal extent. The decay base is $\theta=a=e\nu K_{\rm rod}\etabar$, where $\nu\le e\Delta$ is the connective constant of the bounded-degree detector graph, $K_{\rm rod}$ and $\etabar$ are the persistent-wire constants, and the binding Koteck\'y--Preiss smallness is $ea<1$. All these constants are independent of $d$ and $T$. The spacetime volume enters only as the polynomial prefactor $\mathrm{poly}(d,T)$
from the choice of the region's anchor cell, which is harmless because $T=\Theta(d)$ makes it $\Theta(d^3)$, absorbed by
the exponential $\theta^{d/\ell_\ast}$. The exact transcript factorization enters only by letting PWTS bound each region's activity by a product over the region, $\eta(R)\le(K_{\rm rod}\etabar)^{|R|}$. It is an explicit modelling assumption, not a derived fact, so the device-model derivation of Supplementary Note~2 states it as a hypothesis.

\subsection*{Relation to fault-tolerance threshold theorems}
The structure of the argument mirrors the standard proof of the fault-tolerance threshold, and the parallel identifies the contribution. A threshold theorem assumes a local-stochastic or local-Hamiltonian noise model, expands the
faulty circuit into fault paths, and shows that the probability of an uncorrectable fault cluster is exponentially
small below a constant threshold~\cite{aharonov2008,aliferis2006,terhalburkard2005}. Our argument assumes PWTS, the transcript analogue of that noise condition. It expands the observed instrument into detector-region polymers and shows that the input-sensitive activity of a charged cluster is exponentially small below a constant threshold. The
homological lower bound on charged-cluster size is the same geometric fact that underlies the surface-code
distance~\cite{dennis2002,fowler2012}. The contribution applies standard cluster-expansion machinery~\cite{koteckypreiss1986,fernandezprocacci2007,friedli2017} to a new object. A threshold theorem bounds the \emph{computation}'s failure probability, while this theorem bounds the probability that the \emph{transcript} reveals the input. The two are governed by the same code distance, which is the technical content of the relocation principle, so PWTS is analogous to the assumptions a threshold theorem already makes. It holds for the reset-ancilla transcript model of the same local, leakage-managed hardware required by a threshold theorem. The required conditions are the stated factorization and the absence of an exposed record calibrated to a data-local operator, while PWTS fails for the same kinds of long-range or persistent pathologies. The factorization and the no-data-calibrated-record proviso are the extra transcript assumptions beyond the noise model.

\subsection*{Time scale, composition, and stopping}
\Cref{thm:main} is stated at $T=\Theta(d)$, but real execution logs are far longer than one code distance, so the
permitted growth of $T$ must be made explicit. The bound has the form $C_0\,\mathrm{poly}(d,T)\,\theta^{d/\ell_\ast}$, in which only the prefactor carries $T$, and this dependence on the prefactor produces the three time regimes listed below.

\begin{itemize}[leftmargin=1.4em,itemsep=1pt]
\item For $T=\mathrm{poly}(d)$ the prefactor remains polynomial in $d$ and the bound is still $e^{-\Theta(d)}$.
\item More generally the bound is non-trivial whenever $\mathrm{poly}(d,T)=e^{o(d)}$, i.e.\ for any $T=e^{o(d)}$, so privacy survives execution logs that are subexponentially long in the code distance.
\item At $T=e^{\Theta(d)}$ with a large enough constant the prefactor overwhelms $\theta^{d/\ell_\ast}$ and the statement becomes vacuous, so the theorem makes no claim in that regime, and none should be read into it.
\end{itemize}

For a computation composed of $N$ memory blocks, each protected at $\kappa_d=e^{-\Theta(d)}$, total variation is subadditive over the concatenated transcript, giving $N\kappa_d$, so under this composition bound privacy survives $N=e^{o(d)}$ blocks. This is a union bound and nothing more. It is loose because it ignores that the blocks share a code and a device, and we do not claim tightness for it.

An adversary who chooses \emph{when to stop reading} gains nothing because a stopping rule is a function of the transcript, and the data-processing inequality forbids post-processing from increasing distinguishability. An adversary who influences \emph{what the device does} is a different matter and is excluded by hypothesis. The schedule is fixed and input-independent, so an adaptive or maliciously chosen schedule is outside the scope of this work.

\subsection*{The converse}
The converse is a binary-testing statement for charged calibrated witnesses. Both qualifiers enter its hypothesis. A witness is \emph{charged} when the product of the local Paulis it reports is a nontrivial logical operator rather than a stabilizer. A witness is \emph{calibrated} when each report tracks its Pauli with a known, bounded error, so that the adversary can combine the reports without first learning the device.
The correct primary statement is about the \emph{estimator}, not about the reports it is built from.
\begin{lemma}[charged calibrated witness]\label{lem:converse}
Let $L$ be the distinguishing logical operator. Suppose some function $f(U)$ of the transcript estimates the eigenvalue
of $L$ with error probability $p_W$. Then
\begin{equation}\label{eq:converse}
\TV(\Pzero,\Pone)\ge 1-2p_W .
\end{equation}
\end{lemma}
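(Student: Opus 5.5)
The plan is to reduce \cref{eq:converse} to the variational characterisation of total variation and then apply it to the specific event defined by the estimator. First, fix the convention. $\zL$ and $\oL$ are the $+1$ and $-1$ eigenstates of the distinguishing logical $L$; for the basis-label pair, $L=\Zbar$. The estimator $f(U)\in\{\pm1\}$ has error probability $p_W$ on each input, that is,
\[
\Pr_{\Pzero}[f(U)=-1]\le p_W,\qquad \Pr_{\Pone}[f(U)=+1]\le p_W.
\]
If instead $p_W$ is stated as an error averaged over a uniform prior, I would carry that version through and obtain the same bound with the average in place of the worst case.

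Second, I will use $\TV(P,Q)=\sup_{\mathcal{E}}|P(\mathcal{E})-Q(\mathcal{E})|$, the supremum over measurable events in transcript space. This agrees with the $\tfrac12\ell_1$ form used in the Methods, and by the Neyman--Pearson discussion it equals the optimal distinguishing gap. Take the event $\mathcal{E}=\{u: f(u)=+1\}$. It is a legitimate test because $f$ is a fixed function of the transcript, the same data-processing point used for decoder actions. Then
\[
\Pzero(\mathcal{E})-\Pone(\mathcal{E})\ge(1-p_W)-p_W=1-2p_W,
\]
and the supremum gives $\TV(\Pzero,\Pone)\ge1-2p_W$.

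Third, I will connect this to the charged, calibrated hypothesis that the surrounding text uses. Suppose the local reports $r_i$ track Paulis $P_i$ with $\prod_iP_i\in L\cdot\Sgroup$. On the code space, the stabilizer factor contributes a known sign, so the product $\prod_i r_i$, corrected by that sign, is an estimator of the eigenvalue of $L$. Its error probability is bounded by the probability that an odd number of reports err. A union bound controls this by $\sum_i p_i$, and for independent errors the exact parity formula $\tfrac12\bigl(1-\prod_i(1-2p_i)\bigr)$ applies. That quantity plays the role of $p_W$ and feeds into the lemma. This step belongs to the corollaries, not the lemma itself, but I would state it so that the lattice-surgery application reads off $p_W=p_{\rm LS}$.

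The only delicate point is the bookkeeping in the first step. The error probability has to be conditional on each logical input separately, or else averaged with the prior, and the factor of $2$ depends on which. Under the averaged definition the bound becomes
\[
\Pzero(\mathcal{E})-\Pone(\mathcal{E})=1-\bigl(\Pr_0[\text{err}]+\Pr_1[\text{err}]\bigr)=1-2p_W,
\]
which holds exactly. So I will use the averaged definition, since it gives the stated constant without any worst-case assumption. Everything else is immediate from the definition of total variation and requires none of the cluster-expansion machinery.
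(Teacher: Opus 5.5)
Your argument is correct and matches the paper's proof, which applies Neyman--Pearson to the test that outputs $f$; evaluating the variational form of total variation on the event $\{f(U)=+1\}$, i.e.\ on a test whose type-I plus type-II error is at most $2p_W$, is exactly that step. Two side remarks need tightening but do not affect the lemma: the constant $2$ is the same whether $p_W$ is worst-case or averaged under a uniform prior (only a non-uniform prior changes it), and your product-of-reports estimator must also pay the data-error term $p_{\rm FT}$ that appears in \Cref{cor:rawreports}, since the stabilizer sign is known only for a state actually in the code space.
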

The proof is Neyman--Pearson applied to the test that outputs $f$, and is given in Supplementary Note~1. Everything
device-specific is then isolated in a single quantity, $p_W$, and the work is to bound it for a named witness.

For a lattice-surgery $\Zbar_1\Zbar_2$ measurement, $f$ is the \emph{decoded} merged-boundary parity, obtained by
fault-tolerant decoding of the redundant spacetime syndrome of the merge. Below threshold this decoder achieves
$p_W=p_{\rm LS}\le Ce^{-\alpha d}$, with the same code-distance protection as a surface-code
memory~\cite{horsman2012,fowler2012,dennis2002}, giving $\TV\ge1-e^{-\Theta(d)}$.

\begin{corollary}[raw local reports; and why this route is weak]\label{cor:rawreports}
If instead the transcript carries $m$ \emph{undecoded} calibrated reports $g_j(U)$ of local Paulis $P_j$ with
$\prod_j P_j=s\,L$ ($s\in\Sgroup$) at readout errors $\delta_j$, then $f=\prod_j g_j$ estimates $L$ with
$p_W\le\sum_{j=1}^m\delta_j+p_{\rm FT}$, and \cref{eq:converse} gives
$\TV\ge1-2\sum_j\delta_j-2p_{\rm FT}$.
\end{corollary}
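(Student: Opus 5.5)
The plan is to reduce the corollary to \Cref{lem:converse} by exhibiting the explicit estimator $f=\prod_j g_j$ and bounding its error probability. Each report $g_j(U)\in\{\pm1\}$ is read as an estimate of the eigenvalue of $P_j$. The product is correct whenever the true outcomes of the $P_j$ multiply to the eigenvalue of $L$ and an even number of reports err, in particular whenever none err. So the failure event is contained in the union of the individual readout-error events together with one further event. That further event is that the product of the true $P_j$-values fails to reproduce the $L$-eigenvalue on the input codeword.

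I would carry this out in three steps. First, on either $\Zbar$-eigenstate $\rho_b$, use $\prod_j P_j=sL$ with $s\in\Sgroup$. The stabilizer $s$ acts as $+1$ on the code space, so in the noiseless case the joint (commuting) outcomes of the $P_j$ multiply to the $L$-eigenvalue $(-1)^b$ deterministically. Here I assume the $P_j$ are measured compatibly, which is implicit in ``reports of local Paulis''. Second, I define $p_{\rm FT}$ as the probability that the accumulated faults between preparation and the reports flip this product, i.e.\ that an odd-parity logical-string error crosses the reported region. In the noisy case I charge any deviation to this event. Third, I apply the union bound: $\Pr[f\neq(-1)^b]\le\Pr[\exists j:\ g_j\text{ errs}]+p_{\rm FT}\le\sum_j\delta_j+p_{\rm FT}$. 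This holds for each $b$, so $p_W\le\sum_j\delta_j+p_{\rm FT}$. Substituting into \cref{eq:converse} gives $\TV\ge1-2p_W\ge1-2\sum_j\delta_j-2p_{\rm FT}$.

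The only delicate point is the second step. I must make precise that the ``true'' local values exist as a joint random variable and that their product equals the logical eigenvalue except on an event of probability $p_{\rm FT}$. I would handle this by defining $p_{\rm FT}$ operationally as exactly that failure probability, so the corollary is a bookkeeping statement rather than a new estimate. The remark ``why this route is weak'' then follows by inspection: the bound degrades linearly in $m$. For a witness assembled along a logical string over $T=\Theta(d)$ rounds, $m$ grows with $d$, so $\sum_j\delta_j$ grows unless the per-report errors shrink faster. This is unlike the decoded lattice-surgery estimator, whose $p_W$ decays as $e^{-\Theta(d)}$.
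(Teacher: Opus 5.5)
Your proposal is correct and takes the same route as the paper. You exhibit $f=\prod_j g_j$, contain its failure event in the union of the $m$ readout-error events and the event that the ideal product misses the $L$-eigenvalue, and union-bound to get $p_W\le\sum_j\delta_j+p_{\rm FT}$ for each input. You then feed this into \Cref{lem:converse}; treating $p_{\rm FT}$, and the ideal outcomes against which the $\delta_j$ are measured, as operationally defined bookkeeping is what the statement intends.
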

This corollary is included because it is the naive route and it is instructive that it fails. A charged witness on a
distance-$d$ patch generally needs $m=\Theta(d)$ reports, so at fixed per-report error $\delta$ the union bound gives
$\sum_j\delta_j=\Theta(d\delta)$ and the lower bound \emph{degrades} with distance, becoming vacuous once
$d\delta\gtrsim 1/2$. The bound is therefore useful only for short witnesses or for $\delta=o(1/d)$.

The distinction is not academic, and it is visible in hardware. At fixed merge duration, the measured merged-parity error is $p_{LS}=0.100$ at $d=3$ against $0.217$ at $d=7$ (Hardware measurement section). Because a linear-geometry merge has a single stabilizer seam at every distance and no spatial protection, this error \emph{rises} with distance exactly as \cref{cor:rawreports} would predict for an unprotected multi-report witness. The exponential converse is a statement about a decoded witness on a two-dimensional seam, and it must be stated that way. Reading \cref{cor:rawreports} as if it delivered $1-e^{-\Theta(d)}$ would be an error.

The two qualifiers distinguish the converse from an inversion of the no-leakage theorem. A leaking witness is both charged and calibrated. A bare PWTS violation may instead measure the wrong Pauli or carry a one-time pad that removes correlation with every fixed logical operator.

\subsection*{Stating the prediction per qubit}

The exact transcript laws used above are evaluated at a single $\gamma$, which assumes every data qubit decays at the
same rate. On hardware they do not, and for a leak of order $\gamma^{d_Z}$ that assumption is not a small correction
once $d_Z>1$.

The reason is structural. At small exposure, a $\gamma^{d_Z}$ leak is dominated by the $d_Z$ qubits that decay fastest, so their contribution is weighted far above their share of the line. At large exposure, those same qubits have already decayed and stop contributing. A law evaluated at the line's median $\gamma$ therefore \emph{understates} the prediction at the short end and overstates it at the long end. A measurement compared against this law appears to exceed the theory at small exposure and fall short of it at large exposure. That crossing is a signature of the
evaluation, not of the device.

We encountered exactly this. A $[[4,1,2]]$ exposure sweep whose four data qubits were recorded at $T_1=234$--$\SI{250}{\micro\second}$ at submission sat a factor $1.52$ above its median-$\gamma$ law at the shortest exposure. This discrepancy was large enough that the $95\%$ one-sided lower bound exceeded the law at two of five exposures, which a lower bound cannot do if damping is the whole leak. One calibration later the same four qubits read $180$, $278$, $100$ and $\SI{260}{\micro\second}$, and re-evaluating the same exact law with a per-qubit rate vector, $\gamma_i(\tau)=1-\exp[-(t_{\rm round}+\tau)/T_1^{(i)}]$, brings the shortest-exposure ratio to $0.97$ and removes both lower-bound violations. No mechanism beyond amplitude damping is required to account for this discrepancy.

Two practices follow, and both are used for the measurements reported here, the first being that the prediction is evaluated on the rate vector instead of a summary statistic of it. The enumeration cost is unchanged because the law is already computed by recursion over rounds and stabilizer branches. Second, the per-qubit $T_1$, $T_2$ and readout error are snapshotted both at submission and after the job returns because they move. Across a single job, the worst-drifting qubit changed by more than a factor of two, so the two snapshots bracket what actually applied, and where they disagree the $\gamma$ axis is reported as a range instead of a number.

\subsection*{Noise models and simulators}
All numerics use the rotated surface code with circuit-level depolarizing noise (one- and two-qubit depolarization, measurement and reset flips, and per-round data depolarization)~\cite{tomita2014} at a stated physical rate, decoded where needed by minimum-weight perfect matching~\cite{edmonds1965,higgott2022,delfosse2021,bravyi2014}. Leakage follows standard leakage-reduction analyses~\cite{aliferis2007,suchara2015,brown2020}. Every result carries a bootstrap
confidence interval, every figure is regenerated by a single script from a committed result file, and random seeds and
tool versions are pinned in each file. Four instruments are used, and since the value of an adversarial test depends on
its faithfulness we state the construction of each together with what it can and cannot establish.

\emph{The adversarial classifier.} Features are the \emph{entire} detector-event vector of the rotated-memory circuit, not a chosen summary, and a defect count is one example of such a summary, so the feature set discards nothing. The resolution of the null is then set by the score class and not by the features, and a regularized logistic score reads the linear structure of that vector. The two classes are the same circuit with and without a logical-$\Xbar$ injection verified to be detector-silent and of odd overlap with $\Zbar$. That construction rules out a mislabelled pair, so a chance result here reflects the algebraic identity of the two laws and not a failure to label them. A regularized logistic model is trained using half of the shots, and we report the held-out area under the ROC curve with the bootstrap interval over the test set. This is a finite-sample monotone proxy for the distinguishing advantage the no-leakage theorem bounds, with null value one half. The biconditional with $\TV=0$ holds only for the optimal likelihood-ratio score, and the realisable AUC excess is bounded in Supplementary Note~20.6.

\emph{The hard-regime simulator.} The hard-regime simulation uses an exact statevector trajectory at $d=3$, not a Pauli-twirled approximation, while the $d=3$ code is held as a state vector and the logical input is prepared by projection onto the code space. Rounds of projective stabilizer measurement alternate with a noise step that applies a genuine coherent rotation $\exp(-i\theta/2(\cos\varphi X+\sin\varphi Z))$ to each data qubit. With fixed probability, the same noise step promotes a qubit to a persistent leaked carrier that applies a random Pauli each round until a leakage-reduction reset returns it. These are
real non-Clifford rotations, whose effect a stabilizer sampler cannot capture.

\emph{The converse simulator.} It decodes the distance-$d$-protected merge seam with minimum-weight perfect matching and reports the logical failure rate, which is the merged-parity error. This is exactly the rate that the converse turns into a leak, and it comes from a standard named decoder instead of an idealized bound.

\emph{The PWTS calibration}, and it is the weak one. It builds the detector graph from the error model, finds the
connected components of fired detectors in each shot, and fits the geometric tail of their size distribution. That is a measurable diagnostic for the \emph{geometric form} PWTS assumes, but it is only a non-rigorous proxy for the pre-logical polymer activity $\eta(R)$. This activity is a resummed norm over fault-path contributions, not a fired-detector frequency, and the diagnostic does not provide a certified bound on it. Confirming a geometric tail below one does not establish the stronger
Koteck\'y--Preiss smallness $ea<1$ that the closed bound requires. At the simulated rates that smallness is not met, so
the accessible-rate evidence for privacy is the direct distinguishing-advantage test, with the closed bound holding in
the deeper sub-threshold regime (Supplementary Note~44.2).

\FloatBarrier
\section*{Data availability}
No external or third-party datasets were used. Two kinds of data underlie the figures, and both are in the public
archive at \url{https://github.com/Mercury0828/qec-transcript-privacy-artifact}, with release v1.0.1 archived
at \archivedoi{} so that the citation resolves to that fixed snapshot rather than to a moving branch. The simulation
results are one JSON per
figure panel, recording per-run outcomes, random seeds and tool versions; the analysis plan, circuit families and
result manifests for the original grouped sweep were fixed before any quantum-processor time was used. The
randomised re-acquisition and the fixed-code exposure sweeps are follow-up experiments, motivated by the
acquisition-order analysis of the first sweep and by the absence of an exponent measurement in it; their circuit
families, shot allocations and analysis procedures were frozen before the corresponding follow-up jobs were
submitted, and the frozen plans are in the repository beside the results. We describe the two stages separately
rather than presenting the whole programme as pre-specified. The hardware results are the raw per-shot registers from
\texttt{ibm\_cleveland} (one compressed \texttt{.npz} per circuit) together with a metadata record per session giving
the job identifiers, the physical qubit lines, the transpiled depths, the inter-shot delay, and the device calibration
snapshot read at submission time.

\section*{Code availability}
The code that generates and analyses those results, namely the Stim circuit construction, minimum-weight perfect-matching
decoding, the adversarial-classifier and exact statevector trajectory simulators, the cluster-tail calibration, and the
figure scripts, is available in the same archive at
\url{https://github.com/Mercury0828/qec-transcript-privacy-artifact}.

\begin{sloppypar}
Five scripts exist specifically so that a reader need not take the constants on trust.
\texttt{sim/hw/coset\_distances.py} computes $d_X$, $d_Y$ and $d_Z$ for every code family used here by exact
enumeration of the three logical cosets, with no heuristic search; \texttt{sim/hw/damping\_order.py} builds the exact
transcript law under amplitude damping by full density-matrix evolution with every outcome branch enumerated, and
measures the order in $\gamma$ without sampling, with and without the environment jump record;
\texttt{sim/hw/oaqec\_check.py} evaluates the operator-algebra correctability condition for the logical
$\Zbar$-algebra; \texttt{sim/hw/threeway\_tv.py} recomputes the hardware bound under a train/validate/test split
with exact one-sided Clopper--Pearson confidence bounds; and \texttt{sim/hw/audit\_si\_constants.py} re-derives
every Koteck\'y--Preiss side condition, threshold, tail sum, measured slope, certified bound and certificate
number quoted in the text, and fails if any of them does not reproduce. It reports $170/170$ checks passing. Supplementary Note~11.12 states which links in the argument these cover,
which are cited results used as stated, and which remain ordinary mathematics for a reader to check.
\end{sloppypar}

\section*{Author contributions}
J.S. and H.Z. conceived the study and developed the theoretical framework. J.S. carried out the proofs, implemented the
simulations, and drafted the manuscript. H.Z. contributed to the privacy formulation, supervised the project, obtained
and administered the quantum-processor allocation, and revised the manuscript. Both authors discussed the results and
approved the final version.

\section*{Acknowledgments}
% If the Miami University Quantum Program Coordinator (CECHelp@MiamiOH.edu) supplies required acknowledgement
% wording for the Cleveland Clinic allocation, substitute it verbatim for the first sentence below.
% See sim/hw/CCQC_SETUP.md.
We acknowledge the use of IBM Quantum services for this work. Quantum-processor time on \texttt{ibm\_cleveland} was
provided through the Cleveland Clinic IBM Quantum System One allocation administered by Miami University and accessed
under the affiliation of H.Z. The views expressed are those of the authors and do not reflect the official policy or
position of IBM or the IBM Quantum team. The authors received no specific grant from funding agencies in the public,
commercial, or not-for-profit sectors.

\section*{Competing interests}
The authors declare no competing interests.

\bibliographystyle{unsrtnat}
\bibliography{refs}

\end{document}